\documentclass[11pt]{article}

\usepackage{amssymb,amsthm,amsmath}
\usepackage{nicefrac}
\makeatletter
 \setlength{\parindent}{0pt}
 \addtolength{\partopsep}{-2mm}
 \setlength{\parskip}{5pt plus 1pt}
 \addtolength{\abovedisplayskip}{-3mm}
 \addtolength{\textheight}{35pt}
\makeatother

\usepackage[usenames,dvipsnames]{xcolor}
\usepackage[colorlinks,citecolor=blue,linkcolor=BrickRed]{hyperref}\usepackage[colorlinks,citecolor=blue,linkcolor=BrickRed]{hyperref}
\usepackage{thmtools,thm-restate}
\usepackage{comment}
\usepackage{cleveref}
\usepackage{fullpage}
\usepackage{graphicx}

\newtheorem{theorem}{Theorem}[section]
\newtheorem{corollary}[theorem]{Corollary}

\newtheorem{lemma}[theorem]{Lemma}
\newtheorem{observation}[theorem]{Observation}
\newtheorem{definition}[theorem]{Definition}

\newtheorem{remark}[theorem]{Remark}
\newtheorem{claim}[theorem]{Claim}

\newtheorem*{claim*}{Claim}

\newcommand{\poly}{\ensuremath{\mathsf{poly}}}
\newcommand{\E}{\mathbb{E}}
\newcommand{\Var}{\mathsf{Var}}
\newcommand{\R}{\mathbb{R}}
\newcommand{\C}{\mathbb{C}}
\newcommand{\Z}{\mathbb{Z}}

\newcommand{\ind}{\mathbf{1}}

\newcommand{\vol}{\mathsf{vol}}

\newcommand{\disc}{\mathsf{disc}}

\newcommand{\RHS}{\mathsf{RHS}}

\newcommand{\ignore}[1]{{}}
\newcommand{\eps}{\epsilon}

\newcommand{\err}{\mathsf{err}}

\newcommand{\SO}{\mathsf{SO}}
\newcommand{\HK}{\mathsf{HK}}

\newcommand{\D}{\mathcal{D}}

\newcommand{\nnz}{\mathsf{nnz}}

\newcommand{\SubgVecBal}{\mathsf{BalSubgDisc}}
\newcommand{\SubgSparsification}{\mathsf{SubgTransference}}

\newcommand{\bfk}{\mathbf{k}}

\newcommand{\bfz}{\mathbf{z}}
\newcommand{\bfv}{\mathbf{v}}
\newcommand{\bfx}{\mathbf{x}}
\newcommand{\bfu}{\mathbf{u}}
\newcommand{\bfd}{\mathbf{d}}

\newcommand{\bfs}{\mathbf{s}}
\newcommand{\bfj}{\mathbf{j}}

\allowdisplaybreaks

{\hspace*{\fill}$\Box$\par}

\title{Quasi-Monte Carlo Beyond Hardy-Krause}

\author{Nikhil Bansal\thanks{University of Michigan, Ann Arbor, MI, USA. \texttt{bansal@gmail.com}. Supported in part by  NWO VICI grant 639.023.812 and  NSF award CCF-2327011.} 
 \and 
Haotian Jiang\thanks{University of Chicago, Chicago, IL, USA. \texttt{jhtdavid96@gmail.com}.}}


\date{}

\begin{document}

\begin{titlepage}
  
\maketitle
 \begin{abstract}
 \thispagestyle{empty}
 We examine the problem of numerically estimating the integral of a function $f$. The classical approaches to this problem are Monte Carlo (MC)  and quasi-Monte Carlo (QMC) methods. MC methods use random samples to evaluate $f$ and have error $O(\sigma(f)/\sqrt{n})$, where $\sigma(f)$ is the standard deviation of $f$. QMC methods are based on evaluating $f$ at explicit point sets with low discrepancy, and as given by the classical Koksma-Hlawka inequality, they have error $\widetilde{O}(\sigma_{\HK}(f)/n)$, where $\sigma_{\HK}(f)$ is the variation of $f$ in the sense of Hardy and Krause. These two methods have distinctive advantages and shortcomings, and a fundamental question is to find a method that combines the advantages of both. 

\smallskip

In this work, we give a simple randomized algorithm that produces QMC point sets with the following desirable features:
\begin{enumerate}
    \item It achieves substantially better error than given by the classical Koksma-Hlawka inequality. In particular, it has error $\widetilde{O}(\sigma_{\SO}(f)/n)$, where $\sigma_{\SO}(f)$ is a new measure of variation that we introduce, which is substantially smaller than the Hardy-Krause variation.     
    \item The algorithm only requires random samples from the underlying distribution, which makes it as flexible as MC.
    \item It automatically achieves the best of both MC and QMC (and the above improvement over Hardy-Krause variation and Koksma-Hlawka inequality) in an optimal way.  
    \item The algorithm is extremely efficient, with an amortized $\widetilde{O}(1)$ runtime per sample. 
\end{enumerate}
Our method is based on the classical transference principle in geometric discrepancy, combined with recent algorithmic innovations in combinatorial discrepancy \cite{bdgl18,ALS21,HSSZ24}, that besides producing low-discrepancy colorings, also guarantee certain subgaussian properties. This allows us to bypass several limitations of previous works in bridging the gap between MC and QMC methods and go beyond the Hardy-Krause variation. 

\smallskip

In particular, we show how to leverage subgaussianity and other structural properties of the discrepancy problem arising from the transference principle to obtain certain careful cancellations in the exact expression for the numerical integration error, which leads to our improved notion of variation. In contrast, these cancellations are completely lost in the Koksma-Hlawka inequalities and have not been exploited in previous works.

\end{abstract}

\end{titlepage}
{\hypersetup{linkcolor=BrickRed}
\tableofcontents
}
 \thispagestyle{empty}

\maketitle

 \clearpage
\setcounter{page}{1}
 \allowdisplaybreaks
\section{Introduction}
\label{sec:intro}
The problem of numerical integration is ubiquitous in science and engineering. Without loss of generality, this is equivalent to integrating a function $f$ over the unit cube,\footnote{For arbitrary domains and densities, one can apply standard transformations and subsume them in the integrand $f$.
See \cite{DP10,Owe13} and the references therein for more details.} i.e., computing $\overline{f} := \int_{[0,1]^d} f(\bfx) d \bfx$. 
Often, an exact computation is impossible as $f$ might be too complicated or may only be accessible via evaluations. 
A classical approach for estimating $\overline{f}$ is the Monte Carlo method (MC) -- pick $n$ i.i.d. uniformly random samples $\bfx^1,\cdots, \bfx^n \in [0,1]^d$ and output the estimate $\overline{f}(X) := \sum_j f(\bfx^j)/n$. 
As is well-known, the estimation error $\eps = |\overline{f}(X)-\overline{f}|$ scales as
\begin{equation}
    \label{eq:mc-error}
 \eps\approx \sigma(f)/\sqrt{n},
\end{equation} where $\sigma(f)$ is the standard deviation of $f$, i.e., we need $n \approx \sigma(f)^2/\eps^2$ for accuracy $\eps$.
As Monte Carlo methods only require random samples, they are powerful, flexible and widely applicable, and have been considered to be among the top 10 algorithms of the 20th century \cite{Dongarra2000}.



In practice, obtaining samples or even evaluating $f$ is often expensive -- e.g., a sample may be a random person from a population and $f$ the output of a medical treatment.
For this reason, techniques to improve the  tradeoff in \eqref{eq:mc-error} between $n$ and $\eps$ 
have been studied extensively.
Broadly, one approach is to reduce $\sigma(f)$ using variance reduction techniques such as importance sampling, stratified sampling and antithetic variables \cite{HH64, RC04, Liu01}. These methods usually require some information about $f$.
See the textbooks \cite{Gla04,Fis06} for more details. 

\smallskip
\noindent \textbf{Quasi-Monte Carlo Methods.}
Another approach, and our focus here, 
are quasi-Monte Carlo (QMC) methods, that target the convergence rate with $n$ and improve it to $\widetilde{O}_d(1/n)$.\footnote{Throughout the paper, $\widetilde{O}(\cdot)$ hides $O(\poly(\log n))$ factors; and $\widetilde{O}_d(\cdot)$ hides $O(\poly(\log^{O(d)} n))$ factors.} To achieve this, they abandon the flexibility of using random samples and instead  output $\overline{f}(X) = \sum_j f(\bfx^j)/n$ for a set of carefully-chosen \emph{deterministic} points $X=\{\bfx^1,\cdots,\bfx^n\}$. 

These methods can give large speedups for moderate dimensions, typically $d \leq 10$ \cite{Sloan1998, Dick2013}, and are used extensively in finance \cite{Gla04, Cools2015}, physics \cite{Binder2010, Dick2013}, graphics \cite{Keller2006, Dutre2006} and machine learning \cite{Gerber2015, Bergstra2012}. 
We describe these briefly next, but refer to the texts \cite{Lem09, Dick2013,Owe13} for details, and to \cite{Mat09} for a more theory-friendly introduction.

The core idea underlying QMC methods is an exact characterization of the numerical integration error for any set $X$ and any  function $f$ \cite{Kok42,Hla61,Zar68}. For illustrative purposes, let us consider 1-d (see \Cref{lem:Hlawka_Zaremba_formula} for general $d$). Here, the Hlawka-Zaremba formula\footnote{The formula \eqref{eq:1d-qmc-error} requires mild smoothness assumptions.} states that the error is exactly
\begin{equation}
\label{eq:1d-qmc-error}
     \eps = \frac{1}{n} \sum_{j=1}^n f(x^j) - \int_0^1 f(x) dx = \frac{1}{n} \int_0^1  D(x) f'(x) dx,
\end{equation} 
where $D(x) := nx - |[0,x] \cap X]|$ measures the ``continuous discrepancy" between the size of the interval $[0,x]$ and the number of points $x^1,\cdots,x^n$ contained in it. 

\smallskip
\noindent {\bf Koksma-Hlawka Inequalities.}
This directly gives several classical Koksma-Hlawka inequalities, 
that bound the error by the continuous discrepancy of $X$ and certain measures of variations of $f$.
E.g., in $1$-d, applying Cauchy-Schwartz to \eqref{eq:1d-qmc-error} gives the following $\ell_2$-Koksma-Hlawka inequality:\footnote{The $\ell_2$-Koksma-Hlawka inequality is also known as Zaremba's inequality.} 
\begin{equation} \label{eq:1d_ell2_koksma-hlawka}
\eps \leq \frac{1}{n} \Big(\int_0^1 D(x)^2 d x \Big)^{1/2} \cdot \Big(\int_0^1 |f'(x)|^2 d x \Big)^{1/2} = \frac{1}{n} \cdot D_2^*(X) \cdot \sigma_{\HK}(f) .
\end{equation}
Here, $D_2^*(X) := (\int_0^1 D(x)^2 d x)^{1/2}$ is the $\ell_2$-discrepancy for prefixes, and
$\sigma_{\HK}(f) := (\int_0^1 |f'(x)|^2 d x)^{1/2}$ is the $\ell_2$ variation of $f$ in the sense of Hardy and Krause (see \Cref{subsec:geo_disc_num_int} for general $d$). Similarly, one  has the $\ell_\infty/\ell_1$ version of the Koksma-Hlawka inequality
\begin{equation}
\label{eq:vhk}
\eps \leq (1/n) \cdot D^*(X) \cdot V_{\HK}(f),
\end{equation}
 where $D^*(X) := \max_x |D(x)|$ is the $\ell_\infty$-continuous discrepancy (or {\em star} discrepancy) and $V_{\HK}(f) := \int_0^1 |f'(x)| d x$ is the $\ell_1$-Hardy-Krause variation.\footnote{We consider both these versions later, but the $\ell_2$-version is cleaner to work with as the $\ell_2$-Hardy-Krause variation admits an explicit formula in terms of the Fourier coefficients of $f$, which allows for clean comparisons.}
 
 These Koksma-Hlawka inequalities are sharp \cite{WikiLowDisc,Mat09, KN12,Owe13} in the sense that for any point set $\{x_1,\ldots,x_n\}$ there is a function $f$ for which they hold with equality. This suggests that the only room for getting better bounds might be by improving the discrepancy terms.

Indeed, most of the works on QMC methods have been on designing low-discrepancy point sets $X$, 
and starting from the seminal work of van der Corput \cite{vdC35a,vdC35b}, numerous ingenious constructions have been found.
These constructions are often very delicate and subtle, 
with the best-known bounds being $\widetilde{O}_d(1)$ for all versions of continuous discrepancy. 
We discuss these in some more detail in \Cref{subsec:related_work}, but refer the reader to the  excellent textbooks \cite{Nie92, DP10, Mat09, Owe13} for these constructions and the fascinating mathematics and history behind them.

\smallskip
\noindent \textbf{Limitations of QMC.} Despite its tremendous success, however, ``QMC is not a panacea ..." \cite{WikiQMC} and it also has drawbacks and limitations. The most outstanding ones include: 
\begin{enumerate}
    \item \label{limitation:large_HK} The variation $\sigma_{\HK}(f)$ in \eqref{eq:1d_ell2_koksma-hlawka} can be much larger than $\sigma(f)$ in \eqref{eq:mc-error}, and hence QMC may perform much worse than MC in certain regimes.
    E.g., for $f(x)=\sin(kx)$, it is easily checked that $\sigma(f) = \Theta(1)$ while $\sigma_{\HK}(f) = \Theta(k)$ (as $f'= k \cos kx$). 
    Hence QMC is outperformed by MC when $k = \Omega(\sqrt{n})$. 
    \item \label{limitation:random_points} In many applications, one might only have access to random samples, and it may be  impossible to pick the specific points required by explicit QMC constructions.\footnote{Recall the example where samples are people. Here, choosing the person corresponding to a specific point $x$ is infeasible, and typically one can only sample  random subjects from a population.} Consequently, this significantly limits the flexibility and applicability of QMC. 

    \item Using the same deterministic point sets for different experiments might suffer from worst-case outcomes;  more problematically, they don't allow for error estimates statistically.
\end{enumerate}

There have been numerous attempts to address these limitations 
by considering 
 {\em Randomized QMC methods}, detailed in \cite{Lem09, Mat09, DP10, Owe13}. However, in contrast to the remarkable progress in constructing low-discrepancy QMC sets, the success here has been relatively limited. We discuss these works below in the context of our results and also in \Cref{subsec:related_work}.

\subsection{Our Contribution and Results}
We give a randomized QMC construction that, surprisingly, goes beyond the Hardy-Krause variation and achieves substantially lower integration error than given by the Koksma-Hlawka inequality. 

Our method also has various other desirable properties: (i) it is fast, (ii) it uses random samples and thus preserves the flexibility of MC, and (iii) it is never worse than MC, and in fact can have substantially lower error than both MC and QMC bounds.
This overcomes the limitations stated above, and even more. 
We now describe the algorithm and the  results.

\smallskip \noindent \textbf{The Algorithm.}
Our algorithm $\SubgSparsification$ starts with $n^2$ uniformly random samples 
$A_0 \subseteq [0,1]^d$, and partitions $A_0$ into $n$ sets $A_T^{(1)},\cdots,A_T^{(n)}$, each of size $n$. This is done by recursively applying an algorithm for combinatorial discrepancy (discussed in \Cref{subsec:subg_disc}).
A formal presentation of the algorithm is given in \Cref{sec:subg_sparsification_alg}.  

The overall algorithm runs in time $\widetilde{O}_d(n^2)$, and thus it requires only $\widetilde{O}_d(1)$ amortized time to produce each output point.
Each of these output sets $A_T^{(i)}$, which we denote by $A_T$  for short, 
satisfies various guarantees that we describe in \Cref{thm:main_high_dim,thm:best-of-both-1,thm:best-of-both-2} below.

\subsubsection{Beyond Hardy-Krause Variation}
\label{subsubsec:beyond_hardy-krause}
We introduce a new notion of variation of a function $f$ that we call the ``smoothed-out variation", denoted by $\sigma_{\mathsf{SO}}(f)$.  We describe this later below and also show how it is substantially better than the  Hardy-Krause variation $\sigma_{\HK}(f)$. For example, it satisfies $\sigma_{\SO}(f) \leq \sqrt{\sigma(f) \sigma_{\HK}
(f)}$.

Our first main result is that the
QMC point sets $A_T$ produced by our algorithm substantially improve upon the error given by the Koksma-Hlawka inequalities, and satisfy the following guarantee.

\begin{restatable}
{theorem}{IntegralSubgHighDim} \label{thm:main_high_dim}
For {\em every} function $f: [0,1]^d \rightarrow \R$, the integration error using points in $A_T$ satisfies
\[
\E[(\err(A_T,f))^2] \leq  \widetilde{O}_d(1) \cdot \Big(\frac{\sigma_{\mathsf{SO}}(f)^2}{n^2} \Big).
\]
That is, for any $f$ the typical error is $\widetilde{O}_d(\sigma_{\mathsf{SO}}(f)/n)$. 
\end{restatable}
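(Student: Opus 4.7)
The starting point is the high-dimensional Hlawka-Zaremba formula referenced as \Cref{lem:Hlawka_Zaremba_formula}, which writes $\err(A_T, f)$ exactly as a sum over nonempty $S \subseteq [d]$ of integrals $\frac{1}{n} \int D^S_{A_T}(\bfx) \, \partial^S f(\bfx) \, d\bfx$, where $D^S_{A_T}$ is the anchored-box discrepancy in the coordinates in $S$. Since $A_T$ is obtained by $T = O(\log n)$ recursive halvings of the initial uniformly random set $A_0$ of size $n^2$, I would decompose $D^S_{A_T}$ telescopically as a ``Monte Carlo'' term $D^S_{A_0}/n$ of typical size $O(1)$, plus a sum $\sum_{t < T} D^S_t$, where $D^S_t$ records the contribution of the $\pm 1$ coloring $\chi_t$ produced at level $t$.

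Plugging this decomposition into the Hlawka-Zaremba formula, squaring, and taking expectations is the central computation. The colorings $\chi_t$ are produced by the subgaussian combinatorial discrepancy algorithms of \cite{bdgl18, ALS21, HSSZ24}, so each is \emph{conditionally} mean-zero and subgaussian against arbitrary linear tests, given the randomness of earlier levels. This buys two structural cancellations: cross terms between \emph{different} levels vanish in expectation by the martingale property, and within a single level the subgaussian guarantee controls the covariance of pairs of anchored-box tests by the inner product of their indicator vectors, up to polylogarithmic factors. The Monte Carlo term contributes only its variance, which is again dominated by such inner products.

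After these cancellations, $\E[\err(A_T,f)^2]$ reduces to a sum, over levels $t$ and subsets $S$, of bilinear forms in $\partial^S f$ against positive semidefinite kernels $K_t^S$ that encode the scale-$t$ structure of the recursion: at level $t$ the blocks have roughly $n/2^t$ points, so $K_t^S$ effectively averages $\partial^S f$ over windows of width $2^t/n$ in the coordinates of $S$. The smoothed-out variation $\sigma_{\SO}(f)^2$ is set up precisely to upper bound this total $\sum_{t,S} \langle \partial^S f, K_t^S \partial^S f \rangle$, and collecting terms produces the claimed bound, the $\widetilde{O}_d(1)$ factor coming from the $O(\log n)$ levels and the $2^d-1$ subsets.

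The main obstacle is executing the subgaussian cancellations cleanly: (i) verifying that conditional mean-zeroness really kills cross terms despite the adaptive randomness in the recursion, which should follow from treating the recursion as a martingale indexed by the levels, and (ii) ensuring that the within-level covariance bound produces exactly the smoothing kernel built into $\sigma_{\SO}$, rather than a weaker bound that degenerates back to the Hardy-Krause variation. The second point is where the actual improvement over the Koksma-Hlawka inequality is gained, and it requires a careful application of the transference principle to recast anchored-box linear tests as structured combinatorial ones on which the subgaussian discrepancy algorithms are tight.
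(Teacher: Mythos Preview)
Your outline has the right skeleton---Hlawka-Zaremba, telescoping across levels, martingale orthogonality between levels---but there is a genuine gap at the within-level step, and it is exactly the step that distinguishes $\sigma_{\SO}$ from $\sigma_{\HK}$.

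You propose to use subgaussianity to ``control the covariance of pairs of anchored-box tests by the inner product of their indicator vectors.'' But the subgaussian guarantee in \Cref{lem:subg_dyadic} is for the discrepancy vector $\bfd^{t,\D}$ of \emph{dyadic} boxes, not for the discrepancy $\bfd^{t,\mathcal{C}}$ of anchored corners. The paper shows explicitly (\Cref{sec:example_subg_fail}) that $\bfd^{t,\mathcal{C}}$ can have subgaussian constant $\Omega(n)$ even when $\bfd^{t,\D}$ is $\widetilde{O}(1)$-subgaussian, and that for $d\ge 2$ this failure is unavoidable for any coloring distribution. If instead you use only the $O(\log^{d+1} n)$-subgaussianity of the raw coloring $\bfx^t$, then (in 1-d) $\int \disc_t(z)f'(z)\,dz = \sum_j \bfx^t_j\,(f(1)-f(\bfz^j))$, and subgaussianity gives a bound of order $n_t\,\sigma(f)^2$, which is precisely the MC bound of \Cref{lem:better_than_MC}---not $\sigma_{\SO}$. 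So your covariance-kernel picture, as stated, recovers at best the best-of-both-worlds \Cref{thm:best-of-both-1}, not \Cref{thm:main_high_dim}.

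The actual mechanism in the paper is different and is not captured by your sketch. One writes $\bfd^{t,\mathcal{C}} = P_h\,\bfd^{t,\D}$ via the dyadic decomposition matrix, so the within-level error becomes $\langle \bfd^{t,\D},\, P_h^\top \bfu^{f_s}\rangle$, and subgaussianity of $\bfd^{t,\D}$ reduces everything to bounding $\|P_h^\top \bfu^{f_s}\|_2^2$. The key structural fact (\Cref{lem:aug_trans_mat}, \Cref{claim:coord_Pu_1D}) is that the ones in each column of $P_h$ themselves form a dyadic interval, so the entries of $P_h^\top \bfu^{f_s}$ are exactly $f_s(r_I)-f_s(\ell_I)$ over dyadic $I$. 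The random shift $\bfs$ then makes distinct Fourier modes orthogonal (\Cref{lem:orthogonality_fourier}), and a direct computation (\Cref{lem:ell_2_bound_1D}) shows $\sum_I |e_k(r_I)-e_k(\ell_I)|^2 \lesssim |k|$ rather than $k^2$. This is where the $|\bfk_j|$ (instead of $|\bfk_j|^2$) in $\sigma_{\SO}$ comes from, and none of it appears in your proposal. Your ``windows of width $2^t/n$'' intuition is also off: the dyadic granularity $h$ is fixed across all levels $t$, and the gain comes from the column structure of $P_h$, not from any level-dependent smoothing.
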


This result is rather surprising, as Koksma-Hlawka inequalities are tight in general. Even allowing randomization in the point set, there seems no obvious room for improvement. This was precisely the reason that most previous works on QMC methods, 
including those on randomized ones (discussed later in \Cref{subsubsec:best-of-both-results}),
inherently start off with the Koksma-Hlawka inequalities and Hardy-Krause variations, and focus on constructing low-discrepancy QMC sets.\footnote{We are only aware of the works by Owen \cite{Owe97b,Owe08} which are not based on Koksma-Hlawka inequalities. However, the focus there is different, and the notion of variation used there can be arbitrarily worse than $\sigma_{\HK}$.}

Our key idea to go beyond this inherent bottleneck is to work directly with the Hlawka-Zaremba formula \eqref{eq:1d-qmc-error}, and exploit  some subtle cancellations inside the integral. Observe that in contrast, the Koksma-Hlawka inequality \eqref{eq:Koksma_Hlawka_ineq} loses any possibility of exploiting such cancellations.

To do this, we carefully combine subgaussianity properties of the low combinatorial discrepancy colorings with the combinatorial structures of the transformations arising in the transference principle, and use additional analytic observations. We give a high level overview in Section \ref{sec:overview}.

\smallskip 
\noindent \textbf{The Variation $\sigma_{\mathsf{SO}}(f)$.} 
We now define the variation $\sigma_{\mathsf{SO}}(f)$.
A convenient way is to define it  using Fourier analysis, which allows for clean comparisons with $\sigma(f)$ and $\sigma_{\HK}(f)$.\footnote{Later we also define it differently without Fourier analysis, see \Cref{remark:general_form}. This allows for comparison with $V_{\HK}$.}
Let \[f(\bfz) = \sum_{\bfk \in \Z^d} \widehat{f}(\bfk) \cdot \exp{(2 \pi i \langle \bfk, \bfz\rangle)}.\]
be the Fourier series expansion of $f$.
By a standard calculation (see Section \ref{subsec:Fourier_analysis}), $\sigma(f)$ and $\sigma_{\HK}(f)$  can be written as
\begin{align} 
\sigma(f)^2 & =  \sum_{\bfk \in \mathbb{Z}^d: \bfk\neq \mathbf{0}} \big|\widehat{f}(\bfk) \big|^2
\quad \text{and}, \label{eq:MC_formula_intro}\\
 \sigma_{\HK}(f)^2 & =  \Theta_d(1) \sum_{\bfk \in \Z^d: \bfk\neq \mathbf{0}} 
\big|\widehat{f}(\bfk) \big|^2 \cdot \prod_{j \in [d]} \max(1,|\bfk_j|^2).
\label{eq:MC_HK_formula_intro}
\end{align}
We define the  smoothed-out variation $\sigma_{\SO}(f)$ as
\begin{align} \label{eq:sigma_SO_formula_intro}
\sigma_{\SO}(f)^2 := 
\sum_{\bfk\in \Z^d: \bfk\neq \mathbf{0}} 
 \big|\widehat{f}(\bfk) \big|^2 \cdot \prod_{j \in [d]} \max(1,|\bfk_j|).
\end{align}
Notice that the $|\bfk_j|^2$ terms in the expression for $\sigma_{\HK}$ in \eqref{eq:MC_HK_formula_intro} are replaced by $|\bfk_j|$.

For better intuition for these, consider the $1$-d  case of $f(x) = \sum_{k \in \Z} \widehat{f}(k) \cdot \exp{(2 \pi i k x)}$. Here, 
\[\sigma(f)^2 =  \sum_{k \in \Z \setminus \{0\}} \big| \widehat{f}(k) \big|^2, \text{ and } \quad  \sigma_{\HK}(f)^2 = 
\int_0^1 |f'(x)|^2 d x = 
\sum_{k \in \Z \setminus \{0\}} 4\pi^2 k^2 \cdot \big| \widehat{f}(k) \big|^2,\]
while
\[\sigma_{\SO}(f)^2  = 
\sum_{k \in \Z \setminus \{0\}} k \cdot \big| \widehat{f}(k) \big|^2 .\]
Clearly, these formulas imply that $\sigma(f) \leq \sigma_{\SO}(f) \leq \sigma_{\HK}(f)$, with larger gaps for functions with higher frequencies (i.e., larger $k$). Further, by the Cauchy-Schwartz inequality, we have  \[\sigma_{\SO}(f)^2 \leq \sigma(f) \cdot \sigma_{\HK}(f).\] 
Thus, in a sense, the new measure $\sigma_{\SO}(f)$ bridges the standard deviation in MC and the Hardy-Krause variation in QMC by smoothing out the high-frequency terms (hence our name for it).

\begin{remark}[Smoothness Assumptions]
\label{remark:smoothness}
Requiring that $f$ has a Fourier series uses some (mild) smoothness assumptions.\footnote{It also requires $1$-periodicity, but this is without loss of generality via standard folding.} These can be removed by exploiting that our method uses random samples and that the Fourier series of $f$ converges to $f$ in a $\ell_2$ sense. We ignore these technicalities as they are standard and do not give any new insight.
\end{remark}

\begin{remark}[General Form of $\sigma_{\SO}$]
\label{remark:general_form} Our proof of Theorem \ref{thm:main_high_dim} yields another (more general) form of $\sigma_{\SO}$ that does not involve Fourier series. Using this, we can give an analogous improvement to the $\ell_\infty/\ell_1$-Koksma-Hlawka inequality. We defer this discussion to \Cref{subsec:general_def_sigma_SO}, as this is harder to describe and compare due to lack of an explicit formula for the $\ell_1$-Hardy-Krause variation $V_{\HK}(f)$.
\end{remark}

Concretely, \Cref{thm:main_high_dim} already gives substantial improvements over the usual QMC bounds. 

\smallskip
\noindent{\bf Example.}
Consider the example of $f(x) = \sin (kx)$ from before, where the MC error is $1/\sqrt{n}$ and the standard QMC error is $\approx k/n$. In contrast, the error in Theorem \ref{thm:main_high_dim} scales as $\sqrt{k}/n$, which is significantly better than the QMC bound. In particular, standard QMC beats MC only for $k\leq \sqrt{n}$, while our approach beats MC for $k \leq n$.

\subsubsection{Achieving Best of Both Worlds}
\label{subsubsec:best-of-both-results}
Our construction has another useful property, which we call best of both worlds, in the sense that it combines the benefits of MC and QMC methods in the optimal way.

We first state a weaker variant that combines MC and standard QMC bounds. 
Consider the sets $A_T$ produced by our algorithm. We have the following guarantee.
\begin{restatable}[Best of MC and Hardy-Krause]{theorem}{BestofBothWorlds} \label{thm:best-of-both-1}
For any function $f: [0,1]^d \rightarrow \R$, its integration error using $A_T$ is unbiased, i.e.,~$\E[\err(A_T,f)] = 0$, and satisfies that 
\[
\E[(\err(A_T,f))^2] \leq \widetilde{O}_d(1) \cdot \min_{f = g + h} \Big( \frac{\sigma(g)^2}{n} + \frac{V_{\HK}(h)^2}{n^2} \Big) ,
\]
where the infimum is over all decompositions of $f$ as the sum of two functions $g,h: [0,1]^d \rightarrow \R$. 
\end{restatable}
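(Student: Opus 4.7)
The plan is to fix an arbitrary decomposition $f = g + h$, bound the two resulting error contributions separately, and then take the infimum over decompositions. Since $\err(A_T, f) = \err(A_T, g) + \err(A_T, h)$, we have $\E[\err(A_T, f)^2] \leq 2\E[\err(A_T, g)^2] + 2\E[\err(A_T, h)^2]$, so it suffices to establish a Monte-Carlo-type bound $\widetilde O_d(\sigma(g)^2/n)$ on the first term and a classical Koksma-Hlawka bound $\widetilde O_d(V_{\HK}(h)^2/n^2)$ on the second.

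For the $g$-term I would use a telescoping argument across the $T = \log n$ rounds of $\SubgTransference$. Writing $A_0 \supseteq A_1 \supseteq \cdots \supseteq A_T$ with $|A_t| = n^2/2^t$ and letting $\chi_t \in \{\pm 1\}^{A_{t-1}}$ be the symmetric coloring with $A_t = \{x \in A_{t-1} : \chi_t(x) = +1\}$, one has
\[
\err(A_T, g) \;=\; \Bigl(\tfrac{1}{|A_0|}\sum_{x \in A_0} g(x) - \textstyle\int g\Bigr) \;+\; \sum_{t=1}^T \tfrac{1}{|A_{t-1}|}\sum_{x \in A_{t-1}} \chi_t(x)\, g(x).
\]
The first piece is the plain MC error on $n^2$ i.i.d.\ uniform samples, with variance $\sigma(g)^2/n^2$. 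For each round $t$, the subgaussian discrepancy guarantee of \cite{bdgl18,ALS21,HSSZ24} says that, conditionally on $A_{t-1}$, the linear statistic $\sum_{x \in A_{t-1}} \chi_t(x)g(x)$ is $O(1)$-subgaussian with variance proxy $\sum_{x \in A_{t-1}} g(x)^2$; since $\sum \chi_t = 0$ we may first center $g$ by $\int g$ without changing this sum, so taking expectation over $A_0$ gives a squared round-$t$ error of order $\sigma(g)^2/|A_{t-1}|$. Summing geometrically in $|A_{t-1}| = n^2/2^{t-1}$ yields $\sum_t \E[X_t^2] = O(\sigma(g)^2/n)$, and a Cauchy-Schwarz step over the $T = \log n$ rounds gives $\E[\err(A_T, g)^2] \leq \widetilde O_d(\sigma(g)^2/n)$.

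For the $h$-term I would apply the $d$-dimensional $\ell_\infty/\ell_1$ Koksma-Hlawka inequality deterministically to the set $A_T$: $|\err(A_T, h)| \leq \tfrac{1}{n} D^*(A_T) V_{\HK}(h)$. The $\SubgTransference$ algorithm, via the transference principle applied to the set system of axis-aligned boxes, guarantees $\E[D^*(A_T)^2] \leq \widetilde O_d(1)$, which gives $\E[\err(A_T, h)^2] \leq \widetilde O_d(V_{\HK}(h)^2/n^2)$. Unbiasedness follows because each coloring $\chi_t$ is sign-symmetric, so each element of $A_0$ survives to $A_T$ with probability $2^{-T} = 1/n$; since $A_0$ is i.i.d.\ uniform this makes every point of $A_T$ marginally uniform on $[0,1]^d$, and hence $\E[\err(A_T, f)] = 0$.

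The main technical obstacle is the MC-type bound on $g$: one must verify that the subgaussian guarantee scales with the $L^2$-norm $\sigma(g)$ rather than a cruder $\|g\|_\infty$-scale (this is where the centering of $g$ against the mean-zero coloring is essential), and that summing across the $T = \log n$ recursive rounds loses only polylogarithmic factors that are absorbed in $\widetilde O_d(1)$. Once these pieces are in place, combining the two bounds and infimizing over decompositions $f = g + h$ yields the claimed best-of-both-worlds inequality.
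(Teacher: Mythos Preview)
Your proposal is correct and follows essentially the same route as the paper: split $f=g+h$, handle $g$ via the telescoping identity together with the subgaussianity of the coloring vector $\chi_t$ (which the algorithm ensures by stacking standard basis vectors onto the dyadic-box incidence vectors), and handle $h$ via Koksma--Hlawka plus the $\widetilde O_d(1)$ star discrepancy of $A_T$. The only cosmetic differences are that the paper uses the martingale orthogonality of the increments directly (so no Cauchy--Schwarz over the $T$ rounds is needed), and the subgaussian parameter is $\widetilde O_d(1)$ rather than $O(1)$ because the stacked vectors have norm $\Theta(\log^{d/2} n)$---both are absorbed into the $\widetilde O_d(1)$ as you note.
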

Consider the following instructive example which shows that this error can be much better than both MC and QMC.

\smallskip
\noindent {\bf Example.}  Consider  $f(x) = \sin (x) + k^{-1/2} \sin (kx)$. Then $\sigma(f) \approx 1$, and as $f'(x) = \cos(x) + \sqrt{k} \cos(kx)$, we have $V_{\HK}(f) \approx \sigma_{\HK}(f) \approx \sqrt{k}$. 
So the MC and QMC errors scale as $1/\sqrt{n}$ and $\sqrt{k}/n$ respectively. In particular for $k=n$, both are about $1/\sqrt{n}$. 

Now let us decompose $f=g+h$ with $g= k^{-1/2} \sin (kx)$ and $h=\sin (x)$. Then we have $\sigma(g) \approx O(1/\sqrt{k})$ and $V_{\HK}(h) \approx \sigma_{HK}(h) \approx O(1)$. Thus, by  \Cref{thm:best-of-both-1}, the  error $\approx (1/\sqrt{kn} + 1/n)$. In  particular for $k = n$, \Cref{thm:best-of-both-1} has substantially lower error $\approx 1/n$, even though both MC and QMC had error $\approx 1/\sqrt{n}$, as we saw above. 

\smallskip 
\noindent{\bf Decomposition Obliviousness.}
In practice, it is often recommended to use QMC together with MC methods \cite{Lem09}. In fact, one suggested
heuristic is exactly to first decompose $f= g+h$ suitably by hand, and use MC for $g$ and QMC for $h$. 
But this requires understanding the structure of $f$, which might not be possible in many scenarios. In practice, combining QMC and MC to get large speedups is often an art.

In contrast, notice that our method in Theorem \ref{thm:best-of-both-1} requires {\em no} explicit knowledge of the decomposition $f=g+h$ at all. That is, simply evaluating $f$ on the points $A_T$ automatically achieves the guarantee using the best possible decomposition of $f$.

\smallskip 
\noindent{\bf Previous Works.} There has been huge interest in designing Randomized QMC methods that combine the benefits of both MC and QMC, e.g. they enable statistical estimation of error by repeated experiments and protect against  worst-case outcomes. See \cref{subsec:related_work} 
and the excellent textbooks \cite{Lem09,Mat09,Owe13} for details. 
In our context, the bound in \Cref{thm:best-of-both-1} can also be achieved using variants of Owen's scrambled nets \cite{Owe97a,Mat98c}.   
However, these constructions are rather delicate and involved -- based on carefully applying permutations to digits of explicit initial point sets, known as digital nets \cite{Sob67,Nie92}, and require sophisticated analyses.

In contrast, our algorithm is very simple and as it uses random samples, it does not suffer from the limitations arising from choosing explicit points. Moreover, our analysis is also very clean and  directly using the stronger bound in \Cref{thm:main_high_dim} (instead of Koksma-Hlawka) gives the following improved result.

\begin{restatable}[Best of MC and $\sigma_{\SO}$]{theorem}{BestofBothWorldsSO} \label{thm:best-of-both-2}
For any function $f: [0,1]^d \rightarrow \R$, its integration error using $A_T$ is unbiased, i.e.,~  $\E[\err(A_T,f)] = 0$,  and satisfies that 
\[
\E[(\err(A_T,f))^2] \leq \widetilde{O}_d(1) \cdot \min_{f = g + h} \Big( \frac{\sigma(g)^2}{n} + \frac{\sigma_{\SO}(h)^2}{n^2} \Big) ,
\]
where the infimum is over all ways of writing $f$ as the sum of two functions $g,h: [0,1]^d \rightarrow \R$. 
\end{restatable}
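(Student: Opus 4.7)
The plan is to reduce \Cref{thm:best-of-both-2} to \Cref{thm:main_high_dim} combined with the MC-style second-moment bound on $A_T$ that is already implicit in the proof of \Cref{thm:best-of-both-1}. Concretely, fix any decomposition $f = g + h$. By linearity of the integration error, $\err(A_T, f) = \err(A_T, g) + \err(A_T, h)$, so $(a+b)^2 \le 2a^2 + 2b^2$ gives
\[
\E[\err(A_T,f)^2] \le 2\,\E[\err(A_T,g)^2] + 2\,\E[\err(A_T,h)^2].
\]
I would handle the two terms separately and then take the infimum over all decompositions $f=g+h$ at the end.

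For the $h$ term, I would apply \Cref{thm:main_high_dim} directly to $h$ to get $\E[\err(A_T,h)^2] \le \widetilde O_d(1)\cdot \sigma_{\SO}(h)^2/n^2$. This is precisely the single substitution that upgrades \Cref{thm:best-of-both-1} to \Cref{thm:best-of-both-2}: in the earlier theorem the ``smooth'' component was controlled via the Koksma--Hlawka inequality, yielding the weaker $V_{\HK}(h)^2/n^2$ bound, whereas here we plug in the stronger error bound from \Cref{thm:main_high_dim}. For the $g$ term I would invoke the MC-style bound $\E[\err(A_T,g)^2] \le \widetilde O_d(\sigma(g)^2/n)$, which was already established in the course of proving \Cref{thm:best-of-both-1}. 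The key structural fact behind this bound is that $A_T$ is produced from $n^2$ i.i.d.\ uniform samples $A_0$ by a recursive randomized partitioning that is exchangeable in the $n^2$ input points; this ensures uniform marginals for every point of $A_T$ and only mild pairwise correlations, so that a direct second-moment expansion mimics the variance of $n$ samples drawn without replacement from a population of size $n^2$. Unbiasedness follows from the same marginal-uniformity: $\E[f(x)] = \overline f$ for each $x \in A_T$, hence $\E[\err(A_T,f)] = 0$.

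In summary, the proof itself is essentially a one-liner once the two ingredients are in place: decompose $f = g+h$, use the triangle inequality for squared error, apply \Cref{thm:main_high_dim} to $h$, apply the MC-style bound to $g$, and minimize over decompositions. The only place any real work is hidden is in the MC-style bound for $g$, which requires verifying that the dependence structure of the $n$ points of $A_T$ is benign enough for a variance calculation; but this is not new to \Cref{thm:best-of-both-2}, since it is the same bound needed for \Cref{thm:best-of-both-1}. I therefore do not anticipate a genuine obstacle beyond a careful bookkeeping of the symmetry and exchangeability properties of the $\SubgSparsification$ algorithm, both of which are established earlier in the paper.
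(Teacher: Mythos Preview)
Your proposal is correct and matches the paper's approach exactly: decompose $f=g+h$, apply \Cref{thm:main_high_dim} to $h$ (in place of the Koksma--Hlawka bound used in \Cref{thm:best-of-both-1}), apply the MC-type bound \Cref{lem:better_than_MC} to $g$, and take the infimum. The only minor remark is that your heuristic for the MC bound (exchangeability and without-replacement sampling) is not quite how the paper proves \Cref{lem:better_than_MC}---it uses the $\widetilde O_d(1)$-subgaussianity of the coloring $\bfx^t$ directly---but since you are citing that lemma rather than reproving it, this does not affect the argument.
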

Using the Fourier expressions for $\sigma(f), \sigma_{\HK}(f)$ and $\sigma_{\SO}(f)$ in \eqref{eq:MC_formula_intro},\eqref{eq:MC_HK_formula_intro}  and \eqref{eq:sigma_SO_formula_intro},  the optimal decomposition of $f$ into $g$ and $h$ becomes immediate (the high frequency terms go to $g$ and  the rest to $h$). We only describe this for Theorem \ref{thm:best-of-both-2}, as it is strictly stronger than Theorem \ref{thm:best-of-both-1}. In particular, Theorem \ref{thm:best-of-both-2} implies the following explicit bound in $1$-d,
\[\E[(\err(A_T,f))^2] \leq \widetilde{O}_d(1) \cdot \frac{1}{n^2} \sum_{k \neq 0} \min\, (|k|,n) \cdot 
\widehat{f}(k)^2.
\]
and more generally for higher dimensions, we have
\[
\E[(\err(A_T,f))^2] \leq \widetilde{O}_d(1) \cdot \frac{1}{n^2} \sum_{\bfk \neq \bf0}  \min \Big(\prod_{j \in [d]} \max(1,|\bfk_j|), n \Big) \cdot \widehat{f}(\bfk)^2.
\]

\subsection{Overview of Ideas}
\label{sec:overview}
We give a high-level overview of the ideas, deferring the technical details for later.
Our starting point is the well-known {\em transference principle} in geometric discrepancy theory, that produces point sets with low continuous discrepancy using tools from {\em combinatorial} discrepancy. We recall this briefly, and refer to \cite{Mat09, AistleitnerBN17} for details.

The idea is to start with a set $A_0 \subseteq [0,1]^d$ of $n^2$ uniformly random points in $[0,1]^d$. Consider the set system with these points as elements, and sets corresponding to all non-empty dyadic rectangles $J = I_1\times \cdots \times I_d$, where each $I_i$ is a dyadic interval of length an integer multiple of $2^{-h}$, where $h=O(\log n)$. 
Using standard arguments (see \Cref{subsec:subg_disc}), this system always has a balanced $\pm 1$ coloring (with equal $+1$ and $-1$) with combinatorial discrepancy $\widetilde{O}_d(1)$. 
Applying this recursively for $T= \log n$ steps gives a partition of $A_0$ into $n$ equal-sized sets, each of which has continuous discrepancy $\widetilde{O}_d(1)$ for each dyadic rectangle. Let us use $A_T$ to denote any such set.

Writing each corner\footnote{Where each $\bfz_i$ is an integer multiple of $2^{-h}$.}  $C_\bfz$ (see \Cref{subsec:geo_disc_num_int})  as a union of $\widetilde{O}_d(1)$ dyadic rectangles, 
this gives $D_2^*(A_T) = \widetilde{O}_d(1)$, which implies an integration error of $\widetilde{O}_d(\sigma_{\HK}(f) / n)$ by the Koksma-Hlawka inequalities.

\smallskip
\noindent{\bf Subgaussian Colorings and Best of Both Worlds.}
Recent progress in combinatorial discrepancy \cite{bdgl18,ALS21,HSSZ24} 
has led to extremely efficient algorithms where the output colorings not only have low discrepancy, but also sufficient {\em randomness}\footnote{This is captured by the notion of subgaussianity which is formally defined in \Cref{subsec:subg_disc}.} (see \Cref{subsec:subg_disc}). 
Using these in the transference principle leads to our main algorithm $\SubgSparsification$ (see \Cref{sec:subg_sparsification_alg}). 

Our first (simple) observation is that this can be used to obtain the best of both worlds results.
Roughly speaking, these properties imply that each $A^{}_T$ behaves like a random subset of $A_0$. As $A_0$ consists of randomly chosen points and satisfies MC bounds, the sets $A_T$  also satisfy this (in addition to having low discrepancy and satisfying QMC bounds).
We remark that similar ideas have also been used in the related but different contexts of designing randomized controlled trials \cite{HSSZ24} and coreset constructions \cite{PT20,CKW24, DM22}.

At first glimpse, a best of both worlds result seems to be as much as one could hope for. 
Indeed, such results are the main conclusions in the other contexts mentioned above. 
We now show how one can go much further, and obtain the bounds in Theorem \ref{thm:main_high_dim}.

\smallskip
\noindent \textbf{Beyond Hardy-Krause.}
The main idea to go beyond the Hardy-Krause variation (and substantially improve over the Koksma-Hlawka inequalities) is to exploit delicate cancellations in the Hlawka-Zaremba formula by combining subgaussian properties of
$\SubgSparsification$ with the combinatorial structures of the discrepancy problems arising in the transference principle.

For simplicity, let us focus on the 1-d case. 
Here, the  Hlawka-Zaremba formula  gives that
\begin{align} \label{eq:Koksma_Hlawka_techniques_2}
\eps \approx \int_0^1 \frac{\disc(x)}{n} \cdot f'(x) d x \approx \frac{1}{n^2} \sum_{j=1}^{n} \disc \big(\frac{j}{n} \big) \cdot f'\big(\frac{j}{n} \big) =: \frac{1}{n^2} \big\langle \bfd^{\mathcal{C}} , \mathbf{f'} \big\rangle ,
\end{align}
where $\disc(x)$ is the combinatorial discrepancy\footnote{Here we have replaced the continuous discrepancy in \eqref{eq:1d-qmc-error} by the combinatorial discrepancy, which is roughly the same in our context by the transference principle.} of the prefix interval $[0,x]$, and $\bfd^{\mathcal{C}}$ is the vector formed by the discrepancy of all prefix intervals of length $j/n$.  As a notational remark, we always use $\disc(\cdot)$ and $\bfd$ to indicate combinatorial discrepancy, and $D(\cdot)$ to denote continuous discrepancy.

As $|\eps| \approx \frac{1}{n^2} |\bfd^{\mathcal{C}} \cdot \mathbf{f'}|$, 
we can perhaps hope to exploit some cancellations in the inner product (notice that the Koksma-Hlawka inequalities give up completely on the possibility of leveraging cancellations --- as they use H\"older's inequalities to bound $\eps$ by $ \|\bfd^{\mathcal{C}}\|_\infty \cdot \|\mathbf{f'}\|_1$ or $ \|\bfd^{\mathcal{C}}\|_2 \cdot \|\mathbf{f'}\|_2$).

\smallskip
\noindent\textbf{A Promising First Idea.}
Let us consider this more closely. As the prefix intervals have bounded discrepancy, we have $\|\bfd^{\mathcal{C}}\|_\infty = O(1)$ and hence Koksma-Hlawka gives that $\|\bfd^{\mathcal{C}}\|_2 \cdot \|\mathbf{f'}\|_2 \approx \sqrt{n} \|\mathbf{f'}\|_2$. Now suppose that we could additionally guarantee that $\bfd^{\mathcal{C}}$ was also $\widetilde{O}(1)$-subgaussian --- then we would actually have  $|\bfd^{\mathcal{C}} \cdot \mathbf{f'}| \approx \|\mathbf{f'}\|_2$, giving a $\sqrt{n}$ factor improvement over Koksma-Hlawka!

In fact,
modern combinatorial discrepancy algorithms can ensure that the output vector $\bfd^{\D}$ consisting of discrepancy of dyadic intervals is $\widetilde{O}(1)$-subgaussian (see \Cref{thm:self_bal_walk}). 
And indeed, as each prefix interval $[0,j/n]$ is a sum of $\widetilde{O}(1)$ dyadic intervals, it seems natural to expect that $\bfd^{\mathcal{C}}$ should also be $\widetilde{O}(1)$-subgaussian.

\smallskip
\noindent\textbf{Failure of Subgaussianity for Prefix Intervals.} 
Unfortunately and perhaps surprisingly, even though each prefix interval comprises of $\widetilde{O}(1)$ dyadic intervals, the subgaussian property is much more delicate and is completely destroyed when we pass from dyadic to prefix intervals.\footnote{\label{foonote:subg_fail}In the 1-d case, one can design tailor-made  algorithms that can guarantee $\widetilde{O}(1)$-subgaussianity of prefix intervals, e.g. by pairing consecutive points and flipping colors, but no such algorithm exists for $d \geq 2$ (see \Cref{sec:example_subg_fail}).}

We give simple but (very) instructive examples in \Cref{sec:example_subg_fail}, which show that $\bfd^{\mathcal{C}}$ can be $\Omega(n)$-subgaussian, even though $\bfd^{\mathcal{D}}$ is $\widetilde{O}(1)$-subgaussian. 
Roughly, the problem is that prefix intervals overlap a lot, e.g., $[0,1/n]$ is contained in every prefix, which leads to huge dependencies. 
As a result, this rules out any approach based on exploiting the subgaussianity of $\bfd^{\mathcal{C}}$.

\smallskip
\noindent \textbf{Dyadic Decomposition Meets Hlawka-Zaremba: Cancellation to the Rescue.}
Surprisingly, we are able to bypass this problem by leveraging the structure of dyadic decomposition carefully to obtain cancellations of certain Fourier frequencies over long intervals in the Hlawka-Zaremba formula. 
This is our key technical insight and
we sketch it briefly here to give the main idea. But the details are somewhat technical and are given in \Cref{sec:full_proof}.

We start by writing $\bfd^{\mathcal{C}} = P \cdot \bfd^{\D}$, where $P$ is the dyadic decomposition matrix that expresses prefix intervals as combinations of dyadic intervals. 
Then \eqref{eq:Koksma_Hlawka_techniques_2} can be written as 
\[
\eps \approx \frac{1}{n^2} \big\langle \bfd^{\mathcal{C}} , \mathbf{f'} \big\rangle = \frac{1}{n^2} \big\langle \bfd^{\D} , P^\top \mathbf{f'} \big\rangle \approx \frac{1}{n^2} \cdot \big\| P^\top \mathbf{f'} \big\|_2 ,
\]
where the last step crucially uses that $\bfd^{\D}$ is $\widetilde{O}(1)$-subgaussian. So our task reduces to  upper bounding the $\ell_2$ norm $\|P^\top \mathbf{f'}\|_2$. 

However, even though $P$ has $O(\log n)$ row sparsity, its columns can be $\Omega(n)$-dense, e.g., the dyadic interval $[0,1/2]$ is contained in every prefix $[0,j/n]$ for $j \geq n/2$, 
 which means that the column of $P$ corresponding to $[0,1/2]$ has $n/2$ ones. 
Consequently, $P$ has operator norm $\Omega(\sqrt{n})$ -- which
is an equivalent view of the failure of subgaussianity for prefix intervals, as described earlier. 

Nonetheless, we can exploit that the columns of $P$ are highly structured: the ones appearing in each non-zero column of $P$ also correspond, in a natural way, to a dyadic interval $I$, so that the corresponding coordinate of $P^\top \mathbf{f'}$ is simply $\int_I f'(x) = f(r_I) - f(\ell_I)$ (see \Cref{lem:dyadic_decomp_1D}).
Consequently, we can show that the high Fourier frequencies of $f$ exhibit significant cancellations on coordinates of $P^\top \mathbf{f'}$ corresponding to long intervals $I$. Quantifying this leads to our new notion of variation $\sigma_{\SO}$.

There are some additional technicalities to complete the proof of \Cref{thm:main_high_dim}. First, while the Fourier basis is orthogonal, this orthogonality is lost when $P^\top$ is applied and this complicates the calculation. 
To handle this, we apply a random shift to the dyadic system to obtain orthogonality of the Fourier coefficients (see \Cref{lem:orthogonality_fourier}). 
Second, to generalize the 1-d analysis above to higher dimensions, 
we work with the natural {\em tensorization} of 1-d dyadic decomposition (see \Cref{subsec:dyadic_decomp}). However, the higher dimensional version of Hlawka-Zaremba formula has complicated boundary conditions (see \Cref{lem:Hlawka_Zaremba_formula}). This requires exploiting the joint subgaussianity of $\bfd^{\D}$ on these boundaries and performing a similar computation.

\smallskip
\noindent{\bf Organization.}
The rest of the paper is organized as follows. 
We discuss notation and preliminaries in \Cref{sec:prel}, 
and describe the $\SubgSparsification$ algorithm formally in \Cref{sec:subg_sparsification_alg}. In \Cref{sec:best-of-both}, we show how the best of both worlds \Cref{thm:best-of-both-1} follows directly from subgaussianity, and how this idea combined with \Cref{thm:main_high_dim} lead to \Cref{thm:best-of-both-2}.
The proof of \Cref{thm:main_high_dim}
is in \Cref{sec:full_proof}. For ease of exposition, we first give the proof for the 1-d case in \Cref{sec:analysis_1D}, followed by the general case in \Cref{sec:analysis_high_dim}. 
The proof of \Cref{thm:main_high_dim} also suggests a more general definition of $\sigma_{\SO}$, which we discuss in \Cref{subsec:general_def_sigma_SO}. In \Cref{sec:conclude} we give some concluding remarks and describe an intriguing open problem. 
Some further related works and missing details appear in the appendix.




\section{Preliminaries}
\label{sec:prel}
{\bf Notation.}
For integer $j>0$, we denote $[j] :=\{1, \cdots, j\}$. 
We use bold letters, e.g. $\bfz$, to denote vectors, and $\bfz_j$ the $j$th coordinate of $\bfz$. For a subset of coordinates $S \subseteq [d]$, we denote $\overline{S} := [d] \setminus S$ its complement set of coordinates. 
For a vector $\bfz \in \R^d$ and $S \subseteq [d]$, denote $\bfz_{S} \in \R^{|S|}$ the vector $\bfz$ restricted to coordinates in $S$.
For a set of vectors, we use superscript $j$ to denote the $j$th vector $\bfz^j$, to distinguish from the $j$th coordinate. Matrices are denoted by unbolded capital letters. For a matrix $A \in \R^{s \times t}$, we use $A_{j, \cdot}$ to denote its $j$th row and $A_{\cdot, j}$ its $j$th column. 

Given a function $f: [0,1]^d \rightarrow \R$, we denote its mean by $\overline{f} :=\int_{[0,1]^d} f(\textbf{z}) d \textbf{z}$. 
For a finite set of points $A\subseteq [0,1]^d$, the average value of $f$ on $A$ is denoted by $\overline{f}(A) := 
\sum_{\textbf{z} \in A} f(\bfz)/|A|$.
Finally, $\err(A,f) := \overline{f}(A)  - \overline{f}$ denotes the (signed) integration error with respect to point set $A$.

Given a complex number $z \in \C$, we use $z^*$ to denote its complex conjugate, and $|z|$ its modulus.
For a complex vector $\bfz \in \C^d$, we use $|\bfz|$ to denote the $\ell_2$ norm of the vector formed by the modulus of its each coordinate.
For any bounded interval $I \subset \R$, we denote $\ell_I$ and $r_I$ its left and right endpoints. 
Throughout, $\log x$ means the logarithm of $x$ in base $2$.

\subsection{Monte-Carlo and Quasi-Monte-Carlo Methods and Geometric Discrepancy}
\label{subsec:geo_disc_num_int}

In Monte Carlo methods, $A$ is chosen by picking each point independently and uniformly, and the error satisfies 
$\E_A [ \err(A,f)]=0$ and
$\E_A \big[\err(A,f)^2 \big] =\sigma(f)^2/|A|$.
where $\sigma(f)^2$ is the variance of $f$.

Quasi-Monte Carlo (QMC) methods use more structured set of points $A$.
There is a remarkable (exact) formula, due to
Hlawka and Zaremba \cite{Hla61,Zar68},
for $\err(A,f)$
for any set\footnote{The half-open cube here is standard in QMC literature, since it  partitions easily into half-open subcubes
and hyperrectangles. It also does not make any difference for our algorithm as it uses random samples.} $A \subseteq (0,1]^d$ and any function $f$, in terms of the continuous discrepancy of $A$ and derivatives of $f$ (\Cref{lem:Hlawka_Zaremba_formula}). 

For a set of points $A$, let us define the continuous discrepancy
of a measurable set $R \subseteq [0,1]^d$ as \[D(A, R) := |A| \cdot \vol(R) - |A \cap R|.\]
For a family  $\mathcal{R}$ of subsets of $[0,1]^d$, 
we
denote $D(P, \mathcal{R}) := \sup_{R \in \mathcal{R}} |D(P, R)|$.

For a point $\bfx \in [0,1]^d$, the {\em corner} $C_\bfx$ is the set $(0,\bfx_1]\times \cdots \times (0,\bfx_d]$.
We will be interested in discrepancy of corners.
For a function $f: \R^d \rightarrow \C$, let $\partial_{[d]} f(\bfz) := \nicefrac{\partial^d f(\bfz)}{\partial \bfz_1 \cdots \partial \bfz_d}$ denote the mixed partial derivative of $f$. Similarly, for $S \subseteq [d]$, let 
$\partial_S f(\bfz) := \nicefrac{\partial^{|S|} f(\bfz)}{\partial \bfz_{S}}$ denote the mixed partial derivative of $f$ w.r.t. coordinates in $S$. We have the following formula.
\begin{lemma}[Hlawka–Zaremba Formula, \cite{Hla61,Zar68}] \label{lem:Hlawka_Zaremba_formula}
Let $f: [0,1]^d \rightarrow \R$ have continuous mixed derivative $\partial_{[d]} f(\bfz)$. Then for any set of points $A \subseteq (0,1]^d$, 
\begin{align} \label{eq:Hlawka_Zaremba_formula}
\err(A,f) = -\sum_{\emptyset \neq S \subseteq [d]} (-1)^{|S|} \int_{Q_S} h(\bfz) \cdot \partial_S f(\bfz) d \bfz ,
\end{align}
where $h(\bfz) = D(A,C_{\bfz})/|A|$ is the (scaled) continuous discrepancy of the corner $C_{\bfz}$, and $Q_S := \{\bfz \in [0,1]^d: \bfz_j = 1 \text{ for all } j \notin S\}$ denotes the (shifted) $|S|$-dimensional cube in $\R^d$. 
\end{lemma}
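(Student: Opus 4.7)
My approach is to apply a multi-dimensional iteration of the fundamental theorem of calculus to express $f$ in terms of its mixed partials, use it separately for $\overline{f}(A)$ and $\overline{f}$, and identify the resulting difference with the corner discrepancies $h(\bfz)$ via Fubini.

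\textbf{Step 1 (Multi-dimensional FTC identity).} First, I would establish that for any $\bfz \in [0,1]^d$,
\[
f(\bfz) \;=\; \sum_{S \subseteq [d]} (-1)^{|S|} \int_{\prod_{j\in S}[\bfz_j,1]} \partial_S f\bigl(\bfy_S, \mathbf{1}_{\overline S}\bigr) \, d\bfy_S ,
\]
where for $S = \emptyset$ the summand is interpreted as $f(\mathbf{1})$. This is proved by induction on $d$: one variable at a time, apply $g(t) = g(1) - \int_t^1 g'(s)\,ds$ to the partial function in each unfixed coordinate; the binomial sign structure in the expansion produces the $(-1)^{|S|}$ weights, and continuity of $\partial_{[d]} f$ (hence of every $\partial_S f$) lets us interchange the integrals freely.

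\textbf{Step 2 (Rewrite via indicators).} Writing $\int_{[\bfz_j,1]} = \int_{[0,1]} \mathbf{1}\{\bfy_j \geq \bfz_j\}$ and noting that the combined indicator over $j \in S$ is exactly $\mathbf{1}_{C_{\bfy^{(S)}}}(\bfz)$, where $\bfy^{(S)}$ denotes the point in $Q_S$ obtained from $\bfy_S$ by setting the $\overline{S}$-coordinates to $1$, I would evaluate the identity at each $\bfx \in A$, average, and separately integrate over $\bfz \in [0,1]^d$. Fubini then yields
\[
\overline{f}(A) = \sum_{S} (-1)^{|S|} \int_{Q_S} \partial_S f(\bfy)\, \frac{|A \cap C_{\bfy}|}{|A|}\, d\bfy_S, \quad
\overline{f} = \sum_{S} (-1)^{|S|} \int_{Q_S} \partial_S f(\bfy)\, \vol(C_{\bfy})\, d\bfy_S ,
\]
using $\vol(C_\bfy) = \prod_{j \in S} \bfy_j$ on $Q_S$.

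\textbf{Step 3 (Subtract and conclude).} Subtracting and using $D(A, C_\bfy) = |A|\vol(C_\bfy) - |A \cap C_\bfy|$, the bracketed quantity becomes $-D(A,C_\bfy)/|A| = -h(\bfy)$, giving
\[
\err(A,f) = -\sum_{S \subseteq [d]} (-1)^{|S|} \int_{Q_S} h(\bfy) \, \partial_S f(\bfy)\, d\bfy_S .
\]
The $S = \emptyset$ term vanishes since $Q_\emptyset = \{\mathbf{1}\}$ and $h(\mathbf{1}) = \vol((0,1]^d) - |A|/|A| = 0$ (because $A \subseteq (0,1]^d$), so the sum reduces to $S \neq \emptyset$, matching \eqref{eq:Hlawka_Zaremba_formula}.

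\textbf{Main obstacle.} The substantive work is bookkeeping: correctly setting up the iterated FTC with the right sign conventions, and justifying the order-swap in Step 2 via Fubini. The only analytic input needed beyond this is continuity of $\partial_{[d]} f$, which is given. The key conceptual move — identifying $\vol(C_\bfy) - |A \cap C_\bfy|/|A|$ with the scaled discrepancy $h(\bfy)$ — is essentially a tautology once the indicator rewriting of Step 2 is in place.
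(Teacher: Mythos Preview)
Your proposal is correct and is the standard clean route to the general-$d$ Hlawka--Zaremba formula. The paper itself does not give a full proof: it only works out the $d=1$ case in Appendix~\ref{sec:appendix_QMC} via a direct integration-by-parts on $\int_0^1 f$ together with an Abel-summation identity on $\sum_i f(x_i)$, and then remarks that the higher-dimensional case ``can be derived similarly using integration by parts, but the formula gets much more complicated (due to boundary conditions),'' deferring to \cite{Hla61,Zar68}. Your iterated-FTC identity in Step~1 is exactly the systematic way to handle that higher-dimensional bookkeeping; specialized to $d=1$ it reduces to $f(z)=f(1)-\int_z^1 f'$, which after averaging and integrating reproduces the paper's one-dimensional computation. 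So the two approaches are the same in spirit, with yours filling in the general-$d$ details that the paper omits.
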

The case of $d=1$ is much easier to parse, where $\err(A,f)  = \int_x h(x) f'(x) dx$.
For intuition, we give the derivation of this case, which is based on a simple integration by parts, in \Cref{sec:appendix_QMC}.

Applying Cauchy-Schwartz in \eqref{eq:Hlawka_Zaremba_formula} twice gives the following $\ell_2$-Koksma-Hlawka inequality.

\begin{theorem}[$\ell_2$-Koksma-Hlawka Inequality, \cite{Kok42,Hla61,Zar68}] 
\label{thm:Koksma_Hlawka_ineq}
Let $f: [0,1]^d \rightarrow \R$ have continuous mixed derivative $\partial_{[d]} f(\bfz)$.\footnote{This requirement can be further relaxed. See \cite{Mat09} and the references therein.} Then for any set of points $A \subseteq (0,1]^d$, 
\begin{align} \label{eq:Koksma_Hlawka_ineq} 
| \err(A,f) | \leq h_{2,\mathsf{Proj}}(A, \mathcal{C}^d) \cdot \sigma_{\HK}(f) ,
\end{align}
where $h_{2,\mathsf{Proj}}(A, \mathcal{C}^d) := \big(\sum_{\emptyset \neq S \subseteq [d]} \int_{Q_S} h(\bfz)^2 d \bfz\big)^{1/2}$ is the (scaled) projected $\ell_2$-continuous discrepancy of corners $\mathcal{C}^d$, and $\sigma_{\HK}(f) := \big(\sum_{\emptyset \neq S \subseteq [d]} \int_{Q_S} (\partial_S f(\bfz))^2 d \bfz\big)^{1/2}$ is the $\ell_2$-Hardy-Krause variation. 
\end{theorem}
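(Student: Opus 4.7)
The plan is to deduce the inequality directly from the Hlawka–Zaremba formula of \Cref{lem:Hlawka_Zaremba_formula} by two successive applications of the Cauchy–Schwarz inequality, exactly as the text above the theorem statement suggests.

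First, I would take absolute values on both sides of \eqref{eq:Hlawka_Zaremba_formula} and invoke the triangle inequality to pull the absolute value inside the sum over $\emptyset \neq S \subseteq [d]$; the sign factors $(-1)^{|S|}$ have modulus one and drop out. This reduces the task to bounding, for each such $S$, the single integral $\bigl|\int_{Q_S} h(\bfz)\,\partial_S f(\bfz)\,d\bfz\bigr|$.

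Next, I would apply Cauchy–Schwarz inside $L^2(Q_S)$ to each summand, bounding it by $\bigl(\int_{Q_S} h^2\bigr)^{1/2} \bigl(\int_{Q_S} (\partial_S f)^2\bigr)^{1/2}$. I would then apply Cauchy–Schwarz a second time, now to the outer sum over $\emptyset \neq S \subseteq [d]$, viewing it as the standard Euclidean inner product in $\R^{2^d - 1}$ of the two vectors whose $S$-coordinates are $\bigl(\int_{Q_S} h^2\bigr)^{1/2}$ and $\bigl(\int_{Q_S} (\partial_S f)^2\bigr)^{1/2}$. The Euclidean norms of these two vectors are, by construction, exactly the quantities $h_{2,\mathsf{Proj}}(A,\mathcal{C}^d)$ and $\sigma_{\HK}(f)$ from the statement, giving the claimed bound.

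No step presents a real obstacle: the argument is purely mechanical once \Cref{lem:Hlawka_Zaremba_formula} is available as a black box. The only genuine choice is in the second application of Cauchy–Schwarz; combining the local $L^2$ norms of $h$ and of $\partial_S f$ across $S$ via an \emph{Euclidean} inner product is what produces the clean factorization into an $\ell_2$-discrepancy times an $\ell_2$-variation, whereas other H\"{o}lder-type pairings would yield different (and looser) forms of the inequality. The continuity hypothesis on $\partial_{[d]} f$ both legitimizes invoking \Cref{lem:Hlawka_Zaremba_formula} and ensures each $\partial_S f$ is continuous, hence $L^2$-integrable on the compact set $Q_S$; boundedness of $h(\bfz) = D(A,C_\bfz)/|A|$ supplies the remaining integrability.
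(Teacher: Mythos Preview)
Your proposal is correct and matches the paper's own argument: the text immediately preceding the theorem states that it follows by ``applying Cauchy-Schwartz in \eqref{eq:Hlawka_Zaremba_formula} twice,'' which is precisely the two-step Cauchy--Schwarz (inner $L^2(Q_S)$, then outer over $S$) that you describe.
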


One may also use the $\ell_1$/$\ell_\infty$-H\"older's inequality in \eqref{eq:Hlawka_Zaremba_formula} to derive an $\ell_1/\ell_\infty$ version of Koksma-Hlawka inequality. We omit the details and refer to \cite{WikiLowDisc,KN12,Owe13} and the references therein.

Since the seminal work of van der Corput \cite{vdC35a,vdC35b}, there have been numerous ingenious constructions of sets with small continuous discrepancy, with the best-known bound being $\widetilde{O}_d(1)$ for all versions of continuous discrepancy. These constructions are typically very delicate, and we discuss them more in \Cref{subsec:related_work}. Some excellent references are \cite{Nie92, DP10, Owe13, Mat09}.

\subsection{Fourier Analysis on the Unit Cube}
\label{subsec:Fourier_analysis}
Let $f: [0,1]^d \rightarrow \C$ be a  $1$-periodic function in $L^2([0,1]^d)$, i.e., $f(\mathbf{0}_j, \bfz_{[d]\setminus \{j\}}) = f(\ind_j, \bfz_{[d]\setminus \{j\}})$ for all $j\in [d]$ and $\bfz \in [0,1]^d$,
and square integrable satisfying $\int_{[0,1]^d} |f(\bfz)|^2 d \bfz < \infty$.

For any $\bfk \in \Z^d$, the Fourier coefficients of $f$ are defined as
$
\widehat{f}(\bfk) := \int_{[0,1]^d} f(\bfz) \exp(-2 \pi i \langle \bfk, \bfz \rangle) d \bfz,$
and the Fourier series of $f$ is given by 
\[ \sum_{\bfk \in \Z^d}\widehat{f}(\bfk) \cdot \exp({2 \pi i \langle \bfk, \bfz \rangle}).\]
For any square-integrable function $f$, its Fourier series converges pointwise to $f$ almost everywhere \cite{carleson66, Fefferman}. This suffices for our purposes as our algorithm uses random samples.
It is also known if $f$ has bounded variation, then this convergence is pointwise everywhere, see e.g., \cite{Grafakos09}.

The orthogonality of the Fourier basis implies the Parseval identity $
\int_{[0,1]^d} |f(\bfz)|^2 d \bfz = \sum_{\bfk \in \Z^d} |\widehat{f}(\bfk)|^2$.

As $\overline{f}= \int_{[0,1]^d} f(\bfz) d\bfz = \widehat{f}(\bf0)$, this gives that  $\sigma(f)^2 = \sum_{\bfk \in \Z^d \setminus \{\mathbf{0}\}} |\widehat{f}(\bfk)|^2$. 
Also note that 
\[\partial_S f(\bfz) = \sum_{\bfk \in \Z^d 
\setminus \{\mathbf{0}\}} \widehat{f}(\bfk) \cdot \exp({2 \pi i \langle \bfk, \bfz \rangle}) \cdot (2 \pi i)^{|S|} \prod_{j \in S} \bfk_j,\] so the $\ell_2$-Hardy-Krause variation is given by
\begin{align*} 
\sigma_{\HK}(f)^2 =  \Theta_d(1) \sum_{\bfk \in \Z^d \setminus \{ \mathbf{0}\}} 
\big|\widehat{f}(\bfk) \big|^2 \cdot \prod_{j \in [d]} \max(1,|\bfk_j|^2).
\end{align*}
Let us also recall our definition of $\sigma_{\SO}(f)$ in \eqref{eq:sigma_SO_formula_intro} below.
\[
\sigma_{\SO}(f)^2 := 
\sum_{\bfk\in \Z^d \setminus \{ \mathbf{0}\}} 
 \big|\widehat{f}(\bfk) \big|^2 \cdot \prod_{j \in [d]} \max(1,|\bfk_j|).
\]

\subsection{Combinatorial Discrepancy and Vector Balancing}
\label{subsec:subg_disc}
In the vector balancing problem, we are given vectors $\bfv^1, \cdots, \bfv^n \in \R^m$ of $\ell_2$ norm at most $1$, and our goal is to find a coloring $\bfx \in \{\pm 1\}^n$ to minimize the {\em combinatorial} discrepancy $\|\sum_{j=1}^n \bfx_j \bfv^j\|_\infty$. 

In seminal work, Banaszczyk \cite{b98} showed that there always exists a coloring $\bfx \in \{\pm 1\}^n$ with discrepancy at most $O(\sqrt{\log (mn)})$. 
Bansal et al.~\cite{bdgl18} gave an efficient algorithm, called the \textsf{Gram-Schmidt Walk}, matching Banaszczyk's bound. In particular, the random coloring $\bfx \in \{\pm 1\}^n$ generated by their algorithm is symmetric\footnote{We call a random vector symmetric if its distribution is symmetric around the origin.} and satisfies that the discrepancy vector $\sum_{j=1}^n \bfx_j \bfv^j$ is $O(1)$-subgaussian, where we recall the definition of subgaussian vectors below. 

\begin{definition}[Subgaussian Vectors] 
A random vector $\bfu \in \R^m$ is called $\sigma^2$-subgaussian if for all $\bfz \in \R^m$, one has $\E[\exp(\langle \bfu, \bfz\rangle)] \leq \exp(\sigma^2 \|\bfz\|_2^2/2)$. 
\end{definition}
The subgaussianity constant 
in the \textsf{Gram-Schmidt Walk} was further improved to $1$ in \cite{HSSZ24}. 
However, the \textsf{Gram-Schmidt Walk} algorithm has a rather large $\poly(n,m)$ runtime.
Recently, \cite{ALS21} gave an elegant online algorithm, called the \textsf{Self-Balancing Walk}, which is extremely fast and has the following (only slightly worse) discrepancy guarantee. 
\begin{theorem}[\cite{ALS21}] \label{thm:self_bal_walk}
Given vectors $\bfv^1, \cdots, \bfv^n \in \R^m$ with $\|\bfv^j\|_2 \leq 1$ online, the \textsf{Self-Balancing Walk} computes a symmetric random coloring $\bfx \in \{\pm 1\}^n$, 
such that $\sum_{j =1}^t \bfx_j \bfv^j$ is $O(\log (mn))$-subgaussian for all $t \in [n]$ whp. 
Moreover, the algorithm runs in $O(\sum_{j \in [n]} \nnz(\bfv^j))$ time.\footnote{$\nnz(\bfz)$ denotes  number of non-zero coordinates in a vector $\bfz$.} 
\end{theorem}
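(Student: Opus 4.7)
The plan is to design and analyze the algorithm as follows. Set $c := K\log(mn)$ for a sufficiently large absolute constant $K$, initialize $\bfw^0 := \mathbf{0}$, and upon receiving $\bfv^t$ set
\[
p_t := \tfrac{1}{2}\bigl(1 - \langle \bfw^{t-1}, \bfv^t\rangle / c\bigr),
\]
sample $\bfx_t \in \{\pm 1\}$ with $\Pr[\bfx_t = +1] = p_t$ (declaring failure if $p_t \notin [0,1]$), and update $\bfw^t := \bfw^{t-1} + \bfx_t \bfv^t$. This choice gives $\E[\bfx_t\mid \bfw^{t-1}] = -\langle \bfw^{t-1},\bfv^t\rangle/c$, so the conditional mean of $\bfw^t$ is $(I - \bfv^t(\bfv^t)^\top/c)\bfw^{t-1}$, which contracts $\bfw^{t-1}$ along $\bfv^t$. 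Symmetry of the output $\bfx$ is immediate by induction on $t$: the conditional law of $\bfx_t$ given $\bfw^{t-1}$ is invariant under the joint flip $(\bfw^{t-1},\bfx_t)\mapsto(-\bfw^{t-1},-\bfx_t)$, and $\bfw^0 = \mathbf{0}$ is symmetric, so flipping every sign along a trajectory produces an equally likely trajectory. The runtime bound is also immediate: step $t$ only computes $\langle \bfw^{t-1},\bfv^t\rangle$ and updates $\bfw^t$, each in $O(\nnz(\bfv^t))$ time.

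The core of the argument is an MGF induction for a fixed test direction. Fix a unit vector $\bfu \in \R^m$ and a parameter $\lambda$ with $|\lambda| \le 1/c$, write $\alpha_t := \langle \bfu, \bfv^t\rangle$ and $\beta_t := \langle \bfw^{t-1},\bfv^t\rangle/c$, and compute
\[
\E\bigl[e^{\lambda \bfx_t \alpha_t}\,\big|\, \bfw^{t-1}\bigr] = \cosh(\lambda \alpha_t) - \beta_t\sinh(\lambda \alpha_t).
\]
For $|\lambda \alpha_t| \le 1$ and $|\beta_t|\le 1$, Taylor-bounding $\cosh$ and $\sinh$ yields that this quantity is at most $\exp\bigl(O(\lambda^2 \alpha_t^2) - \lambda \alpha_t \beta_t\bigr)$. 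Multiplying by $e^{\lambda\langle \bfu,\bfw^{t-1}\rangle}$ and iterating, the drift term $-\lambda \alpha_t \beta_t$ combines with $\lambda\langle \bfu,\bfw^{t-1}\rangle$ exactly as the first-order effect of the contraction map $I - \bfv^t(\bfv^t)^\top/c$, so a straightforward induction gives
\[
\E\bigl[e^{\lambda \langle \bfu, \bfw^t\rangle}\bigr] \le \exp\bigl(O(\lambda^2 c^2)\bigr)
\]
uniformly over $t \in [n]$. By definition this is the $O(c)$-subgaussian bound of $\bfw^t$ along $\bfu$. A standard $\eps$-net argument over the unit sphere in $\R^m$ together with a union bound over $t \in [n]$ then upgrades this to the stated joint $O(\log(mn))$-subgaussianity of $\bfw^t$ for all $t$ with high probability.

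The main obstacle is ensuring $|\beta_t|\le 1$ so that $p_t\in[0,1]$ and the Taylor bounds apply; this is what forces $c$ to scale like $\log(mn)$. I would handle this by a self-bootstrapping argument: a crude Azuma–Hoeffding bound applied coordinate-wise to $\bfw^t$ shows $\|\bfw^t\|_\infty = O(\sqrt{t\log(mn)})$ with high probability, which for $t \le n$ gives $|\langle \bfw^{t-1},\bfv^t\rangle| \le c$ whenever $K$ is large enough and $\bfv^t$ is a unit vector; then the refined MGF analysis above applies on the good event, and the failure event contributes negligibly to the MGF. An equivalent packaging is to track the product of $e^{\lambda\langle \bfu,\bfw^t\rangle}$ with the indicator that no clipping has occurred, so the recursion holds deterministically. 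Once $|\beta_t|\le 1$ is secured, the MGF induction described above goes through and yields the theorem.
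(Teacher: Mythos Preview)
The paper does not prove this theorem; it is quoted from \cite{ALS21} and used as a black box, so there is no proof in the paper to compare against. Your sketch is essentially the argument of \cite{ALS21} itself, and the central idea of iterating the test direction via $u \mapsto (I - \bfv^t(\bfv^t)^\top/c)\,u$ is exactly right. However, several of the stated steps are incorrect as written.

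First, the bound $\exp(O(\lambda^2 c^2))$ under the restriction $|\lambda|\le 1/c$ would only give $O(c^2)=O(\log^2(mn))$-subgaussianity, not $O(c)$. The correct bound is $\exp(O(\lambda^2 c))$, valid for \emph{all} $\lambda$: with $u^{(t)}=u$ and $u^{(s-1)}=(I-\bfv^s(\bfv^s)^\top/c)\,u^{(s)}$ one has $\|u^{(s-1)}\|_2^2 \le \|u^{(s)}\|_2^2 - \langle u^{(s)},\bfv^s\rangle^2/c$, so the accumulated variance $\sum_s \langle u^{(s)},\bfv^s\rangle^2$ telescopes to at most $c\|u\|_2^2 = c$. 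The restriction on $\lambda$ is unnecessary since Hoeffding's lemma gives $\cosh(x)-\beta\sinh(x)\le e^{-\beta x + x^2/2}$ for all $x$ once $|\beta|\le 1$. Second, the ``crude Azuma'' step does not yield $|\langle \mathbf{w}^{t-1},\bfv^t\rangle|\le c$: a coordinate bound $\|\mathbf{w}^{t-1}\|_\infty=O(\sqrt{t\log(mn)})$ only controls the inner product through $\|\mathbf{w}^{t-1}\|_2$, which can be $\Omega(\sqrt{mt})$, and $\mathbf{w}^t$ is not even a martingale. The correct closure is precisely your indicator packaging: once $\E\bigl[e^{\lambda\langle u,\mathbf{w}^t\rangle}\ind_{G_t}\bigr]\le e^{\lambda^2 c/2}$ holds on the no-failure event $G_t$, specialize $u=\bfv^{t+1}$ and apply Markov to get $\Pr(G_t\setminus G_{t+1})\le 2e^{-c/2}$, whence failure has probability at most $2ne^{-c/2}$. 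Third, the $\eps$-net is unnecessary and misplaced: the MGF bound already holds for every unit $u$ with the \emph{same} event $G_t$, which is exactly the definition of $O(c)$-subgaussianity, so there is nothing to union-bound over directions.
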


For our purposes, it will be convenient to have $\bfx$ be {\em balanced}, i.e., have equal number of $1$'s and $-1$'s. For even $n$, this can be achieved simply by pairing the adjacent vectors to form the vectors $v_1 - v_2, v_3 - v_4, \cdots, v_{n-1} - v_n$ (note that their length is at most $2$) and applying  \Cref{thm:self_bal_walk}.
We call the resulting modified algorithm\footnote{This modification loses that the algorithm is fully online, and needs a lookahead of one step. 
This does not matter for our application to numerical integration.  
In cases where an online implementation is really required, one can also use \textsf{Self-Balancing Walk} in place of $\SubgVecBal$ in our $\SubgSparsification$ algorithm. 
The resulting integration error is similar, but the analysis is a bit more involved.}   $\SubgVecBal$ and will assume the balanced property henceforth.

\subsection{Dyadic Decomposition}
\label{subsec:dyadic_decomp}

\noindent \textbf{Dyadic Intervals.} Let $j \in \mathbb{N}$, we denote by $\D_j$ the set of all left-open\footnote{Due to a slight technicality caused by the boundary, it is convenient to work with left-open intervals.} dyadic intervals in $(0,1]$ of length $2^{-j}$, i.e. $\D_j := \{(\ell/2^j, (\ell+1)/2^j]: \ell = 0,\cdots, 2^j-1 \}$, and call $j$ the level of these dyadic intervals. For $h \in \mathbb{N}$, denote $\D_{\leq h}$ all dyadic intervals of level $\leq h$, i.e. $\D_{\leq h} := \bigcup_{j \leq h} \D_j$. 
For convenience of our analysis, we also define the right-open dyadic intervals $\widetilde{\D}_{\leq h}$ similarly.

\smallskip
\noindent \textbf{Prefix Intervals and Dyadic Decomposition.} For any $x \in [0,1]$, we denote left-open prefix interval $C_{  x} := (0,x]$, with $C_1 = (0,1]$. For $h \in \mathbb{N}$, we use $\mathcal{C}_h$ to denote all left-open prefix intervals whose length is an integer multiple of $2^{-h}$ (excluding\footnote{This is again due to the slight technicality caused by the boundary.} $(0,1]$), i.e. $\mathcal{C}_h := \{C_{ \ell/2^h}: \ell = 0, \cdots, 2^h-1\}$.

{\em Dyadic decomposition} refers to decomposing sub-intervals of $(0,1]$ into a minimal disjoint union of  dyadic intervals. We only use dyadic decomposition for prefix intervals. 
For any $h \in \mathbb{N}$, we define $P_h \in \{0,1\}^{|\mathcal{C}_h| \times (|\D_{\leq h}|-1)}$ the dyadic decomposition matrix of $\mathcal{C}_h$, where the number of columns is $|\D_{\leq h}|-1$ because we have chosen to exclude the unused $(0,1] \in \D_{\leq h}$ in the decomposition. 
In particular, for the $\ell$th prefix interval $I_\ell \in \mathcal{C}_h$ and the $r$th dyadic interval $J_r \in \D_{\leq h} \setminus \{C_1\}$, we have $(P_h)_{\ell, r} = 1$ if $J_r$ is used in the dyadic decomposition of $I_\ell$, and $0$ otherwise.

As each prefix in $\mathcal{C}_{\leq h}$ is a sum of at most $h$ dyadic intervals in $\D_{\leq h}$, each row of $P_h$ has at most $h$ ones. However, the columns of $P_h$ could be dense with $\Omega(2^h)$ entries. For our analysis, we will in fact crucially exploit the structure of the columns of $P_h$. 
 For technical convenience, we actually work with a slight variant of $P_h$ that has more structured columns, which we call the {\em structured decomposition matrix} and denote it as $\overline{P}_h$. \Cref{lem:aug_trans_mat} below  gives its crucial properties.
We postpone the definition of $\overline{P}_h$ and the proof of \Cref{lem:aug_trans_mat} to \Cref{sec:missing_proofs}.

\begin{restatable}[Properties of Structured Decomposition Matrix]{lemma}{AugDecompMatrix} \label{lem:aug_trans_mat}
There exists a structured decomposition matrix $\overline{P}_h \in \{0,1\}^{|\mathcal{C}_h| \times (|\D_{\leq h}|-1)}$ that satisfies the following. 
\begin{enumerate}
    \item \label{property_1_lem:aug_trans_mat} The ones in each column of $\overline{P}_h$ appear consecutively, and moreover these locations form right-open dyadic intervals in $\widetilde{\D}_{\leq h} \setminus \{[0,1)\}$.
    That is, each column can be associated to a right-open dyadic interval $\widetilde{I} \in \widetilde{\D}_{\leq h} \setminus \{[0,1)\}$, and the ones in that column are exactly at rows $C_z$ with $z \in 2^{-h} \Z \cap \widetilde{I}$. In particular, for every $1 \leq \ell \leq h$, there are exactly $2^\ell$ columns of $\overline{P}_h$ with exactly $2^{h-\ell}$ ones.
    
    \item \label{property_2_lem:aug_trans_mat} For every vector $\bfu \in \R^{|\mathcal{C}_h|}$, we have $\|P_h^\top \bfu\|_2 \leq \|\overline{P}_h^\top \bfu\|_2$. 
\end{enumerate}
\end{restatable}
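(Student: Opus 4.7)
The plan is to exhibit $\overline{P}_h$ explicitly by indexing its columns with right-open dyadic intervals, and then verify both properties directly from the definition.

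Concretely, I would let the rows of $\overline{P}_h$ be indexed by $C_z \in \mathcal{C}_h$ (so $z = \ell/2^h$ for $\ell = 0, \ldots, 2^h - 1$) and the columns be indexed by $\widetilde{I} \in \widetilde{\mathcal{D}}_{\leq h} \setminus \{[0, 1)\}$ (an index set of the correct cardinality $|\mathcal{D}_{\leq h}| - 1$), setting
\[
(\overline{P}_h)_{C_z, \widetilde{I}} \;:=\; \mathbf{1}[\, z \in \widetilde{I} \,].
\]
Property~1 is then immediate: a right-open dyadic interval of length $2^{-\ell}$ has the form $[a/2^\ell, (a+1)/2^\ell)$ and contains exactly $2^{h-\ell}$ consecutive grid points from $2^{-h}\mathbb{Z}$, and there are $2^\ell$ such intervals at each level $\ell \in [1, h]$, together contributing $\sum_{\ell=1}^h 2^\ell = |\mathcal{D}_{\leq h}| - 1$ columns with the claimed block-of-ones structure.

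For Property~2, the key observation is that every nonzero column of the original $P_h$ coincides with some column of $\overline{P}_h$; the remaining columns of $\overline{P}_h$ can only add non-negative squared terms to $\|\overline{P}_h^\top \bfu\|_2^2$. To make this precise, I would first analyze the greedy dyadic decomposition of the prefix $(0, \ell/2^h]$: writing the binary expansion of $\ell$ as having ones at positions $i_1 > i_2 > \cdots > i_s$, the $k$th piece of the decomposition is $J_{a_k, j_k}$ with $j_k = h - i_k$ and $a_k = \sum_{t < k} 2^{i_t - i_k}$. Since each summand has strictly positive exponent ($i_t > i_k$ for $t < k$), every $a_k$ is even (with $a_1 = 0$); hence any $J_{a, j}$ with $a$ odd yields an identically zero column of $P_h$. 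For the $a$-even case, the same binary-expansion description shows that $J_{a, j}$ appears in the decomposition of $(0, z]$ if and only if $z \in [(a+1)/2^j, (a+2)/2^j) \cap 2^{-h}\mathbb{Z}$, which is exactly the support of the column of $\overline{P}_h$ indexed by the right-open sibling $\widetilde{I} = [(a+1)/2^j, (a+2)/2^j)$.

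Combining these observations, $\|P_h^\top \bfu\|_2^2$ equals the sum of the squares $\bigl(\sum_{z \in \widetilde{I} \cap 2^{-h}\mathbb{Z}} \bfu_{C_z}\bigr)^2$ over right-open dyadic intervals $\widetilde{I}$ whose left-endpoint index is odd, while $\|\overline{P}_h^\top \bfu\|_2^2$ is the analogous sum over all $\widetilde{I} \in \widetilde{\mathcal{D}}_{\leq h} \setminus \{[0, 1)\}$, so Property~2 follows by non-negativity of the extra squared terms. I expect the only genuine obstacle to be being precise in the combinatorial identification of which prefixes use $J_{a, j}$ in their greedy decomposition --- this needs some care at the boundary $a = 0$ and in translating between the left-open indexing used by $P_h$ and the right-open indexing used by $\overline{P}_h$ --- but once that bijection is pinned down, the rest is bookkeeping.
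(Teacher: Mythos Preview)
Your approach is essentially the same as the paper's. The paper defines $\overline{P}_h$ by keeping the nonzero columns of $P_h$ (those indexed by ``odd'' left-open dyadic intervals, i.e.\ $(2\ell/2^j,(2\ell+1)/2^j]$) and replacing each zero column by the indicator of the right-open sibling interval, then notes (as an Observation, without a detailed proof) that the nonzero columns of $P_h$ already have the right-open-dyadic-interval structure; Property~2 then follows because every nonzero column of $P_h$ appears among the columns of $\overline{P}_h$. Your construction reaches the same matrix by directly indexing columns with $\widetilde{\mathcal D}_{\leq h}\setminus\{[0,1)\}$, and your binary-expansion argument is exactly the verification the paper defers to its Observation; the core inequality in both cases is that $\|P_h^\top\bfu\|_2^2$ is a sub-sum of the squared column inner products defining $\|\overline{P}_h^\top\bfu\|_2^2$.
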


\smallskip
\noindent \textbf{Higher Dimensions.} We also use the higher dimensional versions of the notions above, many of which are simply tensor products of their one dimensional counterpart. 
In particular, $\D_{\leq h}^{\otimes d} := \{I_1 \times \cdots \times I_d: I_j \in \D_{\leq h} \text{ for all $j \in [d]$}\}$ denotes all dyadic boxes with level at most $h$ in each dimension. 
For any $\bfz \in [0,1]^d$, recall from \Cref{subsec:geo_disc_num_int} that we denote $C_{\bfz} := (0,\bfz_1] \times \cdots \times (0,\bfz_d]$ the left-open corner at $\bfz$. 
We use $\mathcal{C}_h^{\otimes d} := \{I_1 \times \cdots \times I_d: I_j \in \mathcal{C}_h \text{ for all $i \in [d]$}\}$ to denote all left-open corners at grid points in $2^{-h} \Z^d \cap [0,1)^d$. 
The dyadic decomposition of $\mathcal{C}_h^{\otimes d}$ into dyadic boxes $\D_{\leq h}^{\otimes d}$ is given by $P_h^{\otimes d}$. 
We also denote $\mathcal{C}^d := \{C_{\bfz}: \bfz \in [0,1]^d\}$ the set of all left-open corners.

\section{The Subgaussian Transference Algorithm}
\label{sec:subg_sparsification_alg}

We formally describe the $\SubgSparsification$ algorithm and note some simple properties.  

\subsection{The Algorithm}

Set $h = O(\log (dn))$ (with a large enough constant). $\SubgSparsification$ begins by applying a random shift to the dyadic system $\D_{\leq h}^{\otimes d}$, i.e. the dyadic boxes are built on the (folded) box $(\bfs, \ind + \bfs]$ (\textsf{mod} $[0,1)^d$) for uniformly random $\bfs \sim [0,1)^d$. Here, any coordinate exceeding $1$ is folded back to $[0,1)$ in every dimension. To simplify notation, we only describe the algorithm for the shift $\bfs = \mathbf{0}$. 

$\SubgSparsification$ starts with a set $A_0$ consisting of $n_0 = n^2$ independent random samples from $[0,1]^d$.
We assume wlog that $n$ is an integer power of $2$.
Let us denote $A_0$ by $A_0^{(0)}$.

Let $T:= \log n$. At each step $0\leq t<T$, we have $2^t$ sets $A_t^{(i)}$ for $0 \leq i < 2^t$, and
each set $A_t^{(i)}$  is split into two equal-sized sets $A_{t+1}^{(2i)}$ and $A_{t+1}^{(2i+1)}$ as described below.
For ease of notation, let $A_t$ denote a generic set $A_t^{(i)}$.

To split $A_t$, 
we run the $\SubgVecBal$ algorithm from \Cref{subsec:subg_disc} on the following set of vectors. For the $j$th point $\bfz^j \in A_t$, define $\bfv^j$ to be the stacking\footnote{To the best of our knowledge, the idea of stacking vectors to ensure subgaussianity of the colorings first appeared in \cite{HSSZ24}, and was also crucially exploited in other works (e.g., \cite{BJM+22}).} of (1) the standard basis vector $e^{A_t}_j \in \{0,1\}^{|A_t|}$, with a single one in the $j$th coordinate, and (2) the incidence vector of $\bfz^j$ w.r.t. the dyadic boxes  $\D_{\leq h}^{\otimes d}$. That is,
\begin{align} \label{eq:stacked_vector}
\bfv^j = \Big( e^{A_t}_j , \big(\ind_{\{\bfz^j \in B\}}\big)_{B \in \D_{\leq h}^{\otimes d}} \Big) ,
\end{align}
where $\ind_{\{\bfz^j \in B\}}$ is $1$ if $\bfz^j \in B$ and $0$ otherwise. 

Let $\bfx^t \in \{\pm 1\}^{|A_t|}$ denote the (balanced) coloring produced by $\SubgVecBal$.
We use this to split $A_t$ into the sets $A_{t+1} := \{\bfz^j \in A_t: \bfx^t_j = -1\}$ and $A'_{t+1} := A_t \setminus A_{t+1} = \{\bfz^j \in A_t: \bfx^t_j = +1\}$. 
As $\bfx$ is balanced, $|A_{t+1}| = |A_t|/2$, and consequently the final sets $A_T$ satisfy $|A_T| = n$. 

\begin{remark}
By symmetry of $\SubgVecBal$,  
the  $2^T$ sets $A_T^{(0)}, \cdots, A_T^{(2^T-1)}$ have the same distribution, and so the superscripts do not matter while discussing the properties of these sets below. For concreteness, we use $A_t$ to refer to $A_t^{(0)}$. Also notice that the algorithm does not depend in any way on the function $f$ to be integrated.
\end{remark}

\subsection{Properties of $\SubgSparsification$}
\label{subsec:property_subg_sparse}

\noindent \textbf{Runtime.} $\SubgSparsification$ has overall runtime $\widetilde{O}_d(n^2)$, and hence $\widetilde{O}_d(1)$ amortized time per output point. This is because at each level $t$ of the algorithm, the vectors $\bfv^j$ are all $\widetilde{O}_d(1)$-sparse, and there are $n^2$ such vectors overall in the $2^t$ sets $A_t^{(0)}, \cdots, A_t^{(2^t-1)}$. By \Cref{thm:self_bal_walk}, the algorithm takes $\widetilde{O}_d(n^2)$ time in each level, the total runtime bounds  follows as there are $T = \log n$ levels. 

\begin{remark}[Implicit Representation of Vectors]
Strictly speaking, the vectors $\bfv^j$ actually have dimension $n^{O(d)}$. But since they are all $\widetilde{O}_d(1)$-sparse, the algorithm represents these vectors implicitly by only recording their non-zero coordinates. 
\end{remark}

\smallskip
\noindent \textbf{Subgaussianity.}
Let us consider an iteration $t \in \{0,\cdots, T-1\}$ of the algorithm where $A_{t+1}$ is produced from $A_t$ by running $\SubgVecBal$.
For any set $C \subseteq [0,1]^d$, let
\begin{align} \label{eq:comb_disc_defn}
\disc_t(C) &:=  | \{\bfz^j \in C \cap A_t \text{ with } \bfx^t_j = +1\}| - | \{\bfz^j \in C \cap A_t \text{ with } \bfx^t_j = -1\}| \nonumber \\
& = |C \cap A_t| - 2 |C \cap A_{t+1}|. 
\end{align}
denote the combinatorial discrepancy of $C \cap A_t$. In particular, we denote $\disc_t(\bfz) := \disc_t(C_{\bfz})$ for any $\bfz \in [0,1]^d$. 
Our algorithm guarantees subgaussianity of the coloring $\bfx^t$ as well as the combinatorial discrepancy for all left-open dyadic boxes $C \in \D_{\leq h}^{\otimes d}$.  
Let us denote this vector as $\bfd^{t,\D} \in \Z^{|\D_{\leq h}^{\otimes d}|-1}$ with coordinate $\bfd^{t,\D}_B := \disc_t(B)$ for any left-open dyadic box $B \in \D_{\leq h}^{\otimes d} \setminus C_{\ind}$.

\begin{lemma}[Subgaussianity for Dyadic Boxes] \label{lem:subg_dyadic}
For any $0 \leq t \leq T-1$, $\SubgSparsification$ satisfies (1) $\disc_t([0,1]^d) = 0$, and (2) the random vector $(\bfx^t, \bfd^{t,\D})$ is $O_d(\log^{d+1} n)$-subgaussian. 
\end{lemma}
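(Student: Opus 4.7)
The plan is to observe that both parts of the lemma follow from what the algorithm actually computes, namely the stacked discrepancy vector $\sum_j \bfx^t_j \bfv^j$ produced by one application of $\SubgVecBal$ on the vectors defined in \eqref{eq:stacked_vector}, together with the balancedness and subgaussianity guarantees of $\SubgVecBal$.

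First, for part (1): the $\SubgVecBal$ algorithm (unlike $\SubgSO$ as stated in \Cref{thm:self_bal_walk}) is modified to output a balanced coloring $\bfx^t \in \{\pm 1\}^{|A_t|}$, i.e., $\sum_j \bfx^t_j = 0$. Since every point of $A_t$ lies in $[0,1]^d$, the definition \eqref{eq:comb_disc_defn} gives $\disc_t([0,1]^d) = \sum_j \bfx^t_j = 0$, which is exactly the required statement.

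For part (2), the plan is to identify $(\bfx^t, \bfd^{t,\D})$ with $\sum_j \bfx^t_j \bfv^j$ and then invoke the subgaussianity of $\SubgVecBal$. Writing out the stacked vector, one coordinate block gives $\sum_j \bfx^t_j e^{A_t}_j = \bfx^t$ trivially, while for each $B \in \D_{\leq h}^{\otimes d} \setminus \{C_{\ind}\}$, $\sum_j \bfx^t_j \ind_{\{\bfz^j \in B\}} = \disc_t(B) = \bfd^{t,\D}_B$ by \eqref{eq:comb_disc_defn}. So the stacked discrepancy vector of the vector-balancing instance is precisely $(\bfx^t, \bfd^{t,\D})$. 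Next, I will bound the Euclidean norm of each $\bfv^j$: the first block contributes $1$, while the second block counts dyadic boxes in $\D_{\leq h}^{\otimes d}$ that contain $\bfz^j$. A point in $[0,1]^d$ lies in exactly one dyadic interval per level per dimension, hence in exactly $(h+1)^d$ boxes in $\D_{\leq h}^{\otimes d}$. Therefore $\|\bfv^j\|_2^2 \le 1 + (h+1)^d = O_d(\log^d n)$, using $h = O(\log(dn))$. The paired-and-rescaled instance fed to $\SubgSO$ consists of unit vectors, and the ambient dimension is $|A_t| + |\D_{\leq h}^{\otimes d}| = n^{O(d)}$ with at most $n^2$ vectors, so \Cref{thm:self_bal_walk} yields a subgaussian parameter of $O(\log(mn)) = O_d(\log n)$ for the rescaled discrepancy vector. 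Unscaling by the norm bound $\|\bfv^j\|_2^2 = O_d(\log^d n)$, we conclude that $(\bfx^t, \bfd^{t,\D})$ is $O_d(\log^{d+1} n)$-subgaussian, matching the claim.

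The two technical points worth flagging are (i) correctly accounting for the squared-norm rescaling factor when going from the unit-norm guarantee of \Cref{thm:self_bal_walk} to our actual vectors, and (ii) verifying that the balancedness modification from $\SubgSO$ to $\SubgVecBal$ (pairing adjacent vectors, which at most doubles the relevant norm) preserves subgaussianity of the full stacked vector $\sum_j \bfx^t_j \bfv^j$ rather than merely of the paired version $\sum_k \bfy_k (\bfv^{2k-1}-\bfv^{2k})$. The latter is the main (small) obstacle: one must check that the identity $\sum_j \bfx^t_j \bfv^j = \sum_k \bfy_k(\bfv^{2k-1}-\bfv^{2k})$ lets us transfer the $\SubgSO$ guarantee, with at most a constant-factor loss from the norm bound $\|\bfv^{2k-1}-\bfv^{2k}\|_2 \le 2\max_j \|\bfv^j\|_2$, which is absorbed into the $O_d(\cdot)$ notation. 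Everything else is a routine composition of the norm bound with \Cref{thm:self_bal_walk}.
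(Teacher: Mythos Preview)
Your proof is correct and follows essentially the same approach as the paper: both argue part (1) from balancedness of $\SubgVecBal$, and part (2) by identifying $(\bfx^t,\bfd^{t,\D})$ with (a sub-vector of) the stacked discrepancy vector $\sum_j \bfx^t_j \bfv^j$, bounding $\|\bfv^j\|_2^2 \le 1+(h+1)^d = O_d(\log^d n)$ via the count of dyadic boxes containing a point, and invoking \Cref{thm:self_bal_walk} with rescaling. Your additional discussion of the pairing modification and the norm-rescaling bookkeeping is more explicit than the paper's terse ``by scaling,'' but the argument is the same.
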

\begin{proof}
The first claim follows because $\SubgVecBal$ always produces a balanced coloring. 

Next, for every point $\bfz^j \in A_t$, the vector $\bfv^j$ is $(h+1)^d$-sparse. This follows as for any fixed choice of $h_1,\cdots,h_d$ with $0\leq h_i\leq h$,  the point $\bfz^j$ lies in exactly one dyadic interval of size $2^{-h_1}\times \cdots \times 2^{-h_d}$.
 Thus $\|\bfv^j\|_2^2 = O_d(\log^d n)$, and the claim follows from Theorem \ref{thm:self_bal_walk} (by scaling) as we run $\SubgVecBal$ on the at most $n^2$ vectors $\bfv^j$. 
\end{proof}

\subsection{A Formula for Continuous Discrepancy}
Recall that in \Cref{lem:Hlawka_Zaremba_formula}, the integration error of a point set $A$ depends on the (scaled) continuous discrepancy of $A$ w.r.t. corners, i.e. $h(\bfz) = D(A,C_{\bfz})/|A|$.  We now give an expression for it.

For each $t \in \{0, \cdots, T\}$, we denote $n_t := |A_t| = 2^{-t}n_0$. 
For any measurable set $C \subseteq [0,1]^d$, 
denote $n_t(C) := |C \cap A_t|$, and $h_t(C) := D(A_t,C)/n_t = \vol(C) - n_t(C)/n_t$.
\begin{lemma}
\label{lem:decomp_cont_disc}
For any measurable set $C \subseteq [0,1]^d$, we have
$h_T(C) 
 = h_0(C) + \sum_{t=0}^{T-1} \disc_t(C)/n_t$.
\end{lemma}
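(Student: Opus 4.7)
The plan is a direct telescoping computation from the definitions, with no real obstacle. First I would substitute the definition $h_t(C) = \vol(C) - n_t(C)/n_t$ on both sides of the claimed identity. The $\vol(C)$ terms appearing in $h_T(C)$ and $h_0(C)$ cancel, so the identity reduces to showing
\[
\frac{n_0(C)}{n_0} - \frac{n_T(C)}{n_T} \;=\; \sum_{t=0}^{T-1} \frac{\disc_t(C)}{n_t}.
\]

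Next I would rewrite the summand on the right using the algorithm's halving structure. By \eqref{eq:comb_disc_defn} we have $\disc_t(C) = n_t(C) - 2 n_{t+1}(C)$, and since each split is balanced we have $n_{t+1} = n_t/2$, so $2 n_{t+1}(C)/n_t = n_{t+1}(C)/n_{t+1}$. Substituting gives
\[
\frac{\disc_t(C)}{n_t} \;=\; \frac{n_t(C)}{n_t} - \frac{n_{t+1}(C)}{n_{t+1}}.
\]

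Finally, summing this identity over $t = 0, 1, \ldots, T-1$ telescopes to $n_0(C)/n_0 - n_T(C)/n_T$, which is exactly what was needed. The argument is entirely elementary, and the only thing worth highlighting is that it crucially uses the balanced property of $\SubgVecBal$ (so that $n_{t+1} = n_t/2$ exactly, not merely in expectation); without this, the clean telescoping would break.
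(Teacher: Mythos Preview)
Your proposal is correct and is essentially the same telescoping argument as the paper's own proof: both reduce to the identity $\disc_t(C)/n_t = n_t(C)/n_t - n_{t+1}(C)/n_{t+1}$, which follows from \eqref{eq:comb_disc_defn} and the balanced property $n_{t+1} = n_t/2$, and then sum over $t$. The only cosmetic difference is that the paper telescopes starting from $h_T(C)$, while you first cancel $\vol(C)$ and telescope the right-hand side.
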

\begin{proof}
As $h_T(C) = \vol(C) -n_T(C)/n_T$ and $\vol(C) = h_0(C) + n_0(C)/n_0$, we can write $h_T(C)$ as the telescoping sum
\[ 
h_T(C) =  
h_0(C) + \sum_{t=0}^{T-1} \Big(\frac{n_t(C)}{n_t} - \frac{n_{t+1}(C)}{n_{t+1}}\Big) .\]
As $n_t(C) = |C \cap A_t|$ and $n_t= 2 n_{t+1}$ and using  \eqref{eq:comb_disc_defn}, each summand above is exactly
\begin{align*} 
\frac{n_t(C)}{n_t} - \frac{n_{t+1}(C)}{n_{t+1}} 
& = \frac{|C \cap A_t|}{n_t} - \frac{2 |C \cap A_{t+1}|}{n_t} = \frac{\disc_t(C)}{n_t}. \qquad \qedhere 
\end{align*}
\end{proof}

\section{Achieving Best of Both Worlds}
\label{sec:best-of-both}

In this section, we prove that the numerical integration error of $\SubgSparsification$ is, up to logarithmic factors, at least as good as the better of Monte Carlo and quasi-Monte Carlo methods.

\BestofBothWorlds*

We break the proof of  \Cref{thm:best-of-both-1} into three parts below. Exactly the same argument, but using Theorem \ref{thm:main_high_dim} instead of Lemma \ref{lem:better_than_MC},  gives the improved bound in Theorem \ref{thm:best-of-both-2}, restated below.
\BestofBothWorldsSO*

\begin{lemma}[Unbiasedness] \label{lem:unbiasedness}
For each $t=0,\cdots,T$, we have $\E[\err(A_t,f)] = 0$. 
\end{lemma}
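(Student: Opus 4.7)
The plan is to prove this by induction on $t$, with the base case being immediate from the construction and the inductive step relying on the symmetry property of $\SubgVecBal$ (as noted in \Cref{subsec:subg_disc}, the coloring $\bfx^t$ is symmetric, i.e., $\bfx^t$ and $-\bfx^t$ are equidistributed conditional on the input vectors).

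For the base case $t = 0$, recall that $A_0$ consists of $n_0 = n^2$ i.i.d.\ uniformly random samples from $[0,1]^d$. Since $\E[f(\bfz)] = \overline{f}$ for $\bfz$ uniform on $[0,1]^d$, linearity gives $\E[\overline{f}(A_0)] = \overline{f}$, hence $\E[\err(A_0, f)] = 0$. The random shift $\bfs$ of the dyadic system plays no role here, as $A_0$ is chosen before any coloring step.

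For the inductive step, suppose $\E[\err(A_t, f)] = 0$, i.e., $\E[\overline{f}(A_t)] = \overline{f}$. I would condition on $A_t$ (and on the random shift $\bfs$, and on all earlier colorings) and observe that the coloring $\bfx^t$ produced by $\SubgVecBal$ on the stacked vectors $\bfv^j$ in \eqref{eq:stacked_vector} is symmetric: $\bfx^t$ and $-\bfx^t$ have the same conditional distribution. Since flipping $\bfx^t \mapsto -\bfx^t$ swaps the roles of $A_{t+1} := \{\bfz^j : \bfx^t_j = -1\}$ and $A'_{t+1} := \{\bfz^j : \bfx^t_j = +1\}$, the sets $A_{t+1}$ and $A'_{t+1}$ are identically distributed conditional on $A_t$. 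Because they are balanced and partition $A_t$, we have $\overline{f}(A_t) = \tfrac{1}{2}\overline{f}(A_{t+1}) + \tfrac{1}{2}\overline{f}(A'_{t+1})$, and taking conditional expectations yields
\[
\E\big[\overline{f}(A_{t+1}) \,\big|\, A_t\big] = \E\big[\overline{f}(A'_{t+1}) \,\big|\, A_t\big] = \overline{f}(A_t).
\]
Taking expectations over $A_t$ and invoking the inductive hypothesis gives $\E[\overline{f}(A_{t+1})] = \overline{f}$, i.e., $\E[\err(A_{t+1}, f)] = 0$.

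There isn't really a main obstacle here; the only subtle point is to confirm that the symmetry of $\SubgVecBal$ survives the balancing modification described at the end of \Cref{subsec:subg_disc} (pairing adjacent vectors before running the \textsf{Self-Balancing Walk}). Since the \textsf{Self-Balancing Walk} produces a symmetric coloring on the paired vectors, and unpairing preserves symmetry, this is immediate. No property of $f$ beyond integrability is needed, so the argument works uniformly for every $f$.
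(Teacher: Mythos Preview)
Your proposal is correct and uses essentially the same idea as the paper: the symmetry of the coloring produced by $\SubgVecBal$. The paper phrases it slightly differently --- it directly computes $\E[\ind_{\{\bfz \in A_t\}}] = 2^{-t}$ for each initial point $\bfz \in A_0$ (which follows from the symmetry at each step) rather than doing an explicit induction via $\E[\overline{f}(A_{t+1})\mid A_t]=\overline{f}(A_t)$ --- but the content is the same.
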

\begin{proof}
As each sample $\bfz^j \in A_0$ is picked uniformly in $[0,1]^d$, we have $\E_{A_0} [f(\bfz^j)]=\overline{f}$, and the claim holds for $t=0$.
For $t\geq 1$ we have,
\[  
\E[\overline{f}(A_t)] = \frac{1}{n_t}  \E\Big[ \sum_{\bfz \in A_t} f(\bfz) \Big] = \frac{1}{n_t}\E_{A_0}\Big[\sum_{\bfz \in A_0}  f(\bfz) \cdot \E\big[\ind_{\{\bfz \in A^t\}}\big] \Big], 
\]
where the inner expectation is over the randomness of $\SubgVecBal$.  

As $\SubgVecBal$ outputs a symmetric distribution over colorings, each $\bfz \in A_{t-1}$ lies in $A_{t}$ with probability exactly $1/2$. So $\E\big[\ind_{\{\bfz \in A^t\}}\big] = 2^{-t} = n_t/n_0$ and the RHS
is  $\frac{1}{n_0} \E_{A_0}[\sum_{\bfz \in A_0} f(\bfz) ] =  \overline{f}$.
\end{proof}

We now prove that $\SubgSparsification$ has no worse error than quasi-Monte Carlo. 
This is essentially the folklore proof of the transference principle, and we give it here for completeness.

\begin{lemma}[$\SubgSparsification$ and QMC] \label{lem:better_than_QMC}
With high probability, 
 $|\err(A_T,f)| = \widetilde{O}_d (V_{\HK}(f)/n)$. 
\end{lemma}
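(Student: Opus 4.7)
The plan is to invoke the $\ell_\infty/\ell_1$-Koksma--Hlawka inequality $|\err(A_T, f)| \leq D^*(A_T) \cdot V_{\HK}(f)/n$, where $D^*(A_T) = n_T \cdot \sup_{\bfz \in [0,1]^d} |h_T(C_\bfz)|$ is the star discrepancy of $A_T$. So it suffices to prove that $D^*(A_T) = \widetilde{O}_d(1)$ with high probability, and the target bound follows immediately.

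The first step is to apply the telescoping identity from \Cref{lem:decomp_cont_disc}, which gives, for every $\bfz \in [0,1]^d$,
\[
h_T(C_\bfz) = h_0(C_\bfz) + \sum_{t=0}^{T-1} \frac{\disc_t(C_\bfz)}{n_t}.
\]
I would bound $|h_0(C_\bfz)|$ first: since $A_0$ consists of $n_0 = n^2$ i.i.d.\ uniform samples and the family of corners has VC dimension $d$, standard uniform convergence (or a direct Chernoff plus union bound over a $2^{-h}$-grid with $h = O(\log(dn))$) yields $\sup_\bfz |h_0(C_\bfz)| = \widetilde{O}_d(1/n)$ with high probability.

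The second and main step is to bound $|\disc_t(C_\bfz)|$ uniformly in $\bfz$ for each $t$. I would first reduce to dyadic grid points: round $\bfz$ to the nearest $\bfz' \in 2^{-h}\Z^d \cap [0,1)^d$; the slab between $C_\bfz$ and $C_{\bfz'}$ has volume $O_d(2^{-h})$, so it contains $\widetilde{O}_d(1)$ points of $A_t$ and its combinatorial discrepancy is trivially $\widetilde{O}_d(1)$ (taking $h$ large enough). For $\bfz'$ on the grid, the corner $C_{\bfz'}$ can be written as a disjoint union of at most $O_d(\log^d n)$ dyadic boxes in $\D_{\leq h}^{\otimes d}$, so $\disc_t(C_{\bfz'})$ is a sum of $\widetilde{O}_d(1)$ coordinates of the subgaussian vector $\bfd^{t,\D}$ from \Cref{lem:subg_dyadic}. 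Hence $\disc_t(C_{\bfz'})$ is itself $\widetilde{O}_d(1)$-subgaussian, and a union bound over the $\poly(n)$ dyadic corners and the $T = \log n$ levels yields $\sup_t \sup_{\bfz'} |\disc_t(C_{\bfz'})| = \widetilde{O}_d(1)$ with high probability.

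Putting the pieces together, since $n_t = n_0/2^t = n^2/2^t$, we have $\sum_{t=0}^{T-1} 1/n_t = (2^T-1)/n^2 \leq 1/n$, so
\[
\sup_\bfz |h_T(C_\bfz)| \leq \widetilde{O}_d(1/n) + \sum_{t=0}^{T-1} \frac{\widetilde{O}_d(1)}{n_t} = \widetilde{O}_d(1/n),
\]
which gives $D^*(A_T) = n \cdot \widetilde{O}_d(1/n) = \widetilde{O}_d(1)$ and completes the proof. The only mildly nontrivial step is the rounding-plus-union-bound argument needed to pass from dyadic corners to arbitrary corners, but this is a standard ingredient of the transference principle and does not present a real obstacle.
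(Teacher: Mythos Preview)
Your proposal is correct and follows essentially the same route as the paper's proof: invoke Koksma--Hlawka to reduce to bounding the star discrepancy, apply the telescoping formula from \Cref{lem:decomp_cont_disc}, control the initial term via Chernoff on the $n^2$ uniform samples, control each $\disc_t$ term via dyadic decomposition of corners combined with the subgaussianity of $\bfd^{t,\D}$ from \Cref{lem:subg_dyadic}, and handle the passage from grid corners to arbitrary corners by the standard strip argument. The only cosmetic difference is ordering: the paper first bounds the continuous discrepancy of dyadic boxes and then passes to corners, whereas you work with corners throughout and decompose into dyadic boxes inside the $\disc_t$ term; both arrangements use the same ingredients.
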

\begin{proof}
By the Koksma-Hlawka inequality \eqref{eq:Koksma_Hlawka_ineq}, it suffices to show $D(A_T, \mathcal{C}^d) \leq \widetilde{O}_d(1)$ whp. 

To this end, we first prove that $D(A_T, \D_{\leq h}^d) \leq \widetilde{O}_d(1)$ whp. Fix a dyadic box $B \in \D_{\leq h}^d$.  By \Cref{lem:decomp_cont_disc} 
we have
\begin{align}
\label{eq:qmc-rel}
D(A_T, B) = n_T \cdot h_T(B)  = n_T \cdot h_0(B) + n_T \cdot \sum_{t=0}^{T-1} \frac{\disc_t(B)}{n_t}
 = \frac{D(A_0, B)}{n} + \frac{1}{n} \sum_{t=0}^{T-1} 2^t \cdot \disc_t(B) .
\end{align}
where the last equality uses that $n_T=n, n_0=n^2$ and $n_t = n_0/2^t$.

We note that $D(A_0, B) = \sum_{\bfz \in A_0} \big(\E[\ind_{\{\bfz \in B\}}] - \ind_{\{\bfz \in B\}}\big)$, where $\ind_{\{\bfz \in B\}} \in \{0,1\}$ are i.i.d.~random variables. Consequently, by Chernoff bound and a union bound over the $O_d(n^d)$ dyadic boxes $B$, we have that $D(A_0, B)  \leq O( (n_0 \log n^d)^{1/2}) = O_d(n \log^{1/2} n)$ whp for all $B \in \D_{\leq h}^{\otimes d}$. So the first term $D(A_0, B)/n$ on the right side of \eqref{eq:qmc-rel} is $\widetilde{O}_d(1)$. 

Similarly, the second term on the right is $\widetilde{O}_d(1)$ as $2^T/n=1$ and
by \Cref{lem:subg_dyadic} the vector $\bfd^{t,\mathcal{D}}$ is  $O(\log^{d+1} n)$-subgaussian, and hence in particular each coordinate $\disc_t(B)$ is $O(\log^{d+1} n)$ subgaussian. Thus $D(A_T, B) \leq O_d(\log^{d/2 + 1} n)$ whp for all $B \in \D_{\leq h}^{\otimes d}$. 

As each corner $C \in \mathcal{C}_h^{\otimes d}$ can be decomposed as at most $O_d(\log^d n)$ dyadic boxes, by subgaussianity we also have that  $D(A_T, C) \leq    O_d(\log^{d + 1} n) = \widetilde{O}_d(1)$ whp for all $C \in \mathcal{C}_h^{\otimes d}$.
Finally, since the initial set of samples $A_0$ is uniformly random, the number of samples in every strip of width $2^{-h}$, i.e. strips in $\D_h \times [0,1]^{\otimes (d-1)}$ is at most $\widetilde{O}_d(1)$. Consequently, each corner $C \in \mathcal{C}^d$ (recall that this is the infinite set of all possible corners) has the same discrepancy as some corner in $\mathcal{C}_h^{\otimes d}$ up to an additive $\widetilde{O}_d(1)$ term. 
\end{proof}

Finally, $\SubgSparsification$ has error at most that of Monte Carlo (up to $\widetilde{O}_d(1)$ factors)\footnote{This factor can be reduced to $O(\log n)$ if one cares, by replacing each $e_j^{A_t}$ in \eqref{eq:stacked_vector} by $(h+1)^{d/2} e_j^{A_t}$. By rescaling, it is easily verified that the coloring $\bfx^t$ produced by the ALS algorithm will be $O(\log n)$-subgaussian.}.

\begin{lemma}[$\SubgSparsification$ and Monte Carlo] \label{lem:better_than_MC}
The error satisfies $\E[\err(A_T,f)^2] \leq \widetilde{O}_d \Big( \frac{\sigma(f)^2}{n} \Big).
$
\end{lemma}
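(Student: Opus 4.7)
The plan is to treat $\overline{f}(A_0), \overline{f}(A_1), \ldots, \overline{f}(A_T)$ as a martingale under the filtration $\mathcal{F}_t := \sigma(A_0, \bfx^0, \ldots, \bfx^{t-1})$, and then bound each martingale increment using subgaussianity of $\bfx^t$. Setting $M_t := \overline{f}(A_t) - \overline{f}$, the fact that $\SubgVecBal$ outputs a balanced, symmetric distribution means each element of $A_t$ lies in $A_{t+1}$ with conditional probability $1/2$, so $\E[M_{t+1} \mid \mathcal{F}_t] = M_t$ (this is essentially the computation behind \Cref{lem:unbiasedness}). Writing the increments $F^t := M_{t+1} - M_t$, martingale orthogonality gives
\[
\E[\err(A_T, f)^2] \;=\; \E[M_0^2] \;+\; \sum_{t=0}^{T-1} \E[(F^t)^2] .
\]
The initial term is the classical Monte Carlo variance, $\E[M_0^2] = \sigma(f)^2 / n_0 = \sigma(f)^2/n^2$.

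To bound each $F^t$, I would use the balancedness $\sum_j \bfx^t_j = 0$ (guaranteed by \Cref{lem:subg_dyadic}) to replace $f$ by the centered function $g := f - \overline{f}$ for free, writing
\[
F^t \;=\; -\frac{1}{n_t} \sum_{\bfz^j \in A_t} \bfx^t_j \, g(\bfz^j) \;=\; -\frac{1}{n_t} \langle \bfx^t, \bfg^t \rangle,
\]
where $\bfg^t_j := g(\bfz^j)$. The crucial point is that the stacked construction in \eqref{eq:stacked_vector} includes the standard basis block $e_j^{A_t}$, which by \Cref{lem:subg_dyadic} ensures $\bfx^t$ alone (not merely the dyadic discrepancy vector $\bfd^{t,\mathcal{D}}$) is $O_d(\log^{d+1} n)$-subgaussian. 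Conditioning on $A_t$ and applying subgaussianity of the centered random variable $\langle \bfx^t, \bfg^t \rangle$ yields
\[
\E[(F^t)^2 \mid A_t] \;\leq\; \frac{O_d(\log^{d+1} n) \cdot \|\bfg^t\|_2^2}{n_t^2} .
\]

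Finally, I take expectation over $A_t$. By symmetry of $\SubgVecBal$, each $\bfz^j \in A_0$ lies in $A_t$ with marginal probability $n_t/n_0$, and since $A_0$ consists of i.i.d.\ uniform samples, $\E[\|\bfg^t\|_2^2] = \sum_{j=1}^{n_0} \Pr[\bfz^j \in A_t] \cdot \E[g(\bfz^j)^2] = n_t \cdot \sigma(f)^2$. Hence $\E[(F^t)^2] \leq \widetilde{O}_d(\sigma(f)^2/n_t)$, and the geometric sum $\sum_{t=0}^{T-1} 1/n_t = (2^T - 1)/n_0 \leq 1/n$ yields the desired bound $\E[\err(A_T,f)^2] \leq \widetilde{O}_d(\sigma(f)^2/n)$. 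I do not anticipate a substantial obstacle: the main subtlety is simply extracting subgaussianity of $\bfx^t$ on its own from the joint statement of \Cref{lem:subg_dyadic}, which is precisely the role played by the $e_j^{A_t}$ block in the stacking \eqref{eq:stacked_vector}.
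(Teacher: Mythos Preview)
Your proof is correct and follows essentially the same route as the paper: a martingale decomposition of $\err(A_t,f)$, subgaussianity of $\bfx^t$ from the $e^{A_t}_j$ block in \eqref{eq:stacked_vector} to bound each increment, the marginal probability $n_t/n_0$ to take expectation over $A_t$, and a geometric sum over $t$. Your explicit centering $g=f-\overline{f}$ via balancedness is a slight refinement over the paper's write-up, which works directly with $f$ and tacitly identifies $\E[f^2]$ with $\sigma(f)^2$.
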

\begin{proof}
As $
\err(A_{t},f)  = \overline{f}(A_{t})  - \overline{f}$  by definition, we can recursively write
$\err(A_{t+1},f) 
 = \overline{f}(A_{t+1}) -  \overline{f} =  \err(A_{t},f)  - \Delta_t$,
where 
\[  \Delta_t :=   \overline{f}(A_{t}) - \overline{f}(A_{t+1})   =  
\frac{1}{n_t} \sum_{\bfz \in A_t} f(\bfz) - \frac{1}{n_{t+1}} \sum_{\bfz \in A_{t+1}} f(\bfz)  = \frac{1}{n_t} \sum_{\bfz^j \in A_t} \bfx^t_j \cdot f(\bfz^j),\]
where we use that $n_{t+1}=n_t/2$ and $A_{t+1}$ is obtained from $A_t$ by keeping the $\bfz^j$ for which $\bfx^t_j=-1$.
\allowdisplaybreaks
Conditioned on $A_t$, 
as the coloring $\bfx^t$ produced by $\SubgVecBal$ is symmetric, $\E[ \Delta_t | A_t]= 0$. So,
\[\E \big[\err(A_{t+1},f)^2 \big] 
 = \E\Big[\E\big[\err(A_{t+1},f)^2 \big | A_t \big] \Big] \\
 = \E\Big[\err(A_t,f)^2 +  \E \big[\Delta_t^2 \big | A_t \big] \Big].\]
As $\bfx^t \in \{\pm 1\}^n$ is $O(\log^{d+1} n)$ subgaussian in \Cref{lem:subg_dyadic}, we have 
\[\E\Big[ \big(\sum_{j\in A_t} \bfx^t_j f(z^j) \big)^2 \Big] = O(\log^{d+1} n) \sum_{\bfz^j \in A_t} f(\bfz^j)^2\] and thus,
\begin{align*}
\E\big[\err(A_{t+1},f)^2 \big]-  \E\big[\err(A_t,f)^2 \big ] 
&  = 
\E\big[\E[ \Delta_t^2 |A_t]\big]   =\E \Big[ \frac{O(\log^{d+1} n)}{n_t^2} \cdot \sum_{\bfz^j \in A_t} f(\bfz^j)^2 \Big] \\
& =   \frac{O(\log^{d+1} n)}{n_t^2} \cdot \E\Big[ \sum_{\bfz^j \in A_0} f(\bfz^j)^2 \cdot \ind_{\{\bfz^j \in A_t\}} \Big] \\
& =  \frac{O(\log^{d+1} n)}{n_t n_0} \cdot \E\Big[ \sum_{\bfz^j \in A_0} f(\bfz^j)^2\Big] \leq  \frac{O(\log^{d+1} n)}{n_t} \cdot \sigma(f)^2.
\end{align*}

Summing the above over $t = 0, \cdots, T$, we obtain
\begin{align*}
\E \big[\err(A_T,f)^2 \big] = \E \big[\err(A_0,f)^2 \big] + \frac{O(\log^{d+1} n)}{n} \cdot \sigma(f)^2 .
\end{align*}
The lemma now follows as $\E \big[\err(A_0,f)^2 \big] = \sigma(f)^2/n_0 = \sigma(f)^2/n^2 $ and hence negligible. 
\end{proof}

Now we are ready to prove \Cref{thm:best-of-both-1} from the three lemmas above. 

\begin{proof}[Proof of \Cref{thm:best-of-both-1}]
The first statement of the theorem is given by \Cref{lem:unbiasedness}. 
For any $g,h: [0,1]^d \rightarrow \R$ for which $f = g + h$, note that $\err(A_T,f) = \err(A_T,g) + \err(A_T,h)$. Then the theorem follows immediately from \Cref{lem:better_than_QMC,lem:better_than_MC}. 
\end{proof}

\section{Beyond Hardy-Krause Variation}
\label{sec:full_proof}

In this section, we present the formal proof of  \Cref{thm:main_high_dim} which is restated below. 

\IntegralSubgHighDim*

\subsection{Proof for 1-D}
\label{sec:analysis_1D}

We start with the proof of \Cref{thm:main_high_dim} for $d=1$. This already contains most of the ideas, and the proof for higher dimensions will use several key lemmas from the $1$-d analysis.

Let $f \in L^2([0,1])$ be an arbitrary but fixed 1-periodic function with a continuous derivative.\footnote{These assumptions can be relaxed. See Remarks \ref{remark:smoothness} and \ref{remark:general_form} in \Cref{subsubsec:beyond_hardy-krause}.} As $\SubgSparsification$ applies a uniformly random $s \sim [0,1)$ to the dyadic system $\D_{\leq h}$, we may equivalently view it as keeping $\D_{\leq h}$ fixed but applying a random shift to $f$ and work with $f_s(x) := f((x + s) \textsf{ mod } [0,1))$, for a uniformly random $s \sim [0,1)$.  
We use $\E_s$ to denote expectation w.r.t. the random shift $s$. 
For technical convenience, we equivalently analyze the error $\err(A_T, f_s)$, where the effect of the random shift $s$ is completely subsumed in $f_s$ (i.e., $A_T$ does not depend on $s$).

In 1-d, the Hlawka-Zaremba identity in \Cref{lem:Hlawka_Zaremba_formula} gives that the integration error is \[
\err(A_T,f_s) = \int_0^1 h_T(z) \cdot f_s'(z) d z ,\]
where $h_t(z) = D(A_t, C_z)/n_t  = h_0(z) + \sum_{t=0}^{T-1} \disc_t(z)/n_t$ by \Cref{lem:decomp_cont_disc}. Thus we have,
\begin{equation} \label{eq:error_decomp_1D} 
\err(A_T,f_s)  = \err(A_0,f_s) + \sum_{t=0}^{T-1} \err^{\disc}_t(f_s) =: \err(A_0,f_s) +  \err^{\disc}(f_s) ,
\end{equation}
where $\err^{\disc}_t(f_s) :=\frac{1}{n_t}  \int_0^1 \disc_t(z)f_s'(z) d z$ denotes the {\em discrepancy error} at step $t$.

\smallskip
\noindent \textbf{Bounding the Discrepancy Error.} 
Define the vector $\bfu^{f_s} \in \R^{|\mathcal{C}_h|}$ with coordinates  
\begin{align} \label{eq:u_f_vec}
\bfu^{f_s}_j := \int_{j/2^h}^{(j+1) /2^h} f_s'(x) d x = f_s((j+1) /2^h) - f_s(j /2^h) 
\end{align}
for $j \in \{0, \ldots, 2^h-1\}$. 

Note that the combinatorial discrepancy for prefixes $\disc_t(z)$ (in the formula for $\err_t^{\disc}$), viewed as a function in $z \in [0,1]$, is a step function that changes its value only at points in $A_t$. 
For a cleaner presentation of our analysis, we make the simplifying assumption that $\disc_t(z)$ changes values only at the {\em grid points} in $2^{-h} \Z \cap (0,1]$. This assumption is without loss of generality\footnote{\label{footnote:perturb_disc}This assumption can be removed by carefully defining a perturbation of $\disc_t(z)$ as a convex combination of $\disc_t(j/2^h)$ and $\disc_t((j+1)/2^h)$, where $z \in (j/2^h, (j+1)/2^h]$, that satisfies the assumption and \eqref{eq:disc_err_1D}. Using this perturbed discrepancy function, one may proceed with exactly the same analysis that is presented here.} and is only used for the purpose of analysis (but not the algorithm). 
With this assumption, we may view $\disc_t(z)$ as a vector $\bfd^{t, \mathcal{C}} \in \Z^{|\mathcal{C}_h|}$ with coordinates \[\bfd^{t, \mathcal{C}}_j = \disc_t(j/2^h)\] for $j \in \{0, \ldots, 2^h-1\}$.\footnote{Note that we have ignored $\disc_t(1)$ here, since this only corresponds to the single point $z = 1$ with measure $0$ and therefore doesn't contribute to the integral $\int_0^1 \disc_t(z) f_s'(z) d z$.} The superscript $\mathcal{C}$ in this notation is to remind the reader that $\bfd^{t, \mathcal{C}}$ is the combinatorial discrepancy of all left-open prefix intervals in $\mathcal{C}_h$, and should not to be confused with the combinatorial discrepancy vector $\bfd^{t, \mathcal{D}}$ of left-open dyadic boxes defined earlier in \Cref{subsec:property_subg_sparse}. 

Now the discrepancy error $\err^\disc_t$ can be written as an inner product 
\begin{align} \label{eq:disc_err_1D}
\err^\disc_t(f_s) = \frac{1}{n_t} \int_0^1 \disc_t(z) f_s'(z) d z 
& = \frac{1}{n_t} \sum_{j=0}^{2^h-1} \bfd^{t, \mathcal{C}}_j \cdot \bfu^{f_s}_j = \frac{1}{n_t} \cdot \big\langle \bfd^{t, \mathcal{C}} , \bfu^{f_s} \big\rangle . 
\end{align}
To make use of the subgaussianity of  $\bfd^{t, \mathcal{D}}$ in \Cref{lem:subg_dyadic}, we need to express the combinatorial discrepancy $\bfd^{t, \mathcal{C}}$ of prefix intervals in terms of  the dyadic boxes. 
This can be done using the dyadic decomposition matrix $P_h$ defined in \Cref{subsec:dyadic_decomp}:
\[
\bfd^{t, \mathcal{C}} = P_h \cdot \bfd^{t, \mathcal{D}} \quad \text{ and therefore } \quad \frac{1}{n_t} \cdot\big\langle \bfd^{t, \mathcal{C}} , \bfu^{f_s} \big\rangle = \frac{1}{n_t} \cdot\big\langle \bfd^{t, \mathcal{D}} , P_h^\top \bfu^{f_s} \big\rangle . 
\]
As $\bfd^{t, \mathcal{D}}$ is $O(\log^2 n)$-subgaussian by \Cref{lem:subg_dyadic}, it follows from Property \ref{property_2_lem:aug_trans_mat} in \Cref{lem:aug_trans_mat} that for any fixed outcome of the random shift $s$, $\big\langle \bfd^{t, \mathcal{D}} , P_h^\top \bfu^{f_s} \big\rangle$ is also subgaussian with parameter 
\[
O\big(\log^2 n \cdot \|P_h^\top \bfu^{f_s}\|_2^2\big) \leq O\big(\log^2 n \cdot \|\overline{P}_h^\top \bfu^{f_s}\|_2^2\big) .
\]
Thus our goal reduces to upper bounding  $\|\overline{P}_h^\top \bfu^{f_s}\|_2^2$. We start with the following claim. 
\begin{claim}[Coordinates of $\overline{P}_h^\top \bfu^{f_s}$]
\label{claim:coord_Pu_1D}
The vector $\overline{P}_h^\top \bfu^{f_s} \in \R^{|\widetilde{\D}_{\leq h}|-1}$ and its coordinates are given by $\big(\overline{P}_h^\top \bfu^{f_s}\big)_{\widetilde{I}} = f_s(r_{\widetilde{I}}) - f_s(\ell_{\widetilde{I}})$ for all dyadic intervals $\widetilde{I} \in \widetilde{\D}_{\leq h} \setminus [0,1)$.  Thus we have
\begin{align} \label{eq:l2_norm_1D}
\|\overline{P}_h^\top \bfu^{f_s}\|_2^2 = \sum_{\widetilde{I} \in \widetilde{\D}_{\leq h} \setminus [0,1)} \big(  f_s(r_{\widetilde{I}}) - f_s(\ell_{\widetilde{I}}) \big)^2 . 
\end{align}
\end{claim}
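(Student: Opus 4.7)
The plan is to verify the claim by a direct computation that exploits the column structure of $\overline{P}_h$ spelled out in Property~\ref{property_1_lem:aug_trans_mat} of \Cref{lem:aug_trans_mat}. Since the nonzero entries of column $\widetilde{I}$ of $\overline{P}_h$ appear at a contiguous range of rows---namely those indexed by $C_{z}$ for $z \in 2^{-h}\Z \cap \widetilde{I}$---taking the inner product of that column with $\bfu^{f_s}$ picks out a sum of consecutive increments $\bfu^{f_s}_\ell = f_s((\ell+1)/2^h) - f_s(\ell/2^h)$, which telescopes to $f_s(r_{\widetilde{I}}) - f_s(\ell_{\widetilde{I}})$. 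This is precisely the payoff of having defined $\overline{P}_h$ so that its columns are supported on contiguous dyadic blocks, rather than working directly with $P_h$, whose columns have no such structure and would not admit a telescoping.

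Concretely, I would fix a dyadic interval $\widetilde{I} = [a/2^h, b/2^h) \in \widetilde{\D}_{\leq h} \setminus \{[0,1)\}$ and expand
$(\overline{P}_h^\top \bfu^{f_s})_{\widetilde{I}} = \sum_{\ell} (\overline{P}_h)_{C_{\ell/2^h}, \widetilde{I}} \cdot \bfu^{f_s}_\ell$,
using the natural identification of rows of $\overline{P}_h$ (indexed by $\mathcal{C}_h$) with the coordinates of $\bfu^{f_s}$. By Property~\ref{property_1_lem:aug_trans_mat}, only indices $\ell \in \{a, a+1, \ldots, b-1\}$ contribute, each with coefficient $1$. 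Substituting $\bfu^{f_s}_\ell = f_s((\ell+1)/2^h) - f_s(\ell/2^h)$ from \eqref{eq:u_f_vec} and telescoping yields $f_s(b/2^h) - f_s(a/2^h) = f_s(r_{\widetilde{I}}) - f_s(\ell_{\widetilde{I}})$, which is the claimed coordinate formula. The $\ell_2$-norm identity \eqref{eq:l2_norm_1D} then follows by squaring and summing over all $\widetilde{I} \in \widetilde{\D}_{\leq h} \setminus \{[0,1)\}$.

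I do not anticipate any real obstacle here---the whole content of the claim has been encoded in advance into the construction of $\overline{P}_h$, so the proof is a bookkeeping exercise. The only item worth double-checking is the index convention that pairs the row of $\overline{P}_h$ indexed by the prefix $C_{\ell/2^h}$ with the coordinate $\bfu^{f_s}_\ell$ representing the increment of $f_s$ on $(\ell/2^h, (\ell+1)/2^h]$; this matching is the one already implicit in the inner product $\langle \bfd^{t,\mathcal{C}}, \bfu^{f_s}\rangle$ appearing in \eqref{eq:disc_err_1D} (with $\bfd^{t,\mathcal{C}}_\ell = \disc_t(\ell/2^h)$), so no ambiguity arises in the identification.
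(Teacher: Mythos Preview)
Your proposal is correct and matches the paper's proof essentially line for line: the paper also invokes Property~\ref{property_1_lem:aug_trans_mat} of \Cref{lem:aug_trans_mat} to identify the support of the column, writes $\big(\overline{P}_h^\top \bfu^{f_s}\big)_{\widetilde{I}} = \sum_{j:\, j/2^h \in \widetilde{I}} \bfu^{f_s}_j$, and collapses the sum to $f_s(r_{\widetilde{I}}) - f_s(\ell_{\widetilde{I}})$ (phrasing it as $\int_{\widetilde{I}} f_s'(z)\,dz$ rather than an explicit telescoping, which is the same computation).
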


\begin{proof}
By Property \ref{property_1_lem:aug_trans_mat} in \Cref{lem:aug_trans_mat}, the ones in the columns of $\overline{P}_h$ correspond to the right-open dyadic intervals in $\widetilde{\D}_{\leq h}\setminus [0,1)$. So
for any $\widetilde{I} \in \widetilde{\D}_{\leq h} \setminus [0,1)$, we have 
\[
\big(\overline{P}_h^\top \bfu^{f_s}\big)_{\widetilde{I}} = \sum_{j: j/2^h \in \widetilde{I}} \bfu^{f_s}_j = \int_{\widetilde{I}} f_s'(z) d z = f_s(r_{\widetilde{I}}) - f_s(\ell_{\widetilde{I}}) .
\]    
This directly implies \eqref{eq:l2_norm_1D} and  proves the claim. 
\end{proof}
Now, we  express the $\ell_2$ norm $\E_s[\|\overline{P}_h^\top \bfu^{f_s}\|_2^2]$ in terms of the Fourier coefficients of $f$. 

\smallskip
\noindent \textbf{Bounding the $\ell_2$ Norm via Fourier.} 
Recall the Fourier expansion from \Cref{subsec:Fourier_analysis}, 
\[
f_s(z) = \sum_{k \in \Z}\widehat{f_s}(k) \cdot \exp(2 \pi i k z) =: \sum_{k \in \Z}  \widehat{f_s}(k) \cdot e_k(z),
\]
where $e_k(z)$ denotes the function $\exp(2 \pi i k z)$.
Also, by replacing $f_s$ with $f_s - \widehat{f_s}(0) = f_s - \widehat{f}(0)$ (this does not affect the integration error), we may assume wlog that $\widehat{f_s}(0) = 0$, so that $f_s(z) = \sum_{k \in \Z \setminus \{0\}}\widehat{f_s}(k) \cdot e_k(z)$. 
We need the following orthogonality property for different Fourier coefficients $\widehat{f_s}(k)$'s under the random shift $s$. 

\begin{lemma}[Fourier Orthogonality] \label{lem:orthogonality_fourier}
For any $k \neq k' \in \mathbb{Z}$, we have $\E_s\big[ \widehat{f_s}(k)^* \widehat{f_s}(k')\big] = 0$. 
\end{lemma}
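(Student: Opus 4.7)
The key idea is that the random shift acts as multiplication by a phase on the Fourier side, so orthogonality of the characters $e_k$ on $[0,1)$ transfers directly into orthogonality (across different $k$) of the random Fourier coefficients $\widehat{f_s}(k)$. Concretely, since $f_s(z) = f((z+s) \bmod 1)$ and $f$ is $1$-periodic, the substitution $y = (z+s) \bmod 1$ gives
\begin{equation*}
\widehat{f_s}(k) = \int_0^1 f((z+s) \bmod 1)\, e^{-2\pi i k z}\, dz = e^{2\pi i k s} \int_0^1 f(y)\, e^{-2\pi i k y}\, dy = e^{2\pi i k s}\, \widehat{f}(k).
\end{equation*}
This is just the standard shift identity for Fourier coefficients, and it is the only nontrivial input to the proof.

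Given this, the lemma follows by a one-line calculation. For any $k, k' \in \Z$,
\begin{equation*}
\widehat{f_s}(k)^*\, \widehat{f_s}(k') = e^{-2\pi i k s}\,\widehat{f}(k)^* \cdot e^{2\pi i k' s}\,\widehat{f}(k') = e^{2\pi i (k'-k) s}\,\widehat{f}(k)^*\,\widehat{f}(k').
\end{equation*}
Since $\widehat{f}(k)^* \widehat{f}(k')$ does not depend on $s$, taking expectation over $s \sim [0,1)$ leaves
\begin{equation*}
\E_s\!\big[\widehat{f_s}(k)^* \widehat{f_s}(k')\big] = \widehat{f}(k)^*\,\widehat{f}(k') \cdot \int_0^1 e^{2\pi i (k'-k) s}\, ds.
\end{equation*}
The integral equals $0$ whenever $k' \neq k$ (and $1$ when $k' = k$), which proves the claim.

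There is essentially no main obstacle here; the only subtlety is bookkeeping around the convention for $f_s$ (whether the shift is $+s$ or $-s$, and whether the Fourier transform uses $e^{-2\pi i k z}$ or $e^{+2\pi i k z}$), but any consistent choice produces the same phase $e^{2\pi i (k'-k) s}$ in the product, and the $s$-integral kills it when $k \neq k'$. The same argument also recovers $\E_s[|\widehat{f_s}(k)|^2] = |\widehat{f}(k)|^2$, which will be useful when expanding the $\ell_2$ norm in \eqref{eq:l2_norm_1D} into a sum over Fourier frequencies in the next step of the analysis.
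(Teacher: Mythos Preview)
Your proof is correct and follows essentially the same approach as the paper: derive the shift identity $\widehat{f_s}(k) = e_k(s)\,\widehat{f}(k)$ and then invoke orthogonality of the Fourier characters $\int_0^1 e_{k'-k}(s)\,ds = 0$ for $k\neq k'$. Your write-up is slightly more explicit in expanding the product before integrating, but the argument is the same.
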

\begin{proof}
Since $f_s(x) = f((x + s) \textsf{ mod } [0,1))$,  
we have
\[
\widehat{f_s}(k) = \int_0^1 f((x+s) \textsf{ mod } [0,1)) \cdot e_k(-x) d x = \int_0^1 f(x) \cdot e_k(-x + s) d x = e_k(s) \cdot \widehat{f}(k) .
\]
The lemma then follows as $\int_0^1 e_k(-s) \cdot e_{k'}(s) d s = 0$ by the orthogonality of the Fourier basis. 
\end{proof}

Given \Cref{lem:orthogonality_fourier}, it suffices to bound $|\overline{P}_h^\top \bfu^{e_k}|^2$ for each $k \in \mathbb{Z}$.  
\begin{lemma}[Bounding $\ell_2$ Norm for Each $k$] \label{lem:ell_2_bound_1D}
For any $k \in \mathbb{Z} \setminus \{0\}$, we have $|\overline{P}_h^\top \bfu^{e_k}|_2^2 \lesssim |k|$. 
\end{lemma}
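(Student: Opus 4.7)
The strategy is to compute $\|\overline{P}_h^\top \bfu^{e_k}\|_2^2$ explicitly using \Cref{claim:coord_Pu_1D} and the column structure of $\overline{P}_h$ from Property~\ref{property_1_lem:aug_trans_mat} of \Cref{lem:aug_trans_mat}, then bound the resulting trigonometric sum by splitting into two natural regimes based on how $|k|$ compares to $2^\ell$, where $\ell$ is the level of the dyadic interval.

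\textbf{Step 1: Explicit formula.} By \Cref{claim:coord_Pu_1D}, each coordinate of $\overline{P}_h^\top \bfu^{e_k}$ indexed by a dyadic interval $\widetilde I\in \widetilde\D_{\leq h}\setminus[0,1)$ equals $e_k(r_{\widetilde I})-e_k(\ell_{\widetilde I})$. For an interval of length $2^{-\ell}$ we have $r_{\widetilde I}-\ell_{\widetilde I}=2^{-\ell}$, so
\[
\bigl|e_k(r_{\widetilde I})-e_k(\ell_{\widetilde I})\bigr|^2 \;=\; \bigl|e_k(\ell_{\widetilde I})\bigr|^2 \cdot \bigl|e_k(2^{-\ell})-1\bigr|^2 \;=\; 4\sin^2(\pi k\,2^{-\ell}),
\]
which depends only on the level $\ell$ of $\widetilde I$, not on its position. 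By Property~\ref{property_1_lem:aug_trans_mat} of \Cref{lem:aug_trans_mat}, there are exactly $2^\ell$ intervals at each level $\ell \in \{1,\dots,h\}$, so
\[
\|\overline{P}_h^\top \bfu^{e_k}\|_2^2 \;=\; \sum_{\ell=1}^{h} 2^\ell \cdot 4\sin^2(\pi k\,2^{-\ell}).
\]

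\textbf{Step 2: Split into two regimes and bound.} I split the levels at the threshold $2^\ell \approx |k|$. For levels with $2^\ell \leq 2|k|$, I simply use $\sin^2 \leq 1$ to get a contribution of
\[
\sum_{\ell:\,2^\ell \leq 2|k|} 4\cdot 2^\ell \;\leq\; 16\,|k|.
\]
For levels with $2^\ell > 2|k|$, the argument $\pi k\,2^{-\ell}$ is small, so $|\sin(\pi k\,2^{-\ell})| \leq \pi|k|\,2^{-\ell}$ and the contribution is bounded by
\[
\sum_{\ell:\,2^\ell > 2|k|} 2^\ell \cdot 4\pi^2 k^2\,2^{-2\ell} \;=\; 4\pi^2 k^2 \sum_{\ell:\,2^\ell>2|k|} 2^{-\ell} \;\leq\; 4\pi^2 k^2 \cdot \frac{2}{2|k|} \;=\; O(|k|).
\]
Summing the two regimes yields $\|\overline{P}_h^\top \bfu^{e_k}\|_2^2 = O(|k|)$, as claimed.

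\textbf{Main obstacle.} There is essentially no obstacle once the structural description of $\overline{P}_h$ from \Cref{lem:aug_trans_mat} is in hand: it converts the $\ell_2$ norm into a clean one-variable sum where the same $\sin^2$ factor appears $2^\ell$ times per level. The only mildly subtle point is the splitting of levels — intuitively, short dyadic intervals (large $\ell$) are too short for $e_k$ to oscillate much (so the per-interval contribution decays like $4^{-\ell} k^2$ while the count grows like $2^\ell$), while long intervals (small $\ell$) contribute at most the number of intervals $2^\ell$. Both regimes balance at $2^\ell \sim |k|$ and each contributes $O(|k|)$, giving exactly the linear dependence on $|k|$ that drives the improvement from $|k|^2$ (Hardy--Krause) down to $|k|$ in $\sigma_{\SO}$.
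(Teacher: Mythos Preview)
Your proof is correct and follows essentially the same route as the paper: invoke \Cref{claim:coord_Pu_1D} and the column structure of $\overline{P}_h$ to reduce to $\sum_{\ell=1}^h 2^\ell \cdot 4\sin^2(\pi k\,2^{-\ell})$, then split at $2^\ell\sim |k|$ using $\sin^2\le 1$ on the long-interval side and $|\sin x|\le |x|$ on the short-interval side. The only cosmetic differences are the choice of threshold ($2^\ell\le 2|k|$ versus the paper's $\ell\le \lfloor\log k\rfloor$) and that the paper first reduces to $k>0$ by symmetry rather than carrying $|k|$ throughout.
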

\begin{proof}
We may assume $k > 0$ as the negative terms can be bounded similarly.  
By \Cref{claim:coord_Pu_1D} and the identity $|\exp(2 x i) - 1| = 2 |\sin(x)|$ for $x \in \R$,  
\begin{align*}
|\overline{P}_h^\top \bfu^{e_k}|_2^2 
& = \sum_{\widetilde{I} \in \widetilde{\D}_{\leq h} \setminus [0,1)} |e_k(r_{\widetilde{I}}) - e_k(\ell_{\widetilde{I}})|^2 = \sum_{j = 1}^h \sum_{I \in \widetilde{\D}_j} |e_k( r_{\widetilde{I}}) - e_k(\ell_{\widetilde{I}})|^2 \\
& = \sum_{j = 1}^h \sum_{\widetilde{I} \in \widetilde{\D}_j} |e_k (r_{\widetilde{I}} - \ell_{\widetilde{I}}) - 1|^2 
= 4 \sum_{j = 1}^h \sum_{\widetilde{I} \in \widetilde{\D}_j} \sin^2(\pi k (r_{\widetilde{I}} - \ell_{\widetilde{I}})) \\
& = 4 \sum_{j = 0}^h 2^j \cdot  \sin^2(\pi k / 2^j) .
\end{align*}
We break the sum over $j$ into ``long'' intervals with $j \leq \lfloor \log k \rfloor$, for which we bound $\sin^2(\pi k / 2^j) \leq 1$ trivially, and ``short'' intervals with $j > \lfloor \log k \rfloor$, for which we use the bound $\sin(x) \leq x$ whenever $x \geq 0$. Specifically, we have
\begin{align*}
|\overline{P}_h^\top \bfu^{e_k}|_2^2
& = 4 \sum_{j=0}^{\lfloor \log k \rfloor} 2^j + 4 \sum_{j =\lfloor \log k \rfloor + 1}^h 2^j \cdot (\pi k / 2^j)^2  \leq 8 (1 + \pi^2) k .
\end{align*}
This completes the proof of the lemma.  
\end{proof}

Combining \Cref{lem:orthogonality_fourier} and \Cref{lem:ell_2_bound_1D}, we obtain the bound (also recall \eqref{eq:sigma_SO_formula_intro})
\begin{align} \label{eq:ell_2_bound_1D}
\E_s\big[\|\overline{P}_h^\top \bfu^{f_s} \|_2^2\big] 
& = \E_s\Big[\Big\|\overline{P}_h^\top \Big(\sum_{k \in \Z \setminus \{0\}} \widehat{f_s}(k) \bfu^{e_k} \Big)\Big\|_2^2\Big] \nonumber\\
& =\sum_{k \in \Z \setminus \{0\}} \E\big[|\widehat{f_s}(k)|^2\big] \cdot \|\overline{P}_h^\top \bfu^{e_k} \|_2^2 
\lesssim \sum_{k \in \Z \setminus \{0\}} |\widehat{f}(k)|^2 \cdot |k| = \sigma_{\SO}(f)^2 .
\end{align}
where note that shifting $f$ only changes the phase of its Fourier coefficients but not their modulus.

Using the $\ell_2$-norm bound in \eqref{eq:ell_2_bound_1D} and \eqref{eq:disc_err_1D}, we can bound the variance of $\err^{\disc}(f_s)$ as
\begin{align} \label{eq:disc_var_1D}
\Var\big(\err^{\disc}(f_s)\big) 
& = \Var \Big( \sum_{t=0}^{T-1} \frac{1}{n_t} \cdot \big\langle \bfd^{t,\mathcal{C}} , \bfu^{f_s} \big\rangle \Big) \leq O\Big(\frac{\log^2 n}{n^2}\Big) \cdot \sigma_{\SO}(f)^2 ,
\end{align}
where the variance is over all the randomness (including the random shift $s$), and the last inequality follows from a standard martingale argument and that $n_t =n^22^{-t}$ is geometrically decreasing.

Now we are ready to put everything together and prove the 1-d case of \Cref{thm:main_high_dim}.

\begin{proof}[Proof of \Cref{thm:main_high_dim} in 1-d]
Plugging \eqref{eq:disc_var_1D} into \eqref{eq:error_decomp_1D}, we obtain that
\begin{align*}
\Var(\err(A_T,f_s)) & \leq \Var(\err(A_0,f_s)) +  \Var\big(\err^{\disc}(f_s)\big) \\
& \leq \frac{\sigma(f)^2}{n^2} + O\Big(\frac{\log^2 n}{n^2}\Big) \cdot \sigma_{\SO}(f)^2 \leq  \widetilde{O}(1) \cdot \Big(\frac{\sigma_{\mathsf{SO}}(f)^2}{n^2} \Big).
\end{align*}
This completes the proof of \Cref{thm:main_high_dim} for $d = 1$. 
\end{proof}

\subsection{Analysis for Higher Dimensions}
\label{sec:analysis_high_dim}
Now we give the full proof of \Cref{thm:main_high_dim}. Even though some parts are similar to the  1-d analysis in \Cref{sec:analysis_1D}, we repeat them again for clarity.

Fix an arbitrary 1-periodic function $f \in L^2([0,1]^d)$ with a continuous mixed derivative.
Again, we can equivalently view $\SubgSparsification$ as fixing $\D_{\leq h}^{\otimes d}$ and applying a random shift $\bfs \sim [0,1)^d$ to $f$ which results in $f_{\bfs}(\bfz) = f(\bfz + \bfs)$. Denote $\E_{\bfs}$ the expectation w.r.t. random shift $\bfs$.

By \Cref{lem:Hlawka_Zaremba_formula}, the integration error  
\begin{align*}
\err(A_T,f_{\bfs}) = \sum_{\emptyset \neq S \subseteq [d]} (-1)^{|S|-1} \int_{Q_S} h_T(\bfz) \cdot \partial_S f_{\bfs}(\bfz)  d \bfz ,
\end{align*}
where recall that $h_T(\bfz) = D(A_t, C_{\bfz})/n_t$ and $\partial_S f_{\bfs}(\bfz) = \frac{\partial^{|S|} f_{\bfs}(\bfz)}{\partial \bfz_S}$. Again, by \Cref{lem:decomp_cont_disc} we have,
\begin{equation} \label{eq:error_decomp_high_dim}
\begin{aligned} 
\err(A_T,f_{\bfs}) & = \err(A_0,f_{\bfs}) + \sum_{t=0}^{T-1} \err^{\disc}_t(f_{\bfs}) =:\err(A_0,f_{\bfs}) +  \err^{\disc}(f_{\bfs}) ,
\end{aligned}
\end{equation}
but this time the discrepancy error is given by
\begin{align}
\label{eq:disct-d}
    \err^{\disc}_t(f_{\bfs}) & =\sum_{\emptyset \neq S \subseteq [d]} (-1)^{|S|-1} \int_{Q_S} \frac{\disc_t(\bfz)}{n_t} \cdot \partial_S f_{\bfs}(\bfz) d \bfz .
\end{align}
\smallskip
\noindent \textbf{Bounding the Discrepancy Error.} 
Now we analyze the discrepancy error $\err^{\disc}_t$ in \eqref{eq:disct-d}.

At a high level, we analyze the contribution from each $\emptyset \neq S \subseteq [d]$ using tensorization, and then exploit the joint subgaussianity of discrepancy on each $Q_S$ to combine them. 

Let us fix $\emptyset \neq S \subseteq [d]$ and consider the integral in \eqref{eq:disct-d} for $S$.
Again, let us assume wlog that the combinatorial discrepancy $\disc_t(\bfz)$ on $Q_S$ is a step function with values changing only when any coordinate reaches $2^{-h} \Z \cap [0,1)$ (recall \Cref{footnote:perturb_disc}). 
This way, we may view $\disc_t(\bfz)$ restricted to $Q_S$ as a vector\footnote{Once again, we have ignored the $0$-measure set of corners $C \subseteq Q_S$ with some dimension being $(0,1]$.} $\bfd^{t, \mathcal{C}, S} \in \Z^{|\mathcal{C}_h^{\otimes S}|}$ with coordinates\footnote{Strictly speaking, the index $\bfj$ here should really range over $\{0, \cdots, 2^h-1\}^S \times \{2^h\}^{\overline{S}}$ so that the coordinates in $\overline{S}$ of $\bfj/2^h$ are $1$. But to keep notation clear, even though we are working with $d$-dimensional vectors we only display the coordinates of $\bfj$ in $S$ 
which are the only ones that can vary. The same convention applies when we talk about corners $\mathcal{C}_h^{\otimes S}$, dyadic boxes $\mathcal{D}_{\leq h}^{\otimes S}$, and decomposition $P_h^{\otimes S}$.} 
$\bfd^{t, \mathcal{C}, S}_{\bfj} = \disc_t(\bfj/2^h)$ for all $\bfj \in \{0, \cdots, 2^h-1\}^S$ corresponding to corners $\mathcal{C}_h^{\otimes S}$. 
We also define the vector $\bfu^{f_{\bfs},S} \in \R^{|\mathcal{C}_h^{\otimes S}|}$ as
\[
\bfu^{f_{\bfs},S}_{\bfj} := \int_{[\bfj/2^h, (\bfj+ \ind_S)/2^h] } \partial_S f_{\bfs}(\bfz_S, \ind_{\overline{S}}) \cdot d \bfz_S . 
\]
for all $\bfj \in \{0, \cdots, 2^h-1\}^S$. It is possible to further express $\bfu^{f_{\bfs},S}_{\bfj}$ as a sum of function values at the vertices of $[\bfj/2^h, (\bfj+ \ind_S)/2^h] \times \ind_{\overline{S}}$ with mixed signs, but we do not need such a formula. 
Next, observe that the vector $\bfu^{f_{\bfs},S}$ takes a tensor form if the function $f_{\bfs}$ is the product of functions of each coordinate. 

\begin{observation}[Tensor Form for Product Functions]
\label{obs:tensor_form_u}
If $f_{\bfs}: [0,1]^d \rightarrow \C$ has the form $f_{\bfs}(\bfz) = \prod_{i \in S} g_i(\bfz_i)$ for all $\bfz \in Q_S$, then we have $\bfu^{f_{\bfs},S}_{\bfj} = \prod_{i \in S} \bfu^{g_i}_{\bfj_i}$ for any $\bfj \in \{0, \cdots, 2^h-1\}^S$, where $\bfu^{g_i}$ is defined in \eqref{eq:u_f_vec}. Consequently, we have $\bfu^{f_{\bfs},S} = \otimes_{i \in S} \bfu^{g_i}$.
\end{observation}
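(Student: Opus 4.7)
The plan is a direct unpacking of definitions. The statement asserts two things: (i) a coordinatewise product formula $\bfu^{f_{\bfs},S}_{\bfj} = \prod_{i \in S} \bfu^{g_i}_{\bfj_i}$, and (ii) the equivalent tensor identity $\bfu^{f_{\bfs},S} = \otimes_{i \in S} \bfu^{g_i}$. The second follows immediately from the first by the very definition of tensor product on product index sets, so the only real content is (i), which will be a routine application of the product rule and Fubini.

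First, observe that on $Q_S$ the coordinates in $\overline{S}$ are pinned at $1$, so the product hypothesis $f_{\bfs}(\bfz) = \prod_{i \in S} g_i(\bfz_i)$ applies precisely where the integrand in the definition of $\bfu^{f_{\bfs},S}_{\bfj}$ is evaluated. Differentiating coordinate by coordinate in $S$ (valid under the standing continuous mixed derivative assumption on $f$), one gets
\[
\partial_S f_{\bfs}(\bfz_S, \ind_{\overline{S}}) \,=\, \prod_{i \in S} g_i'(\bfz_i) \qquad \text{for } \bfz \in Q_S.
\]

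Second, plug this into the definition of $\bfu^{f_{\bfs}, S}_{\bfj}$ and apply Fubini on the product box $[\bfj/2^h, (\bfj+\ind_S)/2^h]$ to separate the integral into one-dimensional factors:
\[
\bfu^{f_{\bfs},S}_{\bfj} \,=\, \int_{[\bfj/2^h,\, (\bfj+\ind_S)/2^h]} \prod_{i \in S} g_i'(\bfz_i) \, d\bfz_S \,=\, \prod_{i \in S} \int_{\bfj_i/2^h}^{(\bfj_i+1)/2^h} g_i'(\bfz_i) \, d\bfz_i.
\]
By equation \eqref{eq:u_f_vec}, the $i$-th factor is exactly $\bfu^{g_i}_{\bfj_i}$, which establishes (i), and then (ii) follows immediately.

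There is no real obstacle: this is a bookkeeping observation that lets the higher-dimensional analysis reduce tensorially to the one-dimensional case already handled in \Cref{sec:analysis_1D}. The only minor subtlety worth noting is that the pinned $\overline{S}$-coordinates play no role because neither the integration variables nor the factors $g_i$ depend on them, so the product structure on $Q_S$ is all that is needed.
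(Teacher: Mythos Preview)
Your proposal is correct and is precisely the ``immediate from the definition'' argument the paper alludes to: differentiate the product, factor the resulting integral over the box via Fubini, and identify each one-dimensional factor with $\bfu^{g_i}_{\bfj_i}$ from \eqref{eq:u_f_vec}. There is nothing to add.
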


This observation is immediate from the definition of $\bfu^{f_{\bfs},S}$ above.
As in the 1-d case, using dyadic decomposition, we may now express $\err_t^{\disc}$ as inner products
\begin{align} \label{eq:disc_err_high_dim}
\err^{\disc}_t(f_{\bfs})
& = \sum_{\emptyset \neq S \subseteq [d]} (-1)^{|S|-1} \int_{Q_S} \frac{\disc_t(\bfz)}{n_t} \cdot \partial_S f_{\bfs}(\bfz) d \bfz = \frac{1}{n_t} \sum_{\emptyset \neq S \subseteq [d]}  (-1)^{|S|-1} \cdot \big\langle \bfd^{t, \mathcal{C}, S} , \bfu^{f_{\bfs},S} \big\rangle \nonumber \\
& = \frac{1}{n_t} \sum_{\emptyset \neq S \subseteq [d]}  (-1)^{|S|-1} \cdot \big\langle \bfd^{t, \D, S} , (P_h^\top)^{\otimes S} \bfu^{f_{\bfs},S} \big\rangle ,
\end{align}
where $\bfd^{t, \D, S} \in \Z^{|(\D_{\leq h} \setminus C_1)^{\otimes S}|}$ is the restriction of $\bfd^{t,\D}$ (defined in \Cref{subsec:property_subg_sparse}) to dyadic boxes in $Q_S$ excluding those with any dimension in $S$ being $(0,1]$.  It follows that the vectors $\bfd^{t, \D, S}$ for different $\emptyset \neq S \subseteq [d]$ correspond to disjoint coordinates of $\bfd^{t,\D}$. Consequently, by \Cref{lem:subg_dyadic},  for any outcome of the random shift $\bfs$, $\err^{\disc}_t(f_{\bfs})$ is also subgaussian with parameter 
\begin{align}
\label{eq:disc_err_subg_him_dim}
O\Big(\frac{\log^{d+1} n}{n_t^2} \cdot \sum_{\emptyset \neq S \subseteq [d]} \big\|(P_h^\top)^{\otimes S} \bfu^{f_{\bfs},S} \big\|_2^2\Big) \leq O\Big(\frac{\log^{d+1} n}{n_t^2} \cdot \sum_{\emptyset \neq S \subseteq [d]} \big\|(\overline{P}_h^\top)^{\otimes S} \bfu^{f_{\bfs},S} \big\|_2^2\Big).
\end{align}

Therefore, we are left to derive an upper bound on  $\sum_{\emptyset \neq S \subseteq [d]} \big\|(\overline{P}_h^\top)^{\otimes S} \bfu^{f_{\bfs},S} \big\|_2^2$, for which we again look at the Fourier transformation of $f_{\bfs}$.

\smallskip
\noindent \textbf{Bounding the $\ell_2$ Norm via Fourier.} 
As in \Cref{subsec:Fourier_analysis}, we may write 
\[
f_{\bfs}(\bfz) = \sum_{\bfk \in \Z^d}\widehat{f_{\bfs}}(\bfk) \cdot e^{2 \pi i \langle \bfk , \bfz \rangle} =: \sum_{\bfk \in \Z^d}\widehat{f_{\bfs}}(\bfk) \cdot e_{\bfk}(\bfz) ,
\]
where $e_{\bfk}(\bfz) := \exp(2 \pi i \langle \bfk, \bfz\rangle)$. 
Again, by replacing $f_{\bfs}$ with $f_{\bfs} - \widehat{f}(\mathbf{0})$, we may assume wlog that $\widehat{f_{\bfs}}(\mathbf{0}) = 0$. Then we may write $\bfu^{f_{\bfs},S} = \sum_{\bfk \in \Z^d \setminus \mathbf{0}} \widehat{f_{\bfs}}(\bfk) \cdot \bfu^{e_{\bfk}, S}$. 
Note that the vector $\bfu^{e_{\bfk}, S}$ are the same for the set of $e_{\bfk}$ for which $\bfk_S$ coincide.

The following orthogonality property of the high dimensional Fourier coefficients is a simple corollary of its one dimensional version in \Cref{lem:orthogonality_fourier}.

\begin{corollary}[Fourier Orthogonality]
\label{cor:orthogonality_fourier_high_dim}
For any $\bfk \neq \bfk' \in \mathbb{Z}^d$, we have $\E_{\bfs}\big[ \widehat{f_{\bfs}}(\bfk)^* \widehat{f_{\bfs}}(\bfk')\big] = 0$. 
\end{corollary}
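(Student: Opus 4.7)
The plan is to mirror the one-dimensional argument in \Cref{lem:orthogonality_fourier} and reduce the corollary to orthogonality of the multivariate Fourier basis. Since the random shift $\bfs$ acts on $f$ only by translation on the torus, we expect each Fourier coefficient of $f_{\bfs}$ to differ from that of $f$ only by a pure phase depending on $\bfs$, after which the averaging over $\bfs$ will produce the standard exponential orthogonality integral.

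Concretely, first I would compute $\widehat{f_{\bfs}}(\bfk)$. Using $f_{\bfs}(\bfz) = f((\bfz + \bfs) \bmod [0,1)^d)$ and changing variables $\bfz \mapsto \bfz - \bfs$ (valid by the $1$-periodicity of $f$ in each coordinate), I get
\begin{align*}
\widehat{f_{\bfs}}(\bfk) = \int_{[0,1]^d} f(\bfz+\bfs)\, e_{\bfk}(-\bfz)\, d\bfz = \int_{[0,1]^d} f(\bfz)\, e_{\bfk}(-\bfz+\bfs)\, d\bfz = e_{\bfk}(\bfs)\cdot \widehat{f}(\bfk),
\end{align*}
exactly as in the $1$-d proof. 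Hence $|\widehat{f_{\bfs}}(\bfk)| = |\widehat{f}(\bfk)|$, and the shift only rotates the phase.

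Next I would substitute this into the desired expectation. Since $e_{\bfk}(\bfs)^* = e_{-\bfk}(\bfs)$ and $e_{\bfk}(\bfs)\cdot e_{\bfk'}(\bfs) = e_{\bfk+\bfk'}(\bfs)$, we obtain
\begin{align*}
\E_{\bfs}\bigl[\widehat{f_{\bfs}}(\bfk)^* \widehat{f_{\bfs}}(\bfk')\bigr] = \widehat{f}(\bfk)^* \widehat{f}(\bfk') \int_{[0,1)^d} e_{\bfk'-\bfk}(\bfs)\, d\bfs.
\end{align*}
By Fubini, the $d$-dimensional integral factors as $\prod_{j \in [d]} \int_0^1 \exp(2\pi i (\bfk'_j-\bfk_j) s_j)\, ds_j$. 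Since $\bfk \neq \bfk'$, there exists some coordinate $j$ with $\bfk'_j - \bfk_j \neq 0$, and for that coordinate the corresponding $1$-d integral vanishes. Hence the product is zero, proving the corollary.

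There is no real obstacle here; this is a direct, essentially notational, extension of the $1$-d lemma, with the only mild point being the justification of the change of variables under the modular shift, which is immediate from the $1$-periodicity assumption on $f$ made in \Cref{subsec:Fourier_analysis}.
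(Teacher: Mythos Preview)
Your proposal is correct and takes essentially the same approach as the paper: the paper does not spell out a proof for this corollary, merely noting it is ``a simple corollary of its one dimensional version in \Cref{lem:orthogonality_fourier}'', and your argument is precisely the natural $d$-dimensional extension of that lemma's proof (compute $\widehat{f_{\bfs}}(\bfk)=e_{\bfk}(\bfs)\widehat{f}(\bfk)$ and use orthogonality of the Fourier basis).
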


Now we can use \Cref{cor:orthogonality_fourier_high_dim,lem:ell_2_bound_1D} to compute the $\ell_2$ norm as follows.
\begin{align*}
\E_{\bfs} \Big[\Big\| (\overline{P}^\top_h)^{\otimes S} \cdot \bfu^{f_{\bfs},S} \Big\|_2^2\Big] 
& = \E_{\bfs} \Big[\Big| (\overline{P}^\top_h)^{\otimes S} \cdot \Big( \sum_{\bfk\neq \mathbf{0}} \widehat{f_{\bfs}}(\bfk) \bfu^{e_{\bfk},S} \Big) \Big|^2 \Big] \\
& = \sum_{\mathbf{k} \neq \mathbf{0}} \big | \widehat{f}(\bfk) \big|^2 \cdot \prod_{j \in S} |\overline{P}^\top_h \bfu^{e_{\bfk_j}} |^2 
\leq  \sum_{\bfk\neq \mathbf{0}} \big|\widehat{f}(\bfk) \big|^2  \cdot \prod_{j \in S} |\bfk_j|^2 ,
\end{align*}
where the expectation over $\bfs$ is removed in the second line since the modulus of the Fourier coefficients do not depend on the random shift. 

Now summing over all $\emptyset \neq S \subseteq [d]$ gives the following $\ell_2$ bound
(recall \eqref{eq:sigma_SO_formula_intro}).
\begin{align} \label{eq:ell_2_bound_high_dim}
\E_{\bfs} \Big[\sum_{\emptyset \neq S \subseteq [d]} \Big\| (\overline{P}^\top_h)^{\otimes S} \cdot \bfu^{f_{\bfs},S} \Big\|_2^2 \Big]
& \leq \sum_{\bfk\neq \mathbf{0}} \big|\widehat{f}(\bfk) \big|^2 \Big(\sum_{\emptyset \neq S \subseteq [d]}   \prod_{j \in S} |\bfk_j|^2 \Big) = O_d\big(\sigma_{\SO}(f)^2\big) .
\end{align}
Using the $\ell_2$-norm bound \eqref{eq:ell_2_bound_high_dim} in \eqref{eq:disc_err_high_dim} and \eqref{eq:disc_err_subg_him_dim}, we can bound $\E[(\err^{\disc}(f_{\bfs}))^2]$ as 
\begin{align} \label{eq:disc_var_high_dim}
\E\big[(\err^{\disc}(f_{\bfs}))^2\big] 
& = \E \Big[ \big(\sum_{t=0}^{T-1} \err^{\disc}_t(f_{\bfs})\big)^2 \Big] 
\leq O_d\Big(\frac{\log^{d+1} n}{n^2} \cdot \sigma_{\SO}(f)^2 \Big).
\end{align}
Now we have everything we need to prove \Cref{thm:main_high_dim}. 
\begin{proof}[Proof of \Cref{thm:main_high_dim}]
Plugging \eqref{eq:disc_var_high_dim} into \eqref{eq:error_decomp_high_dim}, we obtain that
\begin{align*}
\E[(\err(A_T,f_{\bfs}))^2] & \leq \E[(\err(A_0,f_{\bfs}))^2] + \E[\big(\err^{\disc}(f_{\bfs})\big)^2]  \\
& \leq  \frac{\sigma(f)}{n^2}+ O_d\Big(\frac{\log^{d+1} n}{n^2} \cdot \sigma_{\SO}(f)^2 \Big)  \leq \widetilde{O}_d\Big(\frac{\sigma_{\SO}(f)^2}{n^2}\Big) .
\end{align*}
This completes the proof of the theorem. 
\end{proof}

\subsection{General Definition for $\sigma_{\SO}$} 
\label{subsec:general_def_sigma_SO}
We have thus far assumed that $f$ has a Fourier series expansion and have defined $\sigma_{\SO}$ in terms of the Fourier coefficients of $f$, but this is not really necessary. 
Inspecting the proof of \Cref{thm:main_high_dim} more closely reveals an intrinsic definition for $\sigma_{\SO}$ which does not rely on the Fourier coefficients of $f$. This allows our results to be applicable to more general integrable functions. 

In particular, the general form for $\sigma_{\SO}$ is given by
\begin{align} \label{eq:sigma_SO_general}
\sigma_{\SO}(f)^2 := \lim_{h \rightarrow \infty} \E_{\bfs} \Big[\sum_{\emptyset \neq S \subseteq [d]} \Big\| (\overline{P}^\top_h)^{\otimes S} \cdot \bfu^{f_{\bfs},S} \Big\|_2^2 \Big] ,
\end{align}
where the expectation is taken over the random shift $\bfs$ of the function $f$. 
To illustrate this definition, let us consider 1-d, where \eqref{eq:sigma_SO_general} simplifies to
\[
\sigma_{\SO}(f)^2 := \lim_{h \rightarrow \infty} \E_s\Big[\big\|\overline{P}_h^\top \bfu^{f_s} \big\|_2^2\Big] .
\]
Note that the vector of derivatives $\bfu^{f_s}$ depends both on $f_s$ and the granularity $2^{-h}$. 
Equivalently, $\overline{P}_h^\top \bfu^{f_s}$ is the vector whose coordinates correspond to the integration of $f_s$ over all dyadic boxes with edge length at least $2^{-h}$. 
Consequently, the quantity $\big\|\overline{P}_h^\top \bfu^f \big\|_2$ increases monotonically with $h$ and thus $\sigma_{\SO}(f)$ in \eqref{eq:sigma_SO_general} is well-defined (with value $\infty$ if the sequence diverges). 
When $f$ has a Fourier series, following  \eqref{eq:ell_2_bound_high_dim}, the definition of $\sigma_{\SO}(f)$ in \eqref{eq:sigma_SO_general} is at most the Fourier definition in  \eqref{eq:sigma_SO_formula_intro} (and can actually be much smaller). 
So in a sense, the definition of $\sigma_{\SO}$ in \eqref{eq:sigma_SO_general} is the fundamental one, and we chose to work with the Fourier definition \eqref{eq:sigma_SO_formula_intro} only to have an explicit formula. 

\medskip 
\noindent {\bf Comparison with $V_{\HK}(f)$.} The general definition of $\sigma_{\SO}(f)$ in \eqref{eq:sigma_SO_general}  compares favorably against the $\ell_1$-version of Hardy-Krause variation $V_{\HK}(f)$ and can be significantly better. 
Let us demonstrate this comparison in $1$-d.
Here, recall that $V_{\HK}(f)$ is given by  $\int_0^1 |f'(x)| d x$, which equivalently is
\[
\sup_{0 = p_0 \leq \cdots \leq p_m = 1} \sum_{j=1}^m |f(p_j) - f(p_{j-1})|,
\]
where the supremum is over all sequences of points $0 = p_0 \leq \cdots \leq p_m = 1$.

Now consider every level $j \geq 0$ of dyadic intervals $I_1, \cdots, I_{2^j}$. The coordinates of $\overline{P}_h^\top \bfu^f$ for level-$j$ dyadic intervals are $f(r_{I_a}) - f(\ell_{I_a})$, where $a = 1, \cdots, 2^j$, and these correspond to the partition of the interval $[0,1]$ by level-$j$ dyadic intervals. 
Note that the $\RHS$ of \eqref{eq:sigma_SO_general} is taking the $\ell_2$-norm of these coordinates, while $V_{\HK}(f)$ is taking the $\ell_1$-norm of the values $|f(p_j) - f(p_{j-1})|$ in the partition. Consequently, there is a geometric decay with $j$, and we immediately obtain $\sigma_{\SO}(f) \leq O(V_{\HK}(f))$. 

Not only that, but as before, since the high-frequency parts of $f$ admit cancellations along long dyadic intervals, the variation $\sigma_{\SO}(f)$ is substantially smaller than $V_{\HK}(f)$ when $f$ has a large high-frequency component. So the resulting bound then in \Cref{thm:main_high_dim} is significantly superior to the $\ell_1/\ell_\infty$ version of the Koksma-Hlawka inequality. 
This is precisely the same phenomenon that has been exploited in the proof of \Cref{thm:main_high_dim} -- just that we no longer have clean and explicit formulas for the quantities involved!

\section{Concluding Remarks}
\label{sec:conclude}
In this work, we presented a simple yet powerful randomized algorithm, $\SubgSparsification$, and showed that it produces QMC sets that go beyond the Hardy-Krause variation and substantially improve over the classical Koksma-Hlawka inequalities. As we discussed, the algorithm also possesses many other desirable features, e.g., it achieves amortized $\widetilde{O}_d(1)$ time per point; it uses random samples and is thus as flexibility as MC; it automatically achieves the best of MC and QMC (and the improvement over Hardy-Krause and Koksma-Hlawka) in the optimal way. 

\smallskip
\noindent \textbf{An Open Problem.} Perhaps one slight downside of our algorithm is that it starts with $n^2$ samples, and then it splits them into $n$ QMC sets of size $n$ each in amortized $\widetilde{O}_d(1)$ time per sample. It would be nice to obtain a more direct algorithm that achieves our improvement over Hardy-Krause variation and Koksma-Hlawka inequalities, but 
only needs $\widetilde{O}_d(n)$ random samples to generate a $n$-point QMC set, and runs in $\widetilde{O}_d(n)$ time.

For constructing low-discrepancy point sets, this was done in \cite{DFG+19}, who gave an elegant algorithm that uses $2n$ random samples to produce an $n$-point set with $\widetilde{O}_d(1)$ discrepancy in $\widetilde{O}_d(n)$ time.\footnote{In contrast, using the transference principle to achieve this would also require starting with $n^2$ samples.} 
However, obtaining such an algorithm in our context seems significantly more challenging. The main bottleneck for approaches based on the transference principle is that the error term $\err(A_0,f)$ due to the initial samples in \eqref{eq:error_decomp_high_dim} becomes $\Omega(\sigma(f)/\sqrt{n})$ when one uses only $O(n)$ samples, which is as large as the MC error. We leave the question of obtaining such an algorithm that uses fewer samples as an interesting open problem.

\newpage

\appendix

\section{Further Related Work}
\label{subsec:related_work}

\smallskip \noindent \textbf{Low-Discrepancy Constructions.} Low-discrepancy constructions are central objects of study in QMC and other related fields.  
For intuition, consider $d=1$.
For $n$ random points, the   $D^*(X) = \Omega(\sqrt{n})$ typically (already the sub-interval $[0,1/2]$ will typically have $n/2 \pm \Omega(\sqrt{n})$ points). 
However here, simply choosing equally spaced points at distance $1/n$,  trivially satisfies that $D^*(X)=1$.

Perhaps surprisingly, analogous constructions fail badly for $d=2$, e.g., 
uniformly spaced gridpoints in the square have star discrepancy $\Omega(\sqrt{n})$ (see \Cref{fig:square}); essentially no better than random points.

\begin{figure}[hbtp!]
   \centering
\includegraphics[scale=0.5, angle=270]
{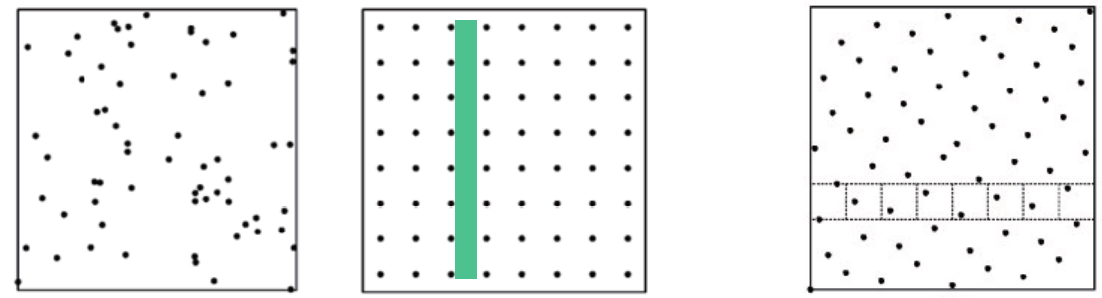}
     \caption{\small{Placements of 64 points in $[0,1]^2$ (from \cite{Mat09}). On the left are random points. In the middle, the grid points have large discrepancy as the green rectangle has area $1/\sqrt{n}$ but no points. On the right is the Van der Corput set, which already visually looks more ``uniform".}}
     \label{fig:square}
 \end{figure}

Remarkably, point sets with $O(\log n)$ discrepancy exist based on careful and beautiful explicit constructions! One such example is the van der Corput set (see \Cref{fig:square}), first discovered in 1935 \cite{vdC35a,vdC35b}.

Since then, there has been extensive work on designing such low-discrepancy sets in higher dimensions. Several ingenious and elegant constructions have been discovered, often involving deep  mathematical techniques.
Some well-known examples include Halton-Hammersley sets \cite{Ham60, Hal60}, Sobol' sequences \cite{Sob67}, Faure sequences \cite{Fau82}, digital nets \cite{Nie92}, lattice rules \cite{Kro59}, etc. 

The current best known bounds are a tight $\Theta(\log^{(d-1)/2} n)$ for the $\ell_2$-discrepancy for corners $D_2^*(X)$, and a worse upper bound of $O(\log^{d-1} n)$ for the $\ell_\infty$-discrepancy $D^*(X)$. It remains a ``great open problem'' \cite{BC87} to improve the bound on $D^*(X)$ for $d > 2$ (for $d=2$, a tight lower bound of $\Omega(\log n)$ is known). 
Some excellent references are \cite{Nie92, Gla04, Mat09,Owe13}.

\smallskip \noindent \textbf{Randomized QMC methods.} 
As discussed in \Cref{sec:intro}, approaches to use randomization with QMC to bypass the limitations of QMC methods have been explored. 
Perhaps the simplest way is to apply a (single) random shift to $X$, where each point $\bfx \in X$ is shifted by the {\em same} amount $\bfs$ to the point $\bfx+ \bfs$ (modulo $[0,1]^d$). This ensures that any single point $\bfx+ \bfs$ is uniformly random, and it does not change the discrepancy of $X$. However, the randomness here is very limited to achieve the guarantees of MC, e.g., there is no pairwise randomness.

Another simple randomization method is to perturb each points of a digital net inside its corresponding square lattice cell uniformly at random (see \Cref{fig:square} (right)). This method is known as {\em jittered sampling}. While jittered sampling performs well in 1-d, such randomization destroys the benefits of QMC quickly for $d\geq 2$. Already, in $2$-d the discrepancy degrades to about $n^{1/4}$, as a typical corner/rectangle will intersect $n^{1/2}$ cells. For general $d$, the discrepancy of jittered sampling becomes $n^{1/2-1/2d}$.

More sophisticated randomized QMC methods have been proposed, where explicit QMC constructions, such as digital nets, are perturbed in much more delicate ways and one considers complicated explicit distributions over these perturbed sets. 
Some well-studied 
methods include the Cranley-Patterson rotation of lattice constructions \cite{CP76}, digital shifted nets \cite{LL02}, scrambled nets of Owen \cite{Owe95,Owe97a,Owe97b} and their variants \cite{Hic96,Mat98c}, etc.

Variants of scrambled nets \cite{Owe97a,Mat98c} can also achieve the best of both QMC and MC guarantees, similar to \Cref{thm:best-of-both-1}.
However, these methods suffer from the limitations of having to start with an explicit initial set of QMC points, and do not go beyond the best of both QMC and MC guarantees. 
One exception is the work by Owen \cite{Owe97b,Owe08}, though the focus there is very different from this paper. In particular, Owen showed a better convergence rate of $O(n^{-3/2})$ for scrambled nets, using a different variation measure (which can be arbitrarily larger than the Hardy-Krause variation) and additional smoothness assumptions.   
We refer interested readers to the excellent book \cite{Owe13} and the survey \cite{LL02} for more details on randomized QMC methods. 

\smallskip \noindent \textbf{Transference Principle and QMC constructions.}
The transference principle has been long known (and is often attributed to Vera Sos), see e.g.,~\cite{Mat09, AistleitnerBN17}. Consider the following combinatorial discrepancy problem called Tusnady's problem. Given a set of $n$ (arbitrary) points in $[0,1]^d$, color them $\pm 1$ to minimize the discrepancy of axis-parallel boxes.
The transference principle says that an $\alpha(n)$ bound for Tusnady's problem,  implies a placement of $n$ points starting with $n^2$ random points, with continuous discrepancy $O(\alpha(n))$. The reduction is algorithmic (as used in this paper). 

Until recently though, this connection was mostly of theoretical interest as most of the better bounds for combinatorial discrepancy problems were based on non-constructive proof techniques. In recent years there has been a lot of progress in this direction,  resulting in several algorithmic approaches such as  \cite{Ban10, lovettmeka, Ro14, EldanS18, bdgl18, ALS21, PesentiV23, JambulapatiRT24} that match the non-constructive bounds.  Several of these are also very efficient.
Remarkably, the best known non-constructive bound for Tusnady's problem (and hence for star discrepancy using these methods) is $O(\log^{d-1/2}n)$ \cite{Nikolov17}, which almost matches the best known bound of $O(\log^{d-1} n)$ based on explicit constructions. The best algorithmic bound for Tusnady's problem is $O(\log^d n)$ \cite{BansalG17, ALS21}.

Finally, we mention the beautiful algorithm due to Dwivedi et al.~\cite{DFG+19}, that finds a $n$-point set with discrepancy $\widetilde{O}_d(1)$ in $[0,1]^d$, using only $O(n)$ random samples. See also  \Cref{sec:conclude}.

\section{Derivation of the Hlawka-Zaremba Formula in 1-D}
\label{sec:appendix_QMC}

In the 1-d case, the Hlawka-Zaremba formula (\Cref{lem:Hlawka_Zaremba_formula}) can be derived by a simple application of integration by parts. Let us denote the points in $A$ as $0 \leq x_1 \leq \cdots \leq x_n \leq 1$ and set $x_{n+1}:=1$.  Integration by parts gives 
\[\int_0^1 f(x) dx = \big[xf(x) \big]_0^1- \int_0^1 xf'(x) dx  = f(1) - \int_0^1 xf'(x) dx.\]
Also, $\sum_{i=1}^n i (f(x_i) - f(x_{i+1})) = \sum_{i=1}^n f(x_i) - nf(x_{n+1})$.

Thus the  integration error of $A$ w.r.t. $f$ can be written as 
\begin{align} \label{eq:Koksma_Hlawka_techniques}
\err(A,f) 
& := \frac{1}{n}\sum_{i=1}^n f(x_i) - \int_0^1 f(x) d x \nonumber \\
& = \Big(f(x_{n+1}) - \frac{1}{n} \sum_{i=1}^n i (f(x_{i+1}) - f(x_i)) \Big) - \Big( f(x_{n+1}) - \int_0^1 x \cdot f'(x) d x \Big) \nonumber \\
& = \int_0^1 x \cdot f'(x) d x - \sum_{i=1}^n \int_{x_i}^{x_{i+1}} \frac{i}{n} \cdot  f'(x) d x 
= \int_0^1 h(x) \cdot f'(x) d x ,
\end{align}
where the last line splits the integral $\int_0^1$ as the sum of integrals  $\sum_{i=1}^n \int_{x_i}^{x_{i+1}}$, and finally that
$ h(x):= x - |A \cap [0,x]|/n$
is the (scaled) continuous discrepancy of the prefix interval $[0,x]$ w.r.t. the point set $A$. 

In fact, the higher dimensional case of \Cref{lem:Hlawka_Zaremba_formula} can be derived similarly using integration by parts, but the formula gets much more complicated (due to boundary conditions).

\section{Failure of Subgaussianity for Prefix Intervals and Corners}
\label{sec:example_subg_fail}

We start with a simple example to demonstrate that subgaussianity for the combinatorial discrepancy of prefix intervals $\bfd^{\mathcal{C}}$ may fail even when it holds for the dyadic intervals $\bfd^{\D}$. In particular, we describe a specific distribution where $\bfd^{\mathcal{C}}$ may be $\Omega(n)$-subgaussian, even though $\bfd^{\D}$ is $\widetilde{O}(1)$-subgaussian. Moreover, in this example, $\|\bfd^{\mathcal{C}}\|_\infty = \widetilde{O}(1)$. This illustrates that the subgaussian property is much more delicate and does not simply follow from having low discrepancy. 

Consider the dyadic intervals $I_1 := [0,1/2n]$, $I_2 := [1/2n,1/n]$, $\cdots$, $I_{2n} := [(2n-1)/2n, 1]$ of length $1/2n$. Suppose for convenience that they  contain exactly one point from $A_{T-1}$. Consider the distribution over colorings where
the colors for consecutive points (except those in $I_1$ and $I_{2n}$) are paired, i.e.,
$\disc(I_2) = - \disc(I_3)$, $\disc(I_4) = - \disc(I_5)$, $\cdots$, $\disc(I_{2n-2}) = \disc(I_{2n-1})$,  and that the random variables, $\disc(I_2), \disc(I_4), \cdots, \disc(I_{2n-2})$  and for the other two intervals $\disc(I_1), \disc(I_{2n})$ are i.i.d.~$\pm 1$. 

It is easily verified that the vector $\bfd^{\D}$ consisting of discrepancy of all dyadic intervals  is $\widetilde{O}(1)$-subgaussian (this is because $\bfd^{\D}$ restricted to intervals of length $1/2n$ is clearly $O(1)$-subgaussian, and for dyadic intervals of length more than $1/2n$, any interval that does not contain $I_1$ or $I_{2n}$ has discrepancy $0$, and only $O(\log n)$ intervals contain $I_1$ or $I_{2n}$).

Nonetheless, if we consider the prefix intervals, observe that we have $\disc(C_{(2j - 1)/2n}) = \disc(I_1)$ for all $j = 1, \cdots, n$. That is, these $n$ coordinates of $\bfd^{\mathcal{C}}$ are all either $1$ or $-1$ simultaneously and hence are completely {\em correlated}. Consequently, the subgaussian constant of  $\mathbf{d}^{\mathcal{C}}$ is  $\Omega(n)$.

As is pointed out in \Cref{foonote:subg_fail}, however, in the 1-d case, one can design a different distribution over colorings where $\bfd^{\mathcal{C}}$ is actually $O(1)$-subgaussian.
In particular, one can pair the points from left to right, assign i.i.d. uniformly random colors $\bfx_{2i+1} \in \{\pm 1\}$ to the $(2i+1)$th point for each $i = 0, \cdots, n/2-1$, and then set $\bfx_{2i+2} = - \bfx_{2i+1}$. It is immediate to verify that the resulting discrepancy vector of prefix intervals $\bfd^{\mathcal{C}}$ is indeed $O(1)$-subgaussian. 

\smallskip
\noindent \textbf{A stronger $\Omega(n)$ Subgaussianity Lower Bound for $\bfd^{\mathcal{C}}$ when $d \geq 2$.}
Given the above strategy for the $1$-d case, one may  naturally wonder if $\widetilde{O}_d(1)$-subgaussianity for $\bfd^{\mathcal{C}}$ might be achievable more generally for $d \geq 2$, by designing some specific distribution. 

Somewhat surprisingly however, we show  that this is impossible in a very strong sense. More specifically, we first construct an example in dimension $d=2$ where the discrepancy of corners $\bfd^{\mathcal{C}}$ must be $\Omega(n)$-subgaussian for {\em any} distribution over colorings. Using a similar argument, we can in fact show the very strong lower bound of $\Omega(n)$ on the subgaussian constant of  $\bfd^{\mathcal{C}}$ even for a uniformly random placement of $n$ points in $(0,1]^2$. Note that these lower bounds also hold for $d \geq 2$ by restricting to corners with edges $(0,1]$ in the remaining $d-2$ dimensions. 

Consider points $\bfz^1, \cdots, \bfz^n \in (0,1]^2$, where $\bfz^1= (1, 1/n)$ and $\bfz^i = (\bfz^i_1, i/n)$ with $\bfz^i_1 \in (0, 1 - 1/n]$ for all $i = 2, \cdots, n$.
Let $\bfx \in \{\pm 1\}^n$ be an arbitrary (random) coloring of these points. 
For each $i \in [n]$, consider the pair of corners $C_{i,0} := (0,1 - 1/n] \times (0, i/n]$ and $C_{i,1} := (0,1] \times (0, i/n]$. 
Note that $C_{i,0}$ contains points $\bfz^2, \cdots, \bfz^i$, while $C_{i,1}$ contains points $\bfz^1, \cdots, \bfz^i$. Consequently, $\disc(C_{i,1}) - \disc(C_{i,0}) = \bfx_1$ for all $i \in [n]$, which implies that
\[
\Big|\sum_{i \in [n]} \big(\disc(C_{i,1}) - \disc(C_{i,0}) \big)\Big| = n . 
\]
This shows that the $2n$ coordinates of $\bfd^{\mathcal{C}}$ corresponding to $C_{i,0}$ and $C_{i,1}$ for all $i \in [n]$ must be at least $\Omega(n)$-subgaussian. 

Note that this $\Omega(n)$ lower bound on the subgaussianity of $\bfd^{\mathcal{C}}$ continues to hold even when the points $\bfz^1, \cdots, \bfz^n$ are chosen i.i.d. uniformly at random from $(0,1]^2$. In particular, let $\bfz^i$ be the point with the largest $\bfz^i_1$, then with high probability, there are $\Omega(n)$ points $\bfz^j$ with $\bfz^j_2 > \bfz^i_2$. Applying the same argument as above to this set of points and $\bfz^i$ prove an $\Omega(n)$ lower bound on the subgaussian constant of $\bfd^{\mathcal{C}}$ for i.i.d. uniformly random points. 

\section{Missing Details in \Cref{subsec:dyadic_decomp}}
\label{sec:missing_proofs}

In this section, we give more details on the dyadic decomposition matrix $P_h$, the structured decomposition matrix $\overline{P}_h$, and prove \Cref{lem:aug_trans_mat} (restated below). 

\AugDecompMatrix*

To prove \Cref{lem:aug_trans_mat}, let us first fix some terminology. Among the $2^j$ dyadic intervals of non-zero level $j > 0$, we call the ones of the form $(2\ell/2^j, (2\ell+1)/2^j]$ {\em odd}, and the other ones {\em even}. 
We emphasize that our restriction of the notion of odd and even intervals to non-zero levels is for convenience. 
We also call the interval $(2\ell/2^j, (2\ell+1)/2^j]$ the {\em sibling} of $(2\ell + 1/2^j, (2\ell+2)/2^j]$, and vice versa. 
Odd and even intervals correspond to odd and even numbers if we count the intervals in $\D_j$ from left to right. 

We use the following observation about the structures of columns of $P_h$. 

\begin{observation}[Dyadic Decomposition Matrix] \label{lem:dyadic_decomp_1D}
For any $h \in \Z_{>0}$, the dyadic decomposition matrix $P_h \in \{0,1\}^{|\mathcal{C}_h| \times (|\D_{\leq h}|-1)}$ defined above satisfies the following properties.
\begin{enumerate}
    \item The number of ones in every non-zero column of $P_h$ is a power of $2$. The zero columns of $P_h$ correspond to all the even dyadic intervals in $\D_{\leq h}$.  
    
    \item For every integer $0 \leq \ell \leq h-1$, there are exactly $2^\ell$ columns of $P_h$ that contains exactly $2^{h-\ell-1}$ ones, and these are exactly the odd dyadic intervals in $\D_{\ell+1}$.  
    
    \item  For any column of $P_h$ with $2^{h-\ell-1}$ ones, the ones are at the consecutive prefix intervals $C_{  \frac{2j+1}{2^{\ell+1}}}, \cdots, C_{  \frac{2j+2}{2^{\ell+1}} - \frac{1}{2^h}}$, where each $j \in \{0, \cdots, 2^\ell-1\}$ corresponds to one such column. 
\end{enumerate}
\end{observation}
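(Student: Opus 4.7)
The plan is to analyze the (unique) greedy dyadic decomposition of each prefix interval $C_{k/2^h} = (0, k/2^h]$ via the binary representation of $k \in \{1, \ldots, 2^h - 1\}$. Writing $k/2^h = 0.b_{h-1}b_{h-2}\cdots b_0$ in binary and listing its set-bit positions as $j_1 < j_2 < \cdots < j_s$ (so that $b_{h-j_t} = 1$), the dyadic decomposition of $C_{k/2^h}$ is
\[
C_{k/2^h} \;=\; \bigsqcup_{t=1}^{s} \Big(\sum_{r < t} 2^{-j_r},\; \sum_{r \leq t} 2^{-j_r}\Big],
\]
where the $t$-th interval sits at level $j_t$ and has left endpoint $m_t/2^{j_t}$ with $m_t := \sum_{r < t} 2^{j_t - j_r}$.

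The crucial parity observation is that $j_t - j_r \geq 1$ for all $r < t$, so every $m_t$ is an even integer. Consequently, every dyadic interval appearing in the decomposition of any prefix has the form $(m/2^\ell, (m+1)/2^\ell]$ with $m$ even, i.e., is an odd dyadic interval. This immediately establishes the second claim of Property 1: the zero columns of $P_h$ are precisely those indexed by the even dyadic intervals of $\D_{\leq h}$ together with the excluded $(0,1]$.

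For Properties 2 and 3, I fix an odd dyadic interval $I = \big((2j)/2^{\ell+1},\, (2j+1)/2^{\ell+1}\big]$ with $0 \leq \ell \leq h-1$ and $0 \leq j \leq 2^\ell - 1$. By the decomposition formula, $I$ appears in the decomposition of $C_{k/2^h}$ if and only if the top $\ell+1$ bits of $k$ encode the integer $2j+1$, equivalently $\lfloor k / 2^{h-\ell-1} \rfloor = 2j+1$: these bits force the greedy to arrive at the left endpoint $2j/2^{\ell+1}$ (via the earlier intervals summing to that point) and then to select $I$ at this position (enforced by $b_{h-\ell-1} = 1$). The remaining $h-\ell-1$ low-order bits of $k$ are unconstrained, so the valid values of $k$ form the contiguous block $\{(2j+1)\cdot 2^{h-\ell-1}, \ldots, (2j+2)\cdot 2^{h-\ell-1} - 1\}$ of size $2^{h-\ell-1}$, which corresponds exactly to the prefix intervals $C_{(2j+1)/2^{\ell+1}}, \ldots, C_{(2j+2)/2^{\ell+1} - 1/2^h}$ claimed in Property 3. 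Since $2^{h-\ell-1}$ is a power of $2$, this simultaneously completes the first claim of Property 1, and tallying the $2^\ell$ odd dyadic intervals at each level $\ell+1$ yields Property 2. The only genuine step is the parity of $m_t$, which is immediate from $j_t > j_r$ for $r<t$; everything else is a bookkeeping translation between binary digits of $k$ and dyadic positions, so I do not expect any serious obstacle.
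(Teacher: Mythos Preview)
Your argument is correct. The binary-expansion parametrization of the greedy dyadic decomposition is the standard way to make this observation precise, and you execute it cleanly: the parity of $m_t = \sum_{r<t} 2^{j_t - j_r}$ is indeed the key point, and the biconditional $\lfloor k/2^{h-\ell-1}\rfloor = 2j+1$ correctly characterizes membership of the odd level-$(\ell+1)$ interval in the decomposition of $C_{k/2^h}$.

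As for comparison with the paper: the paper states this result as an \emph{Observation} and does not supply a proof; it is treated as immediate from the definition of the minimal dyadic decomposition. Your write-up is therefore strictly more detailed than what the paper provides, but it is exactly the computation one would carry out to justify the observation, and nothing in it deviates from the intended reading.

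One tiny cosmetic point: you restrict to $k \in \{1,\ldots,2^h-1\}$, whereas $\mathcal{C}_h$ in the paper also includes $C_0 = \emptyset$. This is harmless since the empty prefix contributes an all-zero row and does not affect any column, but you might mention it for completeness.
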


In Property 3 of \Cref{lem:dyadic_decomp_1D}, the set of endpoints $\{\frac{2j+1}{2^{\ell+1}}, \cdots, \frac{2j+2}{2^{\ell+1}} - \frac{1}{2^h}\}$ can be viewed as all grid points of $\Z/2^h$ that lie in the right-open dyadic interval $[\frac{2j+1}{2^{\ell+1}}, \frac{2j+2}{2^{\ell+1}}) \in \widetilde{\D}_{\ell+1}$. Over all $j \in \{0, \cdots, 2^j-1\}$, these are exactly the even dyadic intervals of $\widetilde{\D}_{\ell+1}$. Consequently, the ones in the columns of $P_h$ can be equivalently viewed as all the even right-open dyadic intervals of $\widetilde{\D}_{\leq h}$. 

To simplify our analysis, we define the structured decomposition matrix $\overline{P}_h \in \{0,1\}^{|\mathcal{C}_h| \times (|\D_{\leq h}|-1)}$ so that the ones in the columns of $\overline{P}_h$ correspond to {\em all} dyadic intervals of $\widetilde{\D}_{\leq h}$. 

\begin{definition}[Structured Decomposition Matrix] \label{defn:aug_trans_mat}
The matrix $\overline{P}_h \in \{0,1\}^{|\mathcal{C}_h| \times (|\D_{\leq h}|-1)}$ is defined as follows:
\begin{enumerate}
    \item For each odd dyadic interval $I \in \D_{\leq h}$, let the column $(\overline{P}_h)_{\cdot, I} = (P_h)_{\cdot, I}$. 
    \item For each even dyadic interval $I \in \D_{\leq h}$, let $\widetilde{I} \in \widetilde{\D}_{\leq h}$ be the right-open sibling of $I$ (so $\widetilde{I}$ is an odd dyadic interval). Then let $(\overline{P}_h)_{C_{z}, I} = 1$ if $z \in 2^{-h}\Z \cap \widetilde{I}$, and $0$ otherwise
\end{enumerate}
\end{definition}

From \Cref{defn:aug_trans_mat} and \Cref{lem:dyadic_decomp_1D}, one can see that the structured decomposition matrix $\overline{P}_h$ is obtained from $P_h$ by replacing its zero columns with non-zero columns with the positions of ones corresponding to even right-open dyadic intervals.
\Cref{lem:aug_trans_mat} now follows immediately.

\begin{proof}[Proof of \Cref{lem:aug_trans_mat}]
The first property follows from the definition of $\overline{P}_h$ and \Cref{lem:dyadic_decomp_1D}. 
By \Cref{lem:dyadic_decomp_1D} and \Cref{defn:aug_trans_mat}, every non-zero column of $P_h$ also appears in $\overline{P}_h$, from which the second property follows immediately. 
\end{proof}

In light of \Cref{lem:aug_trans_mat}, it is actually more appropriate to think of the columns of $\overline{P}_h$ as indexed by $\widetilde{\D}_{\leq h} \setminus \{C_1\}$ as opposed to $\D_{\leq h} \setminus \{C_1\}$, and this view is helpful for the analysis.

\section*{Acknowledgements}
Part of the work was done while the second-named author was a Postdoctoral Researcher in the Algorithms Group at Microsoft Research Redmond, and the first-named author was visiting the Simons Institute for the Theory of Computing at UC Berkeley.

\bibliographystyle{alpha}
\bibliography{bib.bib}

\end{document}